\newacronym{iop}{IOP}{Individual Optimization Problem}
\newacronym{bp-sop}{BP-SOP}{Budget Perturbed Social Optimization Problem}
\newacronym{sop-l}{SOP-L}{Social Optimization Problem with Linear Constraints}
\newacronym{iop'}{IOP'}{Individual Optimization Problem for Homogeneous Degree One Utilities}
\newacronym{bp-sop'}{BP-SOP'}{Budget Perturbed Social Optimization Problem for Homogeneous Degree One Utilities}
\newacronym{bp-sop-adm}{BP-SOP-ADM}{Budget Perturbed Social Optimization Problem for Alternating Direction Methods}
\newacronym{bp-sop-admm}{BP-SOP-ADMM}{Budget Perturbed Social Optimization Problem for Alternating Direction Method of Multipliers}
\DeclareMathOperator*{\argmax}{arg\,max}
    \newcolumntype{L}{>{\raggedright\arraybackslash}X}
\definecolor{mygreen}{RGB}{28,172,0} 
\definecolor{mylilas}{RGB}{170,55,241}
\DeclareFixedFont{\ttb}{T1}{txtt}{bx}{n}{12} 
\DeclareFixedFont{\ttm}{T1}{txtt}{m}{n}{12}  
\def\ps@pprintTitle{%
 \let\@oddhead\@empty
 \let\@evenhead\@empty
 \def\@oddfoot{}%
 \let\@evenfoot\@oddfoot}
\newtheorem{theorem}{Theorem}
\newtheorem{definition}{Definition}
\newtheorem{corollary}{Corollary}
\newtheorem{proposition}{Proposition}
\newtheorem{lemma}{Lemma}
\newtheorem{remark}{Remark}
\newenvironment{hproof}{%
  \proof}{\endproof}
\definecolor{deepblue}{rgb}{0,0,0.5}
\definecolor{deepred}{rgb}{0.6,0,0}
\definecolor{deepgreen}{rgb}{0,0.5,0}
\begin{document}

\begin{frontmatter}

\title{Fisher Markets with Linear Constraints: Equilibrium Properties and Efficient Distributed Algorithms}

\author[$^1$]{Devansh Jalota\corref{mycorrespondingauthor}} 
\cortext[mycorrespondingauthor]{Corresponding author}
\ead{djalota@stanford.edu}
\author{Marco Pavone, Yinyu Ye}
\address{Stanford University, Stanford CA 94305, USA}
\author{Qi Qi}
\address{Hong King University of Science and Technology, Hong Kong}

\begin{abstract}
The Fisher market is one of the most fundamental models for resource allocation problems in economic theory, wherein agents spend a budget of currency to buy goods that maximize their utilities, while producers sell capacity constrained goods in exchange for currency. However, the consideration of only two types of constraints, i.e., budgets of individual buyers and capacities of goods, makes Fisher markets less amenable for resource allocation settings when agents have additional linear constraints, e.g., knapsack constraints of buyers in e-commerce markets and proportionality constraints of schools in school choice matching markets. In this work, we introduce a modified Fisher market, where each agent may have additional linear constraints and show that this modification to classical Fisher markets fundamentally alters the properties of the market equilibrium as well as the optimal allocations. These properties of the modified Fisher market prompt us to introduce a budget perturbed social optimization problem (\acrshort{bp-sop}) and set prices based on the dual variables of \acrshort{bp-sop}'s capacity constraints. To compute the budget perturbations, we develop a fixed point iterative scheme and validate its convergence through numerical experiments.

Since this fixed point iterative scheme involves solving a centralized problem at each step, we propose a new class of distributed algorithms to compute equilibrium prices. In particular, we develop an Alternating Direction Method of Multipliers (ADMM) algorithm with strong convergence guarantees for Fisher markets with homogeneous linear constraints as well as for classical Fisher markets. In this algorithm, the prices are updated based on the tatonnement process, with a step size that is completely independent of the utilities of individual agents. Thus, our mechanism, both theoretically and computationally, overcomes a fundamental limitation of classical Fisher markets, which only consider capacity and budget constraints.
\end{abstract}

\begin{keyword}
Fisher Market\sep Market Equilibrium\sep Resource Allocation \sep Distributed Algorithms
\end{keyword}

\end{frontmatter}

\section{Introduction}

The study of market equilibria is central to economic theory and traces back to 1874 with the seminal work of Walras \cite{walras1954elements}. In his work, Walras investigated the problem of setting prices and determining an allocation of goods to agents such that the \textit{market clears}, i.e., all goods are sold and each agent receives its most favoured bundle of goods that is affordable. The existence of such a market equilibrium was established in the work of Arrow and Debreu \cite{arrow-debreu} under mild conditions on the utility functions of buyers. A special case of Walras' model was independently proposed by Fisher in which consumers spend their budget of money (or artificial currency) to buy goods that maximize their utilities, while producers sell capacity constrained goods in exchange for currency \cite{Fisher-seminal}.

Since Fisher introduced his framework, numerous methods to compute market equilibria have emerged. One such approach was developed by Eisenberg and Gale who formulated Fisher's original problem with linear utilities as a convex optimization problem, and later extended this convex optimization framework to homogeneous degree one utilities \cite{EisGale,Gale}. In this convex program a social optimization problem is solved wherein the market clearing prices are computed through the dual variables of the problem's capacity constraints \cite{codenotti_varadarajan_2007}. Another common approach to compute market equilibria has been through the tatonnement process, wherein the prices of the goods are adjusted upwards when the demand for that good exceeds supply and are adjusted downwards when supply exceeds demand. An advantage of tatonnement is its distributed nature as compared to its convex programming counterpart. In addition to these approaches, other methods for market equilibrium computation have also emerged including primal-dual approaches \cite{devanur-primal-dual}, and auction based approaches \cite{vazirani_2007,nesterov-fisher-gale}.

A unifying theme across the Fisher market literature has been in leveraging the convex problem structure of Fisher's framework as well as the nature of the budget and capacity constraints to develop algorithms to compute market equilibria. Such algorithms for Fisher markets have been leveraged in applications including online advertising \cite{VaziApps} and revenue optimization \cite{FisherApps}; however, the consideration of only two types of constraints - budgets of buyers and capacities of goods - in Fisher markets limits its use in resource allocation settings when agents have additional linear constraints, e.g., knapsack and proportionality constraints. For instance, in the retail and e-commerce markets buyers have a knapsack constraint on the amount of goods they wish to purchase and in the school choice matching market, schools may wish to maintain a specified proportion of students within each racial group \cite{stable-matching-proportionality}. These applications highlight the need to consider additional linear constraints within Fisher's original model.

In this work, we consider the generalization of Fisher markets to the setting of additional linear constraints of the form $A^{(i)} \mathbf{x}_i \leq \mathbf{b}_i$ for each agent $i$. Here $A^{(i)} \in \mathbb{R}^{l_i \times m}$ is the constraint matrix, $\mathbf{b}_i \in \mathbb{R}^{l_i}_{\geq 0}$ is a non-negative constraint vector, and $\mathbf{x}_i \in \mathbb{R}^m$ is a vector of allocations. Further, $m$ is the number of goods in the market and $l_i$ is the number of additional linear constraints associated with agent $i$. We study the equilibrium properties of Fisher markets with additional linear constraints and derive equilibrium prices using a new convex program analogous to the one proposed by Eisenberg and Gale \cite{EisGale,Gale}.

To implement this market equilibrium, we use Alternating Direction Methods (ADMs) to iteratively update prices while agents distributedly solve their individual optimization problems given the prices. Similar to tatonnement mechanisms for classical Fisher markets, we ensure that (i) price updates are asynchronous, and (ii) the price update procedure converges quickly to the equilibrium prices. However, the step sizes of the price updates of traditional distributed tatonnement mechanisms depend on the specific utility functions of agents. In this work, we update prices using a step size that that is independent of the utilities of agents by developing an algorithm wherein each agent solves a perturbed individual optimization problem at each step.

This work introduces an extension to Fisher markets, which opens the doors for its application in many problem settings including the allocation of capacity constrained public goods to people, which was explored in a preliminary version of this work \cite{jalota-wine}. This problem has become increasingly important during the COVID-19 pandemic, since social distancing constraints have limited the number of people that can share public spaces \cite{NYPubGoods}. In the public goods context, we can consider pricing "time of use" permits for a public space, such as a beach. Here, the public goods are the different blocks of time that people can use the beach with a knapsack linear constraint that people can use the beach at most once per day.


\subsection{Our Contribution} \label{contributions}

This work presents a generalization of Fisher markets to the setting when each agent may have additional linear constraints and introduces algorithms to compute equilibria of this modified Fisher market. In this setting, we introduce each agent's individual optimization problem (\acrshort{iop}) with linear utilities and observe that the addition of these linear constraints to classical Fisher markets fundamentally alters the properties of the market equilibrium. In particular, (i) the equilibrium price vector may not exist and the equilibrium price set may be non-convex, (ii) the equilibrium prices of certain goods may be negative, (iii) individual agents no longer purchase goods corresponding to the maximum \textit{bang-per-buck} ratios of the different products in the market, (iv) certain goods in the market may be Giffen goods, i.e., their demand rises with an increase in the price of the goods, and (v) the classical Fisher social optimization problem with additional linear constraints fails to establish market clearing conditions.

Based on these properties of the modified Fisher market we derive sufficient conditions for the existence of a market equilibrium for non-homogeneous linear constraints and develop a convex programming method to test for the existence of market equilibria in the special case of homogeneous linear constraints. To compute the market equilibrium, we introduce a budget perturbed social optimization problem (\acrshort{bp-sop}) and set prices based on the dual variables of \acrshort{bp-sop}s capacity constraints. We further provide an economic interpretation of these budget perturbation constants and establish that the optimal allocation corresponding to the solution of \acrshort{bp-sop} is Pareto efficient. Since the perturbation constants in \acrshort{bp-sop} depend on its dual variables, which we do not know \emph{a priori}, we then present a fixed point iterative scheme to determine these constants by directly solving the centralized problem \acrshort{bp-sop} at each step. 

However, solving this centralized problem at each iteration can be quite computationally intensive and potentially impractical as a market designer may not have knowledge of agent's utilities. To overcome these concerns we present a new class of distributed tatonnement algorithms based on Alternating Direction Methods (ADMs) to determine equilibrium prices. In particular, we first introduce the Alternating Minimization Algorithm (AMA) in which agents distributedly solve their individual optimization problems at each iteration. For the AMA, we obtain a $O(\frac{1}{k})$ convergence guarantee to the market equilibrium for Fisher markets with homogeneous linear constraints, where $k$ is the number of rounds of the distributed algorithm. Since this algorithm requires setting the step sizes of the price updates based on the specific utilities of individual agents, we introduce the Alternating Direction Method of Multipliers (ADMM) algorithm. In the ADMM algorithm, we perturb the utilities of agents and obtain a $O(\frac{1}{k})$ convergence guarantee to the market equilibrium for Fisher markets with homogeneous linear constraints. We further provide a direct extension of the convergence results of the ADMM algorithm for classical Fisher markets. This result is of independent interest, since this algorithm converges for all concave homogeneous degree one utility functions including linear utilities. Finally, while the convergence guarantees of the ADMM algorithm do not extend to Fisher markets with non-homogeneous linear constraints, we present numerical convergence results and defer the theoretical treatment of convergence in this setting as an open question for future research.

\subsection{Related Work} \label{Literature}

The Fisher market framework is comprised of consumers who spend their budget of money to purchase goods that maximize their utilities and producers who sell capacity constrained goods in exchange for currency. Each consumer solves an individual optimization problem to obtain her most favoured bundle of goods given the set prices, while the producers set the prices in the market to optimize a social objective that aggregates the utilities of all consumers. We first describe each agent's individual optimization problem in Fisher markets. In this framework, the decision variable for agent $i$ is the quantity of each good $j$ they wish to purchase and is represented by $x_{ij}$. We denote the allocation vector for agent $i$ as $\mathbf{x}_i \in \mathbb{R}^m$ when there are $m$ goods in the market, with each good $j$ having a capacity $\Bar{s}_j$. A key assumption of Fisher markets is that goods are divisible and so fractional allocations are possible. 
Finally, denoting $w_i$ as the budget of agent $i$, $u_i(\mathbf{x}_i)$ as the utility of agent $i$ as a concave function of their allocation and $\mathbf{p} \in \mathbb{R}^m$ as the vector of prices for the goods, individual decision making can be modelled through the following optimization problem:

\begin{maxi!}|s|[2]                   
    {\mathbf{x}_i \in \mathbb{R}^m}                               
    {u_i(\mathbf{x}_i)  \label{eq:Fisher1}}   
    {\label{eq:Eg001}}             
    {}                                
    \addConstraint{\mathbf{p}^T \mathbf{x}_i}{\leq w_i \label{eq:Fishercon1}}    
    \addConstraint{\mathbf{x}_i}{\geq \mathbf{0}  \label{eq:Fishercon3}}  
\end{maxi!}
where~\eqref{eq:Fishercon1} is a budget constraint and~\eqref{eq:Fishercon3} are non-negativity constraints.

The vector of prices $\mathbf{p} \in \mathbb{R}^m$ that each agent observes are computed through the solution of a social optimization problem. The choice of the social objective for $n$ agents in the market is such that under certain conditions on the utility function, there exists an equilibrium price vector at which all goods are sold and all budgets are completely used.

\begin{definition} \label{def:market-eq-price} (Equilibrium Price Vector) 
A vector $\mathbf{p} \in \mathbb{R}^{m}$ is an equilibrium price vector if $\sum_{i = 1}^{n} x_{ij}^*(\mathbf{p}) = \Bar{s}_j$ for each good $j \in [m]$, i.e., all the goods are sold, where each resource $j$ has a price $p_j$ and has a strict capacity constraint of $\Bar{s}_j \geq 0$, and $\sum_{j = 1}^{m} p_j x_{ij}^*(\mathbf{p}) = w_i$, for all $i$, i.e., budgets of all agents are completely used. Furthermore, $x^*(\mathbf{p})_i \in \mathbb{R}^{m}$ is an optimal solution of the individual optimization Problem~\eqref{eq:Fisher1}-\eqref{eq:Fishercon3} for all agents $i$.
\end{definition}

When the utilities are homogeneous degree one functions, e.g., Constant Elasticity of Substitution (CES) utilities, the equilibrium price vector is computed through the dual variables of the capacity Constraints~\eqref{eq:FisherSocOpt1} of the following social optimization problem:
\begin{maxi!}|s|[2]                   
    {\mathbf{x}_i \in \mathbb{R}^m, \forall i \in [n]}                               
    {u(\mathbf{x}_1, ..., \mathbf{x}_n) = \sum_{i = 1}^{n} w_i \log( u_i(\mathbf{x}_i))  \label{eq:FisherSocOpt}}   
    {\label{eq:FisherExample1}}             
    {}                                
    \addConstraint{\sum_{i = 1}^{n} x_{ij}}{ = \Bar{s}_j, \forall j \in [m] \label{eq:FisherSocOpt1}}    
    \addConstraint{x_{ij}}{\geq 0, \forall i, j \label{eq:FisherSocOpt3}}  
\end{maxi!}
where there are $n$ agents, $m$ goods, and the objective function of the social planner is to maximize the budget weighted geometric mean of the buyer's utilities. We denote $[a] = \small\{1, 2, ..., a \small\}$.

At the equilibrium prices, the first order necessary and sufficient KKT conditions of the individual and social problems are equivalent. That is, under the prices set through the solution of the social optimization problem each agent receives their most favourable bundle of goods \cite{DevnaurPrimalDual}.

Fisher markets have been studied extensively in the computer science and algorithmic game theory communities with a recent interest in considering additional constraints to Fisher's original framework. For instance, Bei et al. \cite{bei-fisher} impose limits on sellers' earnings and question the assumption that utilities of buyers strictly increase in the amount of good allocated. A different generalization is considered by Vazirani \cite{VaziApps}, Devanur \cite{devnaur-spending} and Birnbaum et al. \cite{birnbaum-spending}, wherein utilities of buyers depend on prices of goods through spending constraints. Yet another generalization has been considered by Devanur et al. \cite{Devnaur-generalizations} in which goods can be left unsold as sellers declare an upper bound on their earnings and budgets can be left unused as buyers declare an upper bound on their utilities. Along similar lines, Chen et al. \cite{FisherPricing} study equilibrium properties when agents keep unused budget for future use. These generalizations are primarily associated with spending constraints of buyers and earning constraints of sellers; however, to the best of our knowledge there has been no generalization of Fisher markets to additional linear constraints, which are prevalent in a range of applications.

Even though such additional linear constraints have not been studied in the Fisher market literature, there have been other assignment mechanisms and market equilibrium characterizations that take into account constraints beyond budgets of buyers and capacities of goods. To accommodate distributional constraints in stable assignment problems, Fragiadakis and Troyan \cite{hard-distr-constraints} designed dynamic quota mechanisms with fairness and incentive guarantees and Ashlagi et al. \cite{ashlagi-assignment-distributional} generalized the serial dictatorship and probabilistic serial mechanisms. 
Another notable work is that on the Combinatorial Assignment problem by Budish \cite{Budish} wherein a market based mechanism is used to assign students to courses while respecting student's schedule constraints. In Budish's framework, courses have strict capacity constraints and students are endowed with budgets and must submit their preferences to a centralized mechanism that provides approximately efficient allocations. An even more general class of constraints is considered by Akbarpour and Nikzad \cite{akbarpour-restud}, who design a randomized mechanism to implement allocations when some of these constraints are ``soft'', i.e., they are treated as goals rather than as hard constraints. We study the problem of allocating goods under additional linear constraints from a different perspective by setting equilibrium prices through the maximization of a societal objective whilst also maximizing individual utilities given those prices.


A large body of the Fisher market literature has been focused on the computation of market equilibria, in particular, through tatonnement based mechanisms. In such algorithms, prices are updated using gradient descent based on the discrepancy between supply and demand until convergence to the equilibrium prices \cite{Codenotti2005MarketEV}. A major focus of the market literature has been on finding the right step sizes to update the prices to guarantee fast convergence to the equilibrium. For instance, \cite{CHEUNG2019} establishes the convergence of a discrete version of tatonnement for complementary CES utilities and \cite{saberi-price-update} computes an approximate equilibrium using tatonnement in a number of iterations that is linear in the number of goods in the market. However, the step size required to ensure fast convergence for Leontief utilities in \cite{CHEUNG2019} and for homogeneous degree $\alpha \in [0, 1]$ utilities in \cite{saberi-price-update} depends on the knowledge of agent's utilities. This inherent drawback of deriving a tatonnement algorithm with a step size that does not depend on the knowledge of agent's utilities was overcome by Cole and Fleischer \cite{Cole2008FastconvergingTA}. However, the convergence rate of their algorithm depends on the type of utility function of agents and their convergence results do not extend to agents with linear utility functions. In this work, we develop algorithms that avoid the need to choose an appropriate step size and provide a convergence guarantee for all homogeneous degree one utility functions, including linear utilities.

Distributed algorithms have become an increasingly important feature of modern convex optimization and have enabled large computations in a data rich environment. An important class of parallel optimization algorithms are the Alternating Direction Methods (ADMs), wherein small local sub-problems are solved at each step and these solutions are coordinated to find the solution for a large global problem. Two common splitting methods are the Alternating Direction Method of Multipliers (ADMM) and the Alternating Minimization Algorithm (AMA), which combine the benefits of dual decomposition methods that are amenable to distributed computation, and augmented Lagrangian approaches, which enable fast convergence guarantees. ADMM was first introduced by Glowinski and Marrocco in \cite{admm-glowinski-marroco} and some of its early convergence guarantees were established in \cite{GABAY197617,augmented-lag-glowinski-book,douglas-rachford}. Since ADMM has found many applications and has had resounding numerical success, there has been renewed interest in obtaining convergence guarantees for this procedure for convex optimization problems \cite{admm-main-conv}. Variants of the ADMM algorithm have also been proposed to improve convergence rates as in Deng and Yin \cite{fast-admm-strict-convex} and Goldstein et al. \cite{ama-convergence}. The AMA algorithm was introduced by Tseng \cite{tseng-paper1}, and its accelerated variant was studied in Goldstein et al. \cite{ama-convergence}. In this work, we leverage the AMA and ADMM algorithms to obtain fast convergence guarantees for Fisher markets with homogeneous linear constraints.

\textbf{Organization}: The rest of this paper is organized as follows. In Section~\ref{IOP-properties}, we present the individual optimization problem \acrshort{iop} with additional linear constraints and study properties of the corresponding market equilibrium. Then, in Section~\ref{Impossibility}, we propose a social convex optimization problem with budget perturbations to derive the market equilibrium with additional linear constraints. Next, we present a fixed point iterative scheme to compute these budget perturbations in Section~\ref{FixedPoint}. Finally, to compute the equilibrium prices we propose a new class of algorithms based on Alternating Direction Methods in Section~\ref{ADMM-Convergence-Guarantees} and conclude the paper in Section~\ref{Conclusion}.

\section{Properties of the Individual Optimization Problem} \label{IOP-properties}

In this section, we study the individual optimization problem of agents with additional linear constraints that are not considered in classical Fisher markets. We start by defining a new individual optimization problem (\acrshort{iop}) in Section~\ref{Model} and study properties regarding the existence, negativity and non-uniqueness of a market equilibrium in Sections~\ref{counterexample-physical}-\ref{counterexample-uniqueness}. In particular, we derive a sufficient condition to guarantee the existence of an equilibrium and show that the equilibrium price set is in general non-convex. Finally, we provide a characterization of the optimal solution of \acrshort{iop} in Section~\ref{IOP-Optimal} and perform a comparative statics analysis to study the behavior of individual agents in response to price changes in Section~\ref{giffen-behavior}. These results suggest that Fisher markets with additional linear constraints are fundamentally different from classical Fisher markets.

\subsection{Modelling Framework for Individual Optimization Problem} \label{Model}

As in classical Fisher markets, we model agents as utility maximizers and in Sections~\ref{IOP-properties}-\ref{FixedPoint}, each agent's utility function is assumed to be linear in the allocations, which is a common utility function used in the Fisher market literature \cite{EisGale,LinUt}. We note that many of our results generalize to the case when agents have homogeneous degree one concave utility functions, and we make this generalization explicit for the relevant results. 

We model the preference of an agent $i$ for one unit of good $j$ through the utility $u_{ij}$, and extend classical Fisher markets through the consideration of each agent's additional linear constraints. To model these linear constraints, we let $T_i$ denote the set of all additional linear constraints associated with agent $i$ and let $t \in T_i$ denote one such constraint. These linear constraints can be specified as $A^{(i)} \mathbf{x}_i \leq \mathbf{b}_i$, where $A^{(i)} \in \mathbb{R}^{l_i \times m}$, with $l_i = |T_i|$, and $\mathbf{b}_i \in \mathbb{R}^{l_i}_{\geq 0}$. Furthermore we denote the constraint $t \in T_i$ as $A_t^{(i)} \mathbf{x}_i \leq b_{it}$, where $A_t^{(i)}$ is a row vector of the matrix $A^{(i)}$. Using our earlier notation for budgets and prices, we have the following individual optimization problem (\acrshort{iop})

\begin{maxi!}|s|[2]                   
    {\mathbf{x}_i \in \mathbb{R}^m}                               
    {u_i(\mathbf{x}_i) = \sum_{j=1}^{m} u_{ij} x_{ij}  \label{eq:eq1}}   
    {\label{eq:Example001}}             
    {}                                
    \addConstraint{\mathbf{p}^T \mathbf{x}_i}{\leq w_i \label{eq:con1}}    
    \addConstraint{A_t^{(i)} \mathbf{x}_i}{\leq b_{it}, \forall t \in T_i \label{eq:con2}}
    \addConstraint{\mathbf{x}_i}{\geq \mathbf{0}  \label{eq:con3}}  
\end{maxi!}
with a budget Constraint~\eqref{eq:con1}, additional linear Constraints~\eqref{eq:con2} and non-negativity Constraints~\eqref{eq:con3}. To provide specific examples of the linear Constraints~\eqref{eq:con2}, we note in the case of knapsack constraints that $A_t^{(i)} \mathbf{x}_i \leq b_{it}$ is identical to $\sum_{j \in t} x_{ij} \leq b_{it}$, where $j \in t$ represents that good $j$ belongs to the knapsack $t$. Another example would be a proportionality constraint, which involves constraints of the form $x_{ij} \leq c x_{ij'}$ for two goods $j, j'$ and some constant $c\in \mathbb{R}_{\geq 0}$.


\subsection{Market Equilibrium May Not Exist} \label{counterexample-physical}

In classical Fisher markets with linear utilities, there exists a unique market equilibrium under mild assumptions \cite{algo-game-theory}. However, in the presence of additional homogeneous or non-homogeneous linear constraints, an equilibrium price is not guaranteed to exist. The non-existence of a market equilibrium in the sense of Definition~\ref{def:market-eq-price} is elucidated through Proposition~\ref{prop:non-existence-eq}. Specifically, we show this result by exhibiting examples of two markets, one for homogeneous linear Constraints~\eqref{eq:con2}, i.e., when $b_{it} = 0$ for each $t \in T_i$ for each agent $i$, and another for non-homogeneous linear Constraints~\eqref{eq:con2}. In these examples, we establish that the additional linear constraints either prevent agents from completely using up their budgets or completely purchasing all the goods in the market, which precludes the existence of an equilibrium price vector.


\begin{restatable}{proposition}{nonexistenceeq} (Non-Existence of a Market Equilibrium)
\label{prop:non-existence-eq}
There exists a market wherein each good $j \in [m]$ has a potential buyer $i \in [n]$, i.e., $u_{ij}>0$, but no equilibrium price vector for the \acrshort{iop} exists when agents have either homogeneous and non-homogeneous linear Constraints~\eqref{eq:con2}.
\end{restatable}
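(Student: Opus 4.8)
The plan is to prove Proposition~\ref{prop:non-existence-eq} constructively, by exhibiting two explicit markets---one with homogeneous and one with non-homogeneous linear Constraints~\eqref{eq:con2}---in each of which every good has a potential buyer yet no price vector can meet the requirements of Definition~\ref{def:market-eq-price}. The guiding observation is that an equilibrium must \emph{simultaneously} clear all markets and exhaust all budgets, while any candidate allocation $\mathbf{x}_i^*(\mathbf{p})$ must first of all be feasible for the \acrshort{iop}; since the additional Constraints~\eqref{eq:con2} do not depend on $\mathbf{p}$, I can rule out market clearing for \emph{every} price at once by arranging that the clearing allocation is infeasible for~\eqref{eq:con2}. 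Throughout I would use a single agent ($n=1$) with two goods, so the clearing condition pins the allocation down to $\mathbf{x}_1 = (\bar s_1, \bar s_2)$ and the argument does not require reasoning about how demand splits across buyers.

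For the homogeneous case I would take two goods with unequal capacities, say $\bar s_1 = 1$ and $\bar s_2 = 2$, strictly positive utilities $u_{11}, u_{12} > 0$ (so both goods have a potential buyer), and the single proportionality constraint $x_{12} \le x_{11}$, which is homogeneous since its right-hand side is $b_{1t} = 0$. The only allocation that clears both markets is $(x_{11}, x_{12}) = (1, 2)$, which violates $x_{12} \le x_{11}$; hence it is infeasible for the \acrshort{iop} under any price $\mathbf{p}$, and in particular can never equal the optimal response $\mathbf{x}_1^*(\mathbf{p})$, which must be feasible. No price vector therefore sells all goods, so by Definition~\ref{def:market-eq-price} no equilibrium exists. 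For the non-homogeneous case I would take $\bar s_1 = \bar s_2 = 1$, positive utilities, and the knapsack constraint $x_{11} + x_{12} \le 1$ (so $b_{1t} = 1 > 0$). Clearing both markets requires $x_{11} + x_{12} = 2 > 1$, again violating~\eqref{eq:con2} independently of $\mathbf{p}$, so the identical reasoning shows no equilibrium price vector exists.

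Together these two markets would establish the proposition, and they make precise the two qualitatively distinct failure modes anticipated in the surrounding discussion: the extra constraints can bar the agent from absorbing all the goods, and---through symmetric constructions in which the binding constraint caps total quantity while capacities and prices are small---can instead bar the agent from exhausting her budget. I would record the budget-exhaustion variant separately, since there the clearing allocation is feasible and the obstruction is optimality rather than feasibility.

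The step I expect to be most delicate is the universal quantification over prices: non-existence is a statement about \emph{all} $\mathbf{p} \in \mathbb{R}^m$ (including negative prices, which this market permits), not about a single bad price, so a naive ``pick a price and check'' argument is insufficient. The device that resolves this cleanly is to push the impossibility into the price-independent Constraints~\eqref{eq:con2}: once the clearing allocation is shown infeasible for~\eqref{eq:con2}, feasibility---and hence optimality---fails uniformly in $\mathbf{p}$, so no case analysis over prices is needed for the two examples above. Only for the budget-unused variant, where the clearing allocation is feasible but cannot be made a fully-budget-spending optimal response, would a short case analysis on whether the budget constraint~\eqref{eq:con1} or the additional constraint~\eqref{eq:con2} binds at the optimum be required.
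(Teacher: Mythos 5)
Your proposal is correct: in both of your markets every good has a potential buyer, and since any optimal solution of the \acrshort{iop} must satisfy the price-independent Constraints~\eqref{eq:con2}, the unique clearing allocation (with $n=1$, clearing forces $\mathbf{x}_1 = (\bar{s}_1, \bar{s}_2)$) is infeasible, so no $\mathbf{p} \in \mathbb{R}^m$ can satisfy Definition~\ref{def:market-eq-price}. However, your route differs substantively from the paper's. The paper also argues by counterexample, but its markets are deliberately chosen so that the linear constraints do \emph{not} make clearing outright impossible. In its homogeneous example (two agents with proportionality constraints $x_{11} \le x_{12}$ and $2x_{21} \le x_{22}$, all capacities and budgets equal to one), there exists a constraint-feasible allocation selling both goods, and the contradiction comes from a case analysis showing that good-clearing and budget exhaustion cannot hold simultaneously. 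In its non-homogeneous example (knapsacks $x_{i1}+x_{i2}\le 1$, capacities $1.5$ and $0.5$), total knapsack allowance exactly matches total supply, and the proof is a genuine price analysis: if $p_1 > 10$ good 1 cannot be fully sold, while if $p_1 \le 10$ buyer 1 cannot exhaust her budget of 15. What your approach buys is brevity and robustness: infeasibility of the clearing allocation rules out every price at once, with no reasoning about optimal demands, relative prices, or negative prices. What the paper's approach buys is a stronger economic message: equilibrium can fail even in markets where supply \emph{can} be absorbed by constraint-feasible allocations, i.e., the failure stems from the interaction of prices, budgets, and constraints rather than from an aggregate supply/feasibility mismatch. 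Your examples, in which no allocation whatsoever clears the market, could be criticized as degenerate over-supply, whereas the paper's cannot; this distinction also matters for the paper's narrative, since the subsequent existence result (Theorem~\ref{thm:market-eq}) is meant to exclude precisely the subtler failures. Your separately recorded budget-exhaustion variant, if worked out with the short case analysis you describe, would recover that second, more interesting failure mode.
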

We refer to~\ref{non-existence-prop1} for the proof of this proposition. Proposition~\ref{prop:non-existence-eq} indicates that in general we cannot expect a market equilibrium to exist for \acrshort{iop}, which suggests the need to impose assumptions on the linear Constraints~\eqref{eq:con2} to guarantee the existence of an equilibrium price vector. This is in contrast to the generality of equilibrium existence for classical Fisher markets.



\subsection{Equilibrium Price May Be Negative} \label{negative-price-eq}

Another striking feature of Fisher markets with additional linear constraints is that the equilibrium prices of certain goods may be negative. This is contrary to traditional economic assumptions wherein the prices of all goods in the market are non-negative at a market equilibrium. Proposition~\ref{prop:negative-price-eq} establishes the existence of a market wherein there exists a good with a strictly negative equilibrium price.

\begin{restatable}{proposition}{negativeeq} (Negative Equilibrium Price)
\label{prop:negative-price-eq}
There exists a market wherein each good $j \in [m]$ has a potential buyer $i \in [n]$, i.e., $u_{ij}>0$, and the equilibrium price vector $\mathbf{p} \notin \mathbb{R}_{\geq 0}^m$.
\end{restatable}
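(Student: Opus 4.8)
The plan is to prove the proposition constructively, exactly as for Proposition~\ref{prop:non-existence-eq}, by exhibiting a single explicit market in which some good carries a strictly negative equilibrium price. The mechanism I would use is a proportionality constraint of the form $x_{i2}\le x_{i1}$ placed among the linear Constraints~\eqref{eq:con2}, which forces consumption of an ``undesirable'' good $2$ to track that of a ``desirable'' good $1$. Concretely I would take $m=2$ with supplies $\bar s_1=\bar s_2=1$, a single agent with strictly positive linear utilities $u_{i1},u_{i2}>0$ (so that, as required, each good has a potential buyer), budget $w_i=1$, and the lone constraint $x_{i2}\le x_{i1}$; an equivalent instance with several identical agents and proportionally scaled supplies works if a genuinely multi-agent market is desired.

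I would then propose the candidate price $\mathbf{p}=(2,-1)$, whose second coordinate is strictly negative, and check the three conditions of Definition~\ref{def:market-eq-price}. Market clearing and budget exhaustion follow immediately once the agent's optimal demand is shown to be $(1,1)$, since then aggregate demand equals $(\bar s_1,\bar s_2)$ and $\mathbf{p}^\top(1,1)=2-1=1=w_i$. The crux, and the step I expect to be the main obstacle, is verifying that $(1,1)$ is actually optimal for the \acrshort{iop}~\eqref{eq:eq1}--\eqref{eq:con3} at this price: in a classical Fisher market a negative price would make the problem unbounded, because buying more of good $2$ simultaneously raises the objective~\eqref{eq:eq1} and relaxes the budget~\eqref{eq:con1}, so the entire argument hinges on the linear constraint~\eqref{eq:con2} being what restores boundedness. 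I would make this precise by substituting the binding budget constraint into the objective to show it increases strictly along the surface $x_2=x_1$ up to the budget-imposed cap, pinning the maximiser at $(1,1)$; equivalently, exhibiting KKT multipliers $\lambda>0$ on the budget and $\mu>0$ on $x_{i2}\le x_{i1}$ with stationarity $u_{i1}=\lambda p_1-\mu$ and $u_{i2}=\lambda p_2+\mu$, where $\mu=(u_{i2}p_1-u_{i1}p_2)/(p_1+p_2)>0$ precisely because $p_1>0>p_2$ while $p_1+p_2>0$. As the \acrshort{iop} is a linear program, these multipliers certify global optimality.

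Finally, for honesty about the exact statement, I would note that because the utilities are linear the clearing allocation $(1,1)$ is supported by a whole cone of prices, so this market also admits non-negative equilibria (for instance $\mathbf{p}=(\tfrac12,\tfrac12)$); the proposition is therefore the existence claim that \emph{some} equilibrium price vector leaves $\mathbb{R}_{\ge 0}^m$, which the instance above already delivers. The conceptual takeaway I would emphasise is structural: a strictly negative equilibrium price is sustainable only because a binding linear constraint in~\eqref{eq:con2} caps demand for the subsidised good at its supply, an effect that cannot occur in classical Fisher markets and that cleanly separates this setting from the classical one.
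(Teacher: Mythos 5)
Your proof is correct, and it follows the same overall strategy as the paper --- exhibit an explicit market together with a price vector having a strictly negative component, then verify the three conditions of Definition~\ref{def:market-eq-price} --- but both the instance and the verification are genuinely different. The paper's example uses two buyers, three goods, and non-homogeneous knapsack constraints ($x_{11}+x_{12}\le 1$ and $x_{21}+x_{22}\le 1$, with good 3 unconstrained) at the price $\mathbf{p}=(-1,0.5,11)$, and the optimality of the stated demands $(1,0,1)$ and $(0,1,0)$ is asserted rather than certified. Your example is more minimal --- one agent, two goods, a single \emph{homogeneous} proportionality constraint $x_{i2}\le x_{i1}$ --- and your verification via explicit KKT multipliers ($\lambda=u_{i1}+u_{i2}>0$ on the budget and $\mu=u_{i1}+2u_{i2}>0$ on the proportionality constraint, both constraints active at $(1,1)$) is a complete LP-duality certificate of global optimality; the vertex enumeration $(0,0)$, $(\tfrac12,0)$, $(1,1)$ confirms it independently. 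Your construction also buys something the paper's does not: it shows that negative equilibrium prices already arise with homogeneous constraints, whereas the paper's instance is knapsack-based with $b_{it}=1$. Both examples rest on the structural mechanism you identify, namely that a binding linear Constraint~\eqref{eq:con2} caps demand for the subsidized good and thereby restores boundedness of the individual problem.

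One small imprecision in a side remark: your claim that the same market also admits the non-negative equilibrium $\mathbf{p}=(\tfrac12,\tfrac12)$ holds only when $u_{i2}\ge u_{i1}$; if $u_{i2}<u_{i1}$, the agent's unique optimum at that price is $(2,0)$ and the market does not clear. Since you left the utilities as arbitrary positive numbers, either fix $u_{i2}\ge u_{i1}$ or drop the remark. This does not affect the main argument, which is valid for all $u_{i1},u_{i2}>0$, and your reading of the proposition as an existence claim (some equilibrium price vector lies outside $\mathbb{R}_{\ge 0}^m$) matches the paper's own interpretation, since the paper likewise does not rule out non-negative equilibria in its example.
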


We refer to~\ref{apdx:negative-price-eq} for the proof of this proposition. This result highlights an equilibrium property of Fisher markets with additional linear constraints that significantly departs from that of classical linear Fisher markets, which have non-negative equilibrium prices.

\subsection{Condition to Guarantee Existence of Market Equilibrium} \label{guarantee-existence}


Given the results on the non-existence of the market equilibrium and the negativity of the equilibrium price vector, we show that under certain assumptions we can in fact guarantee the existence of a market equilibrium. To this end, we provide two conditions for equilibrium existence when agents have an \acrshort{iop} as given in Equations~\eqref{eq:eq1}-\eqref{eq:con3}. The first condition we require is that for each good $j$ there is an agent that can purchase any amount of $j$. This requirement ensures that the prices in the market are restricted to be strictly positive, as otherwise at negative prices agents could potentially buy an infinite amount of a good preventing the market from clearing. The second sufficient condition is that for each agent there is a good $j$ that is not associated with any linear Constraints~\eqref{eq:con2}. This condition arises as the additional constraints may preclude agents from spending their entire budget, and thus prevent the market from clearing. Thus, it must be ensured that there is a good not restrained by linear Constraints~\eqref{eq:con2} so that agents can purchase more units of it to spend their budget. Note that similar to the approach in \cite{FisherPricing}, by allowing agents to keep unused budget, we can treat budget as a good. As a result the technical assumption that there is a good $j$ that is not associated with any linear Constraints~\eqref{eq:con2} is not very demanding. We now formally state these conditions required for the existence of a market equilibrium in Theorem~\ref{thm:market-eq}.

\begin{restatable}{theorem}{marketeq}
\label{thm:market-eq} (Existence of Market Equilibrium)
Suppose that (i) for each good $j$ there is an agent $i$ with $u_{ij}>0$ that can purchase any amount of $j$, and (ii) for each agent $i$ there is a good $j$ that is not associated with any linear Constraints~\eqref{eq:con2}, where $u_{ij}>0$. Then there exists an equilibrium price vector.
\end{restatable}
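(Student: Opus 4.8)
The plan is to establish existence through a fixed-point argument on the space of prices, using the two hypotheses to control the behaviour of demand near the boundary. First I would fix a price vector $\mathbf{p}$ and analyse each agent's demand set $D_i(\mathbf{p})$, namely the set of optimal solutions of the \acrshort{iop}~\eqref{eq:eq1}-\eqref{eq:con3}. Since $u_i$ is linear and Constraints~\eqref{eq:con1}-\eqref{eq:con3} are polyhedral, whenever $\mathbf{p}$ renders the program bounded the set $D_i(\mathbf{p})$ is a nonempty, compact, convex polytope (the optimal face of a linear program). Applying Berge's maximum theorem, together with continuity of the feasible-set correspondence $\mathbf{p} \mapsto \{\mathbf{x}_i \ge \mathbf{0} : \mathbf{p}^T \mathbf{x}_i \le w_i,\ A^{(i)} \mathbf{x}_i \le \mathbf{b}_i\}$, I would show that $D_i$ is upper hemicontinuous on the region of strictly positive prices, so the aggregate demand $D(\mathbf{p}) = \sum_{i} D_i(\mathbf{p})$ is nonempty, convex, compact-valued, and upper hemicontinuous there.

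Second, I would show that the two hypotheses supply exactly the boundary information needed to turn a fixed point into an equilibrium. Hypothesis (i) forces every candidate equilibrium price to be strictly positive: if $p_j \le 0$, the agent who can purchase an unrestricted amount of good $j$ faces an unbounded program, since buying more of $j$ both raises utility ($u_{ij} > 0$) and does not tighten Constraint~\eqref{eq:con1}; moreover, as $p_j \downarrow 0$ this agent's demand for $j$ diverges, so aggregate excess demand for $j$ is strictly positive near $\{p_j = 0\}$. This is precisely the inward-pointing boundary condition required by the Gale--Nikaido--Debreu lemma. Hypothesis (ii) guarantees that at every optimum the budget Constraint~\eqref{eq:con1} binds, because otherwise the agent could spend residual budget on its unconstrained good with $u_{ij} > 0$ and strictly increase utility; hence Walras' law holds with equality, $\sum_{j} p_j x_{ij}^\ast = w_i$ for every $i$.

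Third, I would restrict attention to the compact, budget-balanced price set $\mathcal{P} = \{\mathbf{p} \ge \mathbf{0} : \sum_{j} p_j \bar{s}_j = \sum_{i} w_i\}$, which any equilibrium must satisfy by summing the binding budget constraints, and set up the standard price-adjustment correspondence that shifts weight toward goods in excess demand. Since $D$ is convex- and compact-valued and upper hemicontinuous, Kakutani's fixed-point theorem yields a pair $(\mathbf{p}^\ast, \mathbf{x}^\ast)$ with $\mathbf{x}^\ast \in D(\mathbf{p}^\ast)$, aggregate excess demand $\sum_{i} x_{ij}^\ast - \bar{s}_j \le 0$ for all $j$, and complementary slackness $p_j^\ast ( \sum_{i} x_{ij}^\ast - \bar{s}_j ) = 0$. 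The boundary behaviour from hypothesis (i) keeps $\mathbf{p}^\ast$ interior, so $p_j^\ast > 0$ for every $j$; complementary slackness then forces $\sum_{i} x_{ij}^\ast = \bar{s}_j$, i.e.\ all goods are sold. Combining this with the budget equalities from hypothesis (ii) shows that all budgets are fully used, so $\mathbf{p}^\ast$ is an equilibrium price vector in the sense of Definition~\ref{def:market-eq-price}.

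The main obstacle is the interplay between the non-uniqueness of demand and its boundary/unboundedness behaviour. Because utilities are linear, each $D_i(\mathbf{p})$ is genuinely set-valued, so Brouwer's theorem is inadequate and one must argue through Kakutani (or the Gale--Nikaido--Debreu lemma), verifying convexity and upper hemicontinuity of the aggregate demand with care. The most delicate point is keeping prices strictly positive while demand stays finite: hypothesis (i) must be used simultaneously to exclude non-positive prices and to furnish the inward-pointing excess-demand condition at the boundary, whereas hypothesis (ii) must force full budget expenditure. Reconciling these two hypotheses --- one indexed by goods, the other by agents --- so that both equalities of Definition~\ref{def:market-eq-price} hold at the same fixed point is where the proof demands the most attention.
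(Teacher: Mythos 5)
Your proposal is correct in outline, but it takes a genuinely different route from the paper. The paper also argues via a fixed point on the normalized price simplex, but its machinery is Sperner's lemma: it colors each price vector $\mathbf{p}$ by the index of a good $j$ with $p_j \neq 0$ and excess demand $f_j(\mathbf{p}) \leq 0$, checks that this coloring is proper (the corner points get distinct colors, and a good with nonpositive excess demand always exists since otherwise aggregate spending would exceed aggregate budgets), and extracts $\mathbf{p}^*$ with $f_j(\mathbf{p}^*) \leq 0$ for all $j$ by passing to finer and finer triangulations; hypothesis (ii) is then used exactly as you use it --- an agent with unspent budget could buy more of its unconstrained good with $u_{ij}>0$, a contradiction --- to upgrade the inequalities to the market-clearing equalities of Definition~\ref{def:market-eq-price}. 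Your route through Berge's maximum theorem, Walras' law, and Kakutani (equivalently the Gale--Nikaido--Debreu lemma) buys something the paper's argument glosses over: with linear utilities the demand $D_i(\mathbf{p})$ is genuinely set-valued, yet the paper's limiting-triangulation step invokes ``continuity of the demand function,'' which is not available here; your explicit treatment of demand as a convex-, compact-valued, upper-hemicontinuous correspondence repairs precisely this point. What the paper's approach buys in exchange is elementarity: Sperner's lemma needs no truncation of the demand correspondence near the boundary of the simplex, whereas your Kakutani argument still requires one (demand is unbounded as some $p_j \downarrow 0$, so the correspondence is not compact-valued on all of $\mathcal{P}$); you correctly flag this as the delicate point but do not carry it out --- the standard fix is to cap consumption in a large box, apply Kakutani to the truncated economy, and use hypothesis (i) to show the cap and the boundary are both avoided at the fixed point. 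Both proofs use the two hypotheses in identical roles: (i) to confine attention to strictly positive prices and control boundary behavior, and (ii) to force full budget expenditure.
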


\begin{hproof}
We normalize the capacities of each good and the total budget of all agents to $1$, and consider an excess demand function $f_j(\mathbf{p}) = \sum_{i = 1}^{n} x_{ij}(\mathbf{p}) - 1$. Due to condition (i) it suffices for us to consider $\mathbf{p} \in \Delta_m$, where $\Delta_m$ is a standard simplex. Next, we define a coloring function $c: \mathbf{p} \mapsto \small\{0, 1, ..., m\small\}$, such that $c(\mathbf{p}) = j$ for one such $j$ that satisfies $f_j(\mathbf{p}) \leq 0$ and $p_j \neq 0$. Such a coloring function on the standard simplex satisfies Sperner's lemma, which implies that we can find a $\mathbf{p}^*$, such that $f_j(\mathbf{p}^*) \leq 0$, for each $j$, showing for each $j$ that $\sum_{i = 1}^{n} x_{ij}(\mathbf{p}^*) \leq 1$.

To prove that the above inequality is an equality, we suppose that  there exists $j$, such that  $\sum_{i = 1}^{n} x_{ij}(\mathbf{p}^*) < 1$. Then we find a contradiction and prove the strict inequality is impossible under the condition that there exists a good $j$ without any linear Constraints~\eqref{eq:con2}. This establishes our claim that $\mathbf{p}^*$ is the equilibrium price vector.
\end{hproof}

We refer to~\ref{existence-market-eq} for a detailed proof of Theorem~\ref{thm:market-eq}. 
The result of Theorem~\ref{thm:market-eq} provides a methodology to test for and guarantee the existence of a market equilibrium for Fisher markets with additional linear constraints. Other sufficient conditions can be developed when conditions (i) and (ii) of Theorem~\ref{thm:market-eq} do not hold. For instance, in Section~\ref{homogeneous-equilibrium-computation}, we provide a convex program to test for the existence of an equilibrium price when the linear Constraints~\eqref{eq:con2} are homogeneous, i.e., $b_{it} = 0$ for each constraint $t \in T_i$ for each agent $i$.


\subsection{Equilibrium Price Set May Be Non-Convex} \label{counterexample-uniqueness}

We now show that even if a market equilibrium exists, it may be that the equilibrium price vector is not unique and in general the equilibrium price set may even be non-convex. Theorem~\ref{thm:nonconvexity} establishes this result through an example of a market where agents have knapsack linear constraints.

\begin{restatable}{theorem}{nonconvexity} (Non-Convexity of the Equilibrium Price Set)
\label{thm:nonconvexity}
There exists a market wherein the equilibrium price vector exists but is not unique and the corresponding equilibrium price set is non-convex for the \acrshort{iop} when agents have additional linear Constraints~\eqref{eq:con2}.
\end{restatable}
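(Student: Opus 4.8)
The plan is to prove this existence statement by exhibiting an explicit market with knapsack constraints and directly computing (a relevant portion of) its equilibrium price set. Concretely, I would fix a small instance --- two goods with capacities $\bar{s}_1, \bar{s}_2$ and a handful of agents with linear utilities $u_{ij}$, budgets $w_i$, and knapsack constraints of the form $\sum_{j \in t} x_{ij} \leq b_{it}$ --- chosen so that conditions (i)--(ii) of Theorem~\ref{thm:market-eq} hold (guaranteeing that at least one equilibrium exists) while the knapsack constraints induce multiplicity. The reason to expect multiplicity and non-convexity is structural: under a binding knapsack constraint the effective \emph{bang-per-buck} governing each agent's demand is $u_{ij}$ offset by the dual variable of its active knapsack constraint, so the good an agent buys --- and hence the aggregate demand --- switches discontinuously as the price ratio $p_1/p_2$ crosses threshold values. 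This piecewise behaviour makes the market-clearing locus a union of segments rather than a single convex region.

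Second, I would characterise the demand correspondence. Since each agent solves a linear program, its optimal allocation $\mathbf{x}_i(\mathbf{p})$ is pinned down by the KKT conditions: an agent spends on the goods maximising $u_{ij}/p_j$ adjusted by the multipliers of its active knapsack constraints, and the allocation is set-valued (indifference) precisely on the threshold prices. Substituting these demands into the market-clearing conditions $\sum_i x_{ij}(\mathbf{p}) = \bar{s}_j$ and $\sum_j p_j x_{ij}(\mathbf{p}) = w_i$ from Definition~\ref{def:market-eq-price}, I would solve explicitly for the set of prices that clear the market under each binding pattern. The goal of this step is to show that the equilibrium set decomposes into (at least) two distinct branches, each a line segment determined by a different active combination of budget and knapsack constraints, meeting at a kink.

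Third, I would extract two concrete equilibrium prices $\mathbf{p}^{(1)}$ and $\mathbf{p}^{(2)}$, one from each branch, verify directly that each clears the market and exhausts every budget, and then evaluate demand at the midpoint $\tfrac{1}{2}(\mathbf{p}^{(1)} + \mathbf{p}^{(2)})$. Because the midpoint lies strictly inside a regime with a different binding pattern, the resulting aggregate demand will over- or under-supply some good, so the midpoint is not an equilibrium price. Exhibiting the two distinct equilibria establishes non-uniqueness, and the failure of the midpoint establishes non-convexity simultaneously.

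The main obstacle I anticipate is handling the set-valued nature of demand for linear utilities: at the threshold prices each agent is indifferent between goods, so I must argue that \emph{no} selection from the demand correspondence clears the market at the midpoint, not merely the particular selection I computed. I would address this by engineering the example so that the midpoint lies in the relative interior of a regime where demand is single-valued (bounded away from any indifference threshold), which removes the ambiguity and makes the clearing failure unconditional. A secondary technical point is to verify the positivity of the equilibrium prices on the chosen branches, so that the instance genuinely falls within the equilibrium framework of Definition~\ref{def:market-eq-price}.
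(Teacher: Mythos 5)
Your overall strategy---an explicit knapsack-constrained market, two equilibrium price vectors, and a midpoint that fails to clear---is exactly the strategy of the paper's proof. The genuine gap is in your choice of instance: a \emph{two-good} market cannot exhibit the structure you describe. By Definition~\ref{def:market-eq-price}, every equilibrium price vector satisfies both market clearing, $\sum_{i=1}^n x_{ij}^*(\mathbf{p}) = \bar{s}_j$, and budget exhaustion, $\sum_{j} p_j x_{ij}^*(\mathbf{p}) = w_i$; summing the latter over agents and substituting the former yields
\[
p_1 \bar{s}_1 + p_2 \bar{s}_2 = \sum_{i=1}^{n} w_i ,
\]
so with two goods the \emph{entire} equilibrium price set is contained in a single line in $\mathbb{R}^2$. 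Your picture of ``two distinct branches, each a line segment determined by a different active combination of budget and knapsack constraints, meeting at a kink'' is therefore impossible: any two branches are necessarily collinear, and a connected union of collinear segments is an interval, hence convex. The only way a two-good market can have a non-convex equilibrium set is for that set to be \emph{disconnected} along the line, which in turn requires aggregate demand for a good to be non-monotone as prices move along the line (a Giffen-type overflow effect of the kind the paper isolates in Proposition~\ref{prop:propGiffen}). That is a substantially more delicate construction than the one you outline, and nothing in your plan engineers it.

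The paper avoids this obstruction by going to higher dimension: it uses four agents and four goods, with goods 1--3 in a common knapsack and good 4 free of Constraints~\eqref{eq:con2}, so the budget identity $\sum_j p_j \bar{s}_j = \sum_i w_i$ only confines equilibria to a three-dimensional hyperplane. Inside it the paper exhibits a genuinely \emph{curved} one-parameter family of equilibria, $p_1 = 2+\tfrac{2\eta-1}{12\eta^2+1}$, $p_2 = 2-\tfrac{4\eta}{12\eta^2+1}$, $p_3 = 2+\tfrac{2\eta+1}{12\eta^2+1}$, $p_4 = 1$ for $\eta \in [-\tfrac{1}{24},0]$, with optimality of the induced allocations verified through the virtual-product characterization of Theorem~\ref{thm:iopoptimal}; non-convexity then follows by precisely your third step, evaluating demand at the midpoint of the $\eta = 0$ and $\eta = -\tfrac{1}{24}$ equilibria and showing good 1 is over-demanded. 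So your demand-characterization and midpoint-verification steps are sound in spirit (and your care about set-valued demand at indifference thresholds is warranted), but the instance itself must be repaired: use at least three goods, so that a curved or kinked equilibrium locus can exist within the budget hyperplane, or else aim explicitly at a disconnected equilibrium set---otherwise there are no two equilibria whose midpoint can possibly fail.
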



For a detailed proof of this result see~\ref{non-convex-appendix}, wherein we first establish two equilibrium price vectors for a market with additional knapsack constraints. Then, we show that a convex combination of these two price vectors is not a market equilibrium to establish the non-convexity of the equilibrium price set. This result establishes that the problem of determining a market equilibrium with linear Constraints~\eqref{eq:con2} is fundamentally different from classical Fisher markets. In particular, this result suggests that the problem of computing the market equilibria for Fisher markets with linear Constraints~\eqref{eq:con2} may be computationally hard due to the non-convexity in the equilibrium price set.

\subsection{Characterizing Optimal Solution of \acrshort{iop}} \label{IOP-Optimal}

In this section, we characterize the optimal solution of the \acrshort{iop} and show that this differs from the optimal solution of Problem~\eqref{eq:Fisher1}-\eqref{eq:Fishercon3} for classical linear Fisher markets. In classical linear Fisher markets, each agent purchases the goods $j^* = \argmax_j \left\{\frac{u_{ij}}{p_j} \right\}$. However, with linear Constraints~\eqref{eq:con2} agents no longer purchase goods corresponding to the highest \textit{bang-per-buck} ratio. In order to understand the purchasing behavior of agents, we study the \acrshort{iop} with knapsack linear constraints and characterize its optimal solution. To this end, we introduce the notion of a \textit{virtual product} and show that agents purchase goods corresponding to the maximum \textit{bang-per-buck} ratios of the \textit{virtual products}.





To characterize the optimal solution of the \acrshort{iop} with knapsack linear constraints, we begin with the consideration of a feasible solution set for buyer $i$ and knapsack constraint $t \in T_i$.

\begin{definition}(Feasible Set). \label{def-sol-set}
Given a price vector $\mathbf{p} \in \mathbb{R}_{\geq 0}^{m}$, a feasible solution set for buyer $i$ and knapsack constraint $t$ is given by:

$ S_{t} = \left\{(u_t, w_t) | \exists \left\{x_{ij} \right\}_{j \in t}, \sum_{j \in t} x_{ij} \leq 1, x_{ij} \geq 0, \forall j \in t, u_t = \sum_{j \in t} u_{ij} x_{ij}, w_t = \sum_{j \in t} x_{ij} p_j \right\} $
\end{definition}

Definition~\ref{def-sol-set} specifies agent $i$'s utility and budget when consuming goods in knapsack $t$. Note that in the above definition we have normalized $b_{it} = 1$ for each knapsack constraint. This is without loss of generality for the case when $b_{it}>0$. In the instance when $b_{it}=0$ then the feasible set $S_t$ is $\small\{(0,0)\small\}$, i.e., no good is purchased in knapsack $t$ by agent $i$. Thus, to determine the goods in the optimal bundle of agent $i$ it suffices to restrict attention to the case when $b_{it} = 1$ for each knapsack constraint, as described by the feasible set in Definition~\ref{def-sol-set}.

The solution set $S_t$ can be viewed as lying in the convex hull of the points defined by $(u_{ij}, p_{ij})$, $j \in t$ and the origin in the price-utility plane, as shown by the enclosed region in Figure~\ref{virtual-products}. The lower frontier of this convex hull, as shown in bold, from the origin to $(u_{ij_{\max}}, p_{j_{\max}})$, where $j_{\max} = \argmax_{j \in t} \left\{u_{ij}\right\}$, is piece-wise linear and is characterised by slopes $\theta^t = (\theta_1^t, \theta_2^t, ..., \theta_{k_t}^t)$, where $k_t = |\{j:j \in t\}|$. As shown on the right in Figure~\ref{virtual-products}, given a fixed budget $w_t$ for knapsack $t$, the maximal utility that can be obtained from knapsack $t$ must be the intersection of the line $p = w_t$ and the lower frontier of the convex hull when $w_t \leq p_{j_{\max}}$. Otherwise the maximal utility obtained is $u_{ij_{\max}}$. Therefore, an optimal solution of \acrshort{iop} must lie on the lower frontier, with endpoints of the line segments corresponding to goods and line segments corresponding to \textit{virtual products}. A \textit{virtual product} is the set of all convex combinations of utility-price pairs of two goods $j_1$ and $j_2$ belonging to knapsack $t$.

\begin{definition} (Virtual Product).
A \textit{virtual product} is the set of points $\alpha (u_{ij_1}, p_{j_1}) + (1-\alpha) (u_{ij_2}, p_{j_2})$ for all $\alpha \in [0, 1]$ for two goods $j_1$ and $j_2$ belonging to knapsack $t$. It is characterized by its two endpoints $A = (u_{ij_1}, p_{j_1})$, $B = (u_{ij_2}, p_{j_2})$, and a slope $\theta_{j_1j_2} = \frac{p_{j_2} - p_{j_1}}{u_{ij_2} - u_{ij_1}}$.
\end{definition}
A key property of interest of a \textit{virtual product} is its \textit{bang-per-buck}, which is the reciprocal of the slope of the \textit{virtual product}.

\begin{definition} \label{def:bang-per-buck-virtual} (Bang-Per-Buck of a Virtual Product)
The \textit{bang-per-buck} ratio of a \textit{virtual product} characterized by its two endpoints $A = (u_{ij_1}, p_{j_1})$, $B = (u_{ij_2}, p_{j_2})$, and a slope $\theta_{j_1j_2} = \frac{p_{j_2} - p_{j_1}}{u_{ij_2} - u_{ij_1}}$ is given by $\frac{1}{\theta_{j_1j_2}} = \frac{u_{ij_2} - u_{ij_1}}{p_{j_2} - p_{j_1}}$.
\end{definition}

\begin{figure}[!h]
      \centering
      \includegraphics[width=0.8\linewidth]{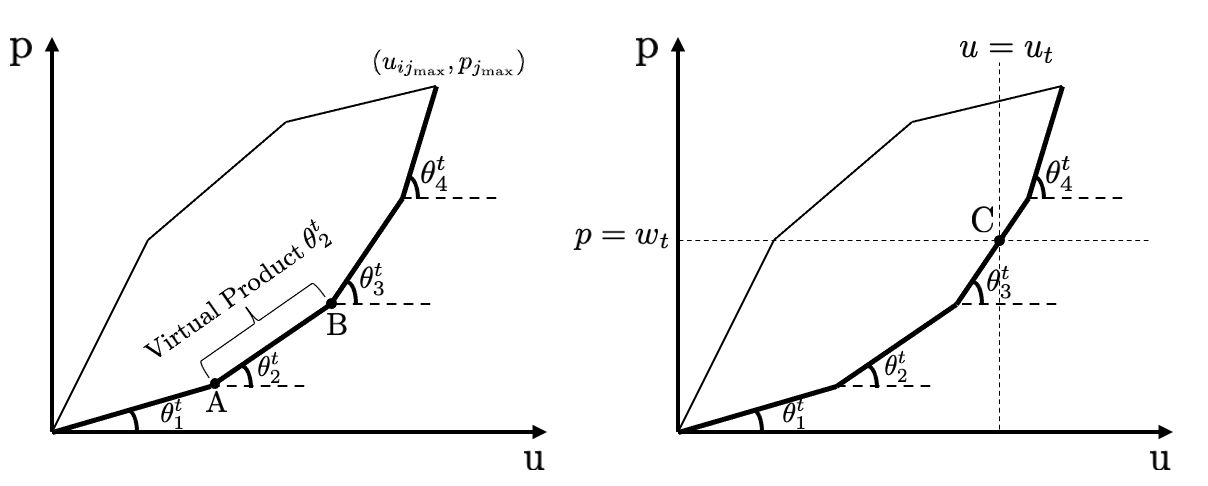}
      \caption{The enclosed region represents the convex hull corresponding to the solution set $S_t$. The vertices on the lower frontier (in bold) correspond to the goods and the segments correspond to \textit{virtual products}. The figure on the right shows that any optimal solution must lie on the lower frontier of the convex hull, as indicated by the point $C$.}
      \label{virtual-products}
   \end{figure}

While in Fisher markets, buyers purchase goods that maximize their \textit{bang-per-buck} ratios, in the presence of knapsack constraints, we show that agents purchase goods in the descending order of the \textit{virtual products}' \textit{bang-per-buck} ratios. Theorem~\ref{thm:iopoptimal} formalizes this result. 

\begin{restatable}{theorem}{iopopt} (Optimal Solution of \acrshort{iop})
\label{thm:iopoptimal}
Suppose that agents have knapsack linear Constraints~\eqref{eq:con2} where each good belongs to at most one linear Constraint~\eqref{eq:con2}. Then, given a price vector $\mathbf{p} \in \mathbb{R}_{\geq 0}^{m}$, agent $i$ can obtain the optimal solution $\mathbf{x}_i^* \in \mathbb{R}^{m}$ of the \acrshort{iop} by mixing all \textit{virtual products} from different knapsacks together and spending their budget $w_i$ in the descending order of the \textit{virtual products}' bang-per-buck ratios. Furthermore, at most one unit of each virtual product can be purchased by agent $i$.
\end{restatable}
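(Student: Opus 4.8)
The plan is to exploit the separability granted by the hypothesis that each good belongs to at most one knapsack: since the knapsacks $t \in T_i$ share no goods, the only coupling across them is the single budget Constraint~\eqref{eq:con1}. First I would reduce the \acrshort{iop} to a one-dimensional budget-allocation problem across knapsacks. For a fixed split $(w_t)_{t \in T_i}$ with $\sum_{t} w_t \le w_i$ and $w_t \ge 0$, the knapsack subproblems decouple, so it suffices to solve, for each $t$, the subproblem of maximizing $\sum_{j \in t} u_{ij} x_{ij}$ subject to $\sum_{j \in t} x_{ij} \le 1$, $\sum_{j \in t} p_j x_{ij} \le w_t$, and $x_{ij} \ge 0$; call its optimal value $g_t(w_t)$. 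The \acrshort{iop} objective then becomes $\max \sum_t g_t(w_t)$ over feasible budget splits, and disjointness of the knapsacks makes this objective genuinely separable.

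Next I would characterize each $g_t$ using the convex-hull geometry already set up in Definition~\ref{def-sol-set} and Figure~\ref{virtual-products}. The set of attainable utility--spend pairs for knapsack $t$ is exactly $S_t$, the convex hull of the origin and the points $(u_{ij}, p_j)$, $j \in t$. For a fixed budget level $w_t$, the maximal utility is attained on the lower frontier of $S_t$, since any dominated or interior point can be improved by moving rightward to the frontier; this yields the intersection of $p = w_t$ with the frontier when $w_t \le p_{j_{\max}}$ and the saturated value $u_{ij_{\max}}$ otherwise. I would then show that $g_t$ is piecewise linear and concave: its breakpoints are the frontier vertices (the non-dominated goods), its linear pieces are precisely the virtual products, and the slope of $g_t$ on each piece equals that virtual product's bang-per-buck in the sense of Definition~\ref{def:bang-per-buck-virtual}. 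Concavity is the key structural fact: because the frontier is the lower boundary of a convex set, its slopes satisfy $\theta_1^t \le \theta_2^t \le \cdots$, so the bang-per-buck ratios $1/\theta_k^t$ are non-increasing in $w_t$, i.e.\ spending more within a single knapsack yields diminishing marginal utility.

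With each $g_t$ concave, piecewise linear, and non-decreasing, the global step is a standard greedy/exchange argument for maximizing a separable concave function under one linear budget constraint. I would show that the optimum funds the linear pieces in descending order of marginal slope, i.e.\ the virtual products pooled across all knapsacks in descending order of bang-per-buck, until $w_i$ is exhausted or every good is saturated. The exchange step is: if an optimal allocation funds a virtual product of lower bang-per-buck while some virtual product of higher bang-per-buck remains not fully purchased, shifting a marginal unit of budget from the former to the latter strictly increases utility without violating feasibility, precisely because disjointness of the knapsacks keeps the two shifts independent. The within-knapsack ordering is then automatically respected, since the decreasing ratios $1/\theta_k^t$ guarantee that a later segment of knapsack $t$ is never funded before its higher-ranked predecessor. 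Finally, the claim that at most one unit of each virtual product is purchased is immediate: traversing a full frontier segment corresponds to moving between two adjacent vertices of $S_t$, which is bounded by the knapsack Constraint~\eqref{eq:con2} $\sum_{j \in t} x_{ij} \le 1$.

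I expect the main obstacle to be the second step, namely establishing rigorously that $g_t$ is concave with linear pieces equal to the virtual products and slopes equal to their bang-per-buck ratios, together with a clean treatment of the budget-versus-saturation case split at $w_t = p_{j_{\max}}$. Once concavity and the correct identification of the piece slopes are in hand, the greedy optimality of the third step is routine, and the separability established in the first step is what makes the pooled descending-bang-per-buck rule well defined.
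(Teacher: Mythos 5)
Your proposal is correct and follows essentially the same route as the paper's proof: both decompose the \acrshort{iop} by knapsack via the induced budget split $w_t$, identify the within-knapsack optimum with the lower frontier of the convex hull $S_t$ (so purchases follow virtual products in ascending order of slope), and establish the pooled descending bang-per-buck rule by an exchange argument that shifts a marginal unit of budget from a lower to a higher bang-per-buck virtual product. The only difference is packaging --- you make the concavity of the per-knapsack value functions $g_t$ explicit and invoke greedy optimality for separable concave maximization, whereas the paper runs the exchange directly on a presumed-optimal allocation --- but the mathematical content is the same.
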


Note that when two \textit{virtual products} have the same slope, irrespective of whether they correspond to the same knapsack constraint or different knapsack constraints, ties can be broken arbitrarily. We can derive another property of \acrshort{iop} through an immediate corollary of Theorem~\ref{thm:iopoptimal}. In particular, Corollary~\ref{cor:coroptsol} establishes the number of different goods an agent will purchase corresponding to each knapsack constraint.

\begin{restatable}{corollary}{coroptsol} (Quantity of Goods Purchased with Knapsack Linear Constraints)
\label{cor:coroptsol}
For any agent $i$, there exists an optimal solution $\mathbf{x}_i^* \in \mathbb{R}^{m}$ of the \acrshort{iop}, such that $i$ purchases two different goods in at most one knapsack linear constraint. For all other knapsacks, agent $i$ buys at most one good.
\end{restatable}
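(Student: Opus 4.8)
The plan is to read the corollary off the greedy characterization of Theorem~\ref{thm:iopoptimal} by tracking, good-by-good, what the agent holds inside each knapsack as the budget is spent. First I would invoke Theorem~\ref{thm:iopoptimal}: since each good belongs to at most one knapsack, the knapsacks are disjoint and I may order \emph{all} virtual products across knapsacks by descending bang-per-buck — equivalently ascending slope $\theta$ — and consume them greedily, at most one unit each, until the budget $w_i$ is exhausted. The structural fact I would rely on is that within a single knapsack $t$ the slopes of the virtual products increase along the lower frontier of the convex hull of $S_t$ (precisely the convexity of that frontier established earlier in Section~\ref{IOP-Optimal}, with $\theta_1^t < \theta_2^t < \cdots$). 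Hence restricting the global ascending-slope order to $t$ recovers $t$'s own frontier order, so the agent's state inside $t$ only ever moves forward along the frontier from the origin through its successive vertices $g_1^t, g_2^t, \ldots$.

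Next I would establish the telescoping property that makes the counting work: consecutive virtual products on a knapsack's frontier share an endpoint, so completing the first $r$ virtual products of $t$ lands the agent exactly at the $r$-th vertex $g_r^t$, which corresponds to holding one full unit of a single good ($x_{ig_r^t}=1$, all other goods in $t$ set to $0$). Thus after \emph{any} completed prefix the agent holds at most one good in that knapsack, and two distinct goods appear simultaneously only while a virtual product is \emph{partially} consumed, namely the two endpoint goods of that segment.

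I would then bound the number of partially consumed segments. The greedy process consumes virtual products one at a time and in full (each up to one unit) until $w_i$ runs out, so at every checkpoint between two virtual products every knapsack sits at a vertex; the only possibly partial segment is the single marginal one at which the budget is exhausted. Letting $t^\star$ be the knapsack containing this marginal virtual product, every knapsack $t \neq t^\star$ has a completed prefix and therefore contributes at most one good, while $t^\star$ contributes at most the two endpoint goods of its marginal segment. This is exactly the claim: two different goods in at most one knapsack, at most one good in every other. Ties in slope (within or across knapsacks) are harmless, since resolving them into any linear order still leaves a single marginal partially consumed segment.

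The main obstacle is making the ``consuming a virtual product $=$ swapping one good for the next'' interpretation rigorous, so that a completed prefix provably lands on a single-good vertex with no residual fractions of earlier goods. I would dispatch this by writing the knapsack allocation at an interior point of segment $(g_k^t, g_{k+1}^t)$ explicitly as $x_{ig_k^t}=\beta$, $x_{ig_{k+1}^t}=1-\beta$ (all other goods in $t$ zero), which respects $\sum_{j\in t} x_{ij}\le 1$ and shows the endpoints $\beta\in\{0,1\}$ are single-good vertices. The one case to separate is the first segment, whose lower endpoint is the origin rather than a good: its partial consumption yields only one good, so the two-good count there is vacuous and the bound still holds.
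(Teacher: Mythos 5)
Your proposal is correct and follows essentially the same argument as the paper's proof: by Theorem~\ref{thm:iopoptimal} the agent greedily consumes virtual products in descending bang-per-buck order, every virtual product except possibly the last is bought in full (landing the agent at a single-good vertex of the relevant knapsack's lower frontier), and only the one marginal, partially consumed virtual product can leave the agent holding a fraction of two goods, all within a single knapsack. Your additional details — the telescoping/shared-endpoint argument for completed prefixes, the explicit convex-combination parametrization of segment interiors, and the tie-breaking and origin cases — are rigorous fillings-in of steps the paper states more briefly, not a different route.
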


We refer to~\ref{iop-optimal-pf} and~\ref{cor-optimal-pf} for a detailed proof of Theorem~\ref{thm:iopoptimal} and Corollary~\ref{cor:coroptsol} respectively. The characterization of the optimal solution of \acrshort{iop} given by Theorem~\ref{thm:iopoptimal} and Corollary~\ref{cor:coroptsol} holds for any price vector $\mathbf{p}$, irrespective of whether it is an equilibrium price or not. The optimal solution may not be unique and Corollary~\ref{cor:coroptsol} states there exists an optimal solution such that agent $i$ purchases two different goods in at most one knapsack and in all other knapsacks agents purchase at most one good. However, Corollary~\ref{cor:coroptsol} does not imply that all solutions must satisfy these conditions. Furthermore, note that these results can easily be extended to the case when there exists some good that does not have any knapsack constraint. The detailed discussion and interpretation by examples of the optimal solution of the \acrshort{iop} are presented in~\ref{rmk-iopopt} and~\ref{examples-iopopt}.

\subsection{Individual Behavior in Response to Price Changes} \label{giffen-behavior}

In this section, we use our characterization of the optimal solution of the \acrshort{iop} to perform a comparative statics analysis. Since comparative statics is concerned with determining the sign of the changes in the endogenous variables as a result of changes in an exogenous parameter, we study the change in consumer demand when solving the \acrshort{iop} under changes to the price vector $\mathbf{p}$ and income level $w_i$. In particular, we show that consumers with linear Constraints~\eqref{eq:con2} exhibit Giffen behavior, where the demand for a good may rise with an increase in its price, and that this Giffen behavior disappears under an appropriate level of income compensation. 

The first of these results presents a significant departure from consumer behavior in classical linear Fisher markets. This is because when agents only have budget constraints and a fixed level of income $w_i$, then as the price of good $j$ is increased from $p_j$ to $p_j'$ (and the prices of all other goods are kept fixed) the demand for good $j$ can never increase. To see this, we note that the initial demand for the good is either $\frac{w_i}{p_j}$ if good $j$ has the maximum \textit{bang-per-buck} ratio or it is 0. At the new price $p_j' > p_j$, the \textit{bang-per-buck} ratio for this product reduces and so the new demand will either be $\frac{w_i}{p_j'} < \frac{w_i}{p_j}$ or 0, indicating that the demand for the good $j$ can never rise with an increase in its price.

However, we now show in Proposition~\ref{prop:propGiffen} that with linear Constraints~\eqref{eq:con2} if the price of good $j$ is increased while the prices of all other goods is kept fixed then agents may purchase a larger quantity of good $j$, indicating the existence of Giffen goods. Specifically, we exhibit an example of a market with knapsack linear constraints such that an increase in the price of a good corresponds to an increase in its demand.

\begin{restatable}{proposition}{propGiffen} (Existence of Giffen Goods)
\label{prop:propGiffen}
Suppose that under the price vector $\mathbf{p}$ and income level $w_i$, the optimal consumption bundle corresponding to \acrshort{iop} is $\mathbf{x}_i$. If the price of good $j$ is increased from $p_j$ to $p_j'$, and the prices of all other goods are fixed then at the new price vector $\mathbf{p}'$ there is a market such that the consumption of good $j$ is higher, i.e., $x_{ij}' > x_{ij}$. Here $\mathbf{x}_i'$ is the optimal consumption bundle for the \acrshort{iop} at the price $\mathbf{p}'$.
\end{restatable}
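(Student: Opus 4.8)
The plan is to prove this existence statement constructively, by exhibiting one explicit market and reading off the optimal consumption before and after the price change directly from the optimal-solution characterization of Theorem~\ref{thm:iopoptimal}. Since the claim concerns only agent $i$'s \acrshort{iop}, it suffices to work with a single-agent instance. I would equip agent $i$ with a single knapsack constraint $t=\{j,k\}$, normalized as $x_{ij}+x_{ik}\le 1$, involving the candidate Giffen good $j$ and one auxiliary good $k$, and choose utilities and an initial price vector so that $j$ is the cheaper, higher bang-per-buck good and $k$ is the more expensive, higher-utility good: concretely $u_{ij}<u_{ik}$, $p_j<p_k$, and $\frac{u_{ij}}{p_j}>\frac{u_{ik}}{p_k}$, the last inequality being exactly the condition that $(u_{ij},p_j)$ lies on the lower frontier of the convex hull of $\{0,(u_{ij},p_j),(u_{ik},p_k)\}$. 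I would then fix the budget so that $p_j<w_i<p_k$.

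With this configuration Theorem~\ref{thm:iopoptimal} pins down the optimal bundle immediately: spending $w_i$ in descending order of the virtual products' bang-per-buck ratios, the agent first purchases good $j$ up to the vertex $(u_{ij},p_j)$ and then traverses the single virtual product $j\to k$ with the leftover budget. Because $p_j<w_i<p_k$, the agent stops strictly inside this segment, so both the budget constraint and the knapsack constraint bind. Solving $p_j x_{ij}+p_k x_{ik}=w_i$ together with $x_{ij}+x_{ik}=1$ yields the closed form $x_{ij}=\frac{p_k-w_i}{p_k-p_j}$.

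Next I would raise the price of good $j$ from $p_j$ to $p_j'$, holding $p_k$ and $w_i$ fixed, by an amount small enough that $p_j'<w_i$ and $\frac{u_{ij}}{p_j'}>\frac{u_{ik}}{p_k}$ both still hold, so that good $j$ remains on the lower frontier. Under these two conditions the combinatorial structure of the frontier (the order of the virtual products) is unchanged, so Theorem~\ref{thm:iopoptimal} applies verbatim at $\mathbf{p}'$ and gives $x_{ij}'=\frac{p_k-w_i}{p_k-p_j'}$. Since $p_j'>p_j$ and $p_k>w_i$, the denominator shrinks while the positive numerator is fixed, whence $x_{ij}'>x_{ij}$, which is precisely the asserted Giffen behavior. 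A single numerical instance, e.g.\ $u_{ij}=1,\ u_{ik}=2,\ p_j=1,\ p_k=4,\ w_i=2$ raised to $p_j'=\tfrac{3}{2}$, giving $x_{ij}=\tfrac{2}{3}\to x_{ij}'=\tfrac{4}{5}$, certifies the proposition.

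I expect the crux to be conceptual rather than computational: the delicate point is choosing the perturbation $p_j\to p_j'$ so that the same face of the feasible region stays active (good $j$ never leaves the lower frontier and the agent never leaves the segment $j\to k$), which is exactly what allows the identical closed form to be evaluated at both price vectors. The economic mechanism behind the construction is that the knapsack bound $x_{ij}+x_{ik}\le 1$ couples the two goods through a shared quantity budget: with both constraints binding, raising $p_j$ acts as a pure loss of real income that slides the agent back down the segment from the expensive good $k$ toward the cheaper good $j$, so the income effect dominates the substitution effect. This coupling is absent in classical Fisher markets, which is why, as noted before the proposition, demand can never rise with price there.
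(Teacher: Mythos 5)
Your proof is correct and takes essentially the same approach as the paper's: both exhibit a two-good market with a single knapsack constraint and read off the optimal bundles before and after the price increase from the characterization in Theorem~\ref{thm:iopoptimal}. The only difference is that you work with a parametric family and the closed form $x_{ij} = \frac{p_k - w_i}{p_k - p_j}$, whose monotonicity in $p_j$ gives the conclusion, whereas the paper uses the single instance $p_1 = 0.5 \to 1$, $p_2 = 3$, $w_i = 1$, where the bundle moves from $(0.8, 0.2)$ to the vertex $(1, 0)$; this is a mild generalization, not a different argument.
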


\begin{proof}
We provide an example of a market and show that if we increase the price of one good then its consumption will increase when an agent solves the \acrshort{iop} with linear Constraints~\eqref{eq:con2}. Consider a two good market, where we set $p_1 = 0.5$, $p_2 = 3$ and we have the knapsack constraint that $x_{i1}+x_{i2} \leq 1$. Further suppose $u_{i1} = 1$, $u_{i2} = 2$ and $w_i = 1$. In this case, we have that the chosen bundle for the agent is $x_1 = (0.8, 0.2)$ using the characterization of the optimal solution in Theorem~\ref{thm:iopoptimal}, as indicated by the red dot in the left of Figure~\ref{fig:giffen-goods}. Now, suppose that we increase the price of good 1 to $p_1' = 1$ and keep the price of good 2 fixed, i.e., $p_2 = 3$. In this case, we will have that the agent purchases the bundle $(1, 0)$, as in the right of Figure~\ref{fig:giffen-goods}, which demonstrates that for an increase in the price of a good the quantity demanded of the good in fact increases. Thus, good 1 is a Giffen good in this market, proving our claim.
\begin{figure}[!h]
      \centering
      \includegraphics[width=0.8\linewidth]{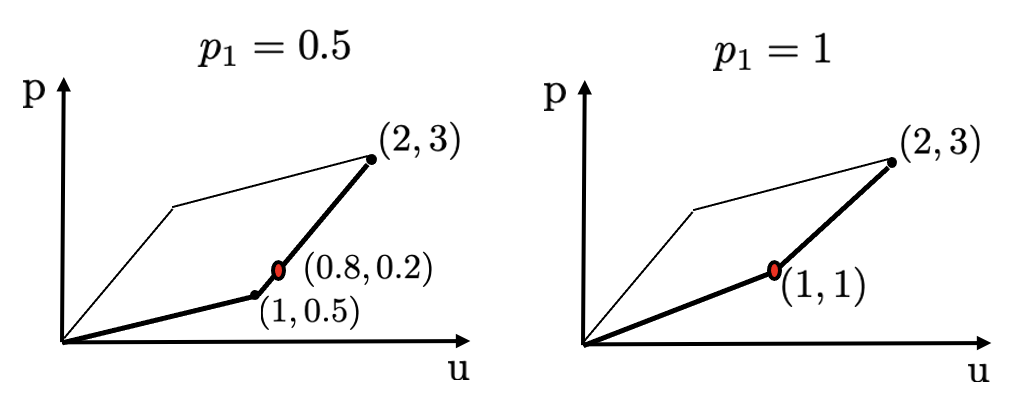}
      \caption{Optimal consumption bundle in a market with prices $p_1 = 0.5$ and $p_1 = 1$ to show that good 1 is a Giffen good. When the price of good 1 is increased from $0.5$ to $1$ then the optimal quantity purchased of good 1 increases from $0.8$ units to $1$ unit indicated by the red dots.}
      \label{fig:giffen-goods}
   \end{figure}
\end{proof}

The counter-intuitive purchasing behavior of agents, as elucidated in Proposition~\ref{prop:propGiffen}, further establishes that the setting of Fisher markets with linear Constraints~\eqref{eq:con2} fundamentally differs from classical Fisher markets. In particular, the standard inverse relationship between the price and quantity of a good demanded no longer applies for Fisher markets with linear constraints.


\begin{remark}
In economic theory, the existence of Giffen goods has been largely attributed to the ``income effect''. This is because when the price of good $j$ rises then the ``level of real wealth'' of a consumer falls, which is in particular true for poorer consumers \cite{giffen-behavior}. Thus, if good $j$ is an \textit{inferior} good then its demand by poorer consumers will increase with this relative ``fall'' in the income level of the consumer \cite{kreps-book}. We note that in the case of the \acrshort{iop} the ``income-effect'', which causes Giffen behavior, is induced because of the presence of additional linear constraints.
\end{remark}

We now turn to the second result of this section, wherein we isolate the effect of a change in the relative prices from the ``income effect'' by compensating the agent for an increase in the price of good $j$. This is so that the consumer's level of ``real income'' is the same as that before the price change. That is, the original bundle of goods purchased by the agent prior to the price change is affordable at the new prices. While Giffen behavior was observed at a fixed original income level, we now show that with income compensation Giffen behavior ceases to persist for agents with knapsack linear Constraints~\eqref{eq:con2} where each good belongs to at most one knapsack. A formal statement of this result is provided in Proposition~\ref{prop:propCompensation}.

\begin{restatable}{proposition}{propCompensation} (Disappearance of Giffen Behavior under Income Compensation)
\label{prop:propCompensation}
Suppose that agents have knapsack linear Constraints~\eqref{eq:con2} where each good belongs to at most one linear Constraint~\eqref{eq:con2}. Let the optimal consumption bundle at the price vector $\mathbf{p} \in \mathbb{R}_{\geq 0}^m$ and income level $w_i$ corresponding to \acrshort{iop} be $\mathbf{x}_i$. Further suppose that at the price $\mathbf{p}' \in \mathbb{R}_{\geq 0}^m$ and income $w_i' = \mathbf{p}' \cdot \mathbf{x}_i$, where $\mathbf{p}'$ is $\mathbf{p}$ except $p_j' > p_j$, the optimal consumption bundle is $\mathbf{x}_i'$. Then, $x_{ij}' \leq x_{ij}$.
\end{restatable}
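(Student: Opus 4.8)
The plan is to run a revealed-preference (Slutsky) argument, using the income compensation to make the old bundle feasible at the new prices, and then to close the argument with the explicit purchasing characterization of Theorem~\ref{thm:iopoptimal} to handle the degenerate cases. First I would observe that the knapsack polytope $P = \{\mathbf{x} \ge \mathbf{0} : A^{(i)} \mathbf{x} \le \mathbf{b}_i\}$ defining the linear Constraints~\eqref{eq:con2} does not depend on the price vector, so both $\mathbf{x}_i$ and $\mathbf{x}_i'$ lie in the same set $P$, and only the budget half-space changes between the two situations. Since the compensation sets $w_i' = \mathbf{p}' \cdot \mathbf{x}_i$, the old bundle $\mathbf{x}_i$ remains feasible for the \acrshort{iop} at the new prices (it lies in $P$ and exactly exhausts the new budget), so optimality of $\mathbf{x}_i'$ at $(\mathbf{p}', w_i')$ gives $u_i(\mathbf{x}_i') \ge u_i(\mathbf{x}_i)$.

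The second step is to compare the cost of the two bundles at the old prices. I would establish the revealed-preference inequality $\mathbf{p} \cdot \mathbf{x}_i' \ge \mathbf{p} \cdot \mathbf{x}_i$: were $\mathbf{x}_i'$ strictly cheaper than $\mathbf{x}_i$ at $\mathbf{p}$, it would be feasible for the \acrshort{iop} at $(\mathbf{p}, w_i)$, and optimality of $\mathbf{x}_i$ together with the utility inequality above would force the two bundles to be utility-equivalent, a situation I rule out (or absorb into the tie-breaking convention) using the characterization below. Granting this, the result follows by combining it with the new-budget inequality $\mathbf{p}' \cdot \mathbf{x}_i' \le w_i' = \mathbf{p}' \cdot \mathbf{x}_i$. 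Subtracting, and using that $\mathbf{p}' - \mathbf{p}$ is supported on the single coordinate $j$ with $p_j' - p_j > 0$, yields
\[
(p_j' - p_j)\, x_{ij}' \;=\; (\mathbf{p}' - \mathbf{p}) \cdot \mathbf{x}_i' \;\le\; (\mathbf{p}' - \mathbf{p}) \cdot \mathbf{x}_i \;=\; (p_j' - p_j)\, x_{ij},
\]
and dividing by $p_j' - p_j > 0$ gives the claim $x_{ij}' \le x_{ij}$.

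The hard part is the revealed-preference step $\mathbf{p} \cdot \mathbf{x}_i' \ge \mathbf{p} \cdot \mathbf{x}_i$, because the budget need not bind and linear utilities admit ties, so a purely abstract argument only yields indifference rather than a strict contradiction. This is exactly where the hypothesis that each good lies in at most one knapsack lets me invoke Theorem~\ref{thm:iopoptimal}: the optimal bundle is obtained by greedily buying virtual products in descending order of bang-per-buck until the budget is spent. Raising $p_j$ moves the point $(u_{ij}, p_j)$ of good $j$ straight up in the price--utility plane and leaves every virtual product in the other knapsacks untouched, so the bang-per-buck of every virtual product containing good $j$ weakly decreases while all others are unchanged. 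Consequently good $j$ is pushed weakly later in the greedy purchase order, and since the compensation keeps the old bundle affordable, the amount of good $j$ bought cannot increase. I expect the main bookkeeping obstacle to be tracking how a single virtual product containing $j$ contributes simultaneously to $x_{ij}$ and to the paired good's allocation as the lower frontier of the affected knapsack changes; the revealed-preference inequality above is precisely what lets me sidestep this coordinate-level bookkeeping and conclude directly.
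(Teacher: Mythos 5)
Your skeleton (compensation makes the old bundle feasible at the new prices; then a Slutsky-type subtraction of the two budget inequalities) is genuinely different from the paper's proof, which is a case-by-case geometric analysis of how the lower frontier of each knapsack's feasible set deforms when $p_j$ rises, and your first and third steps are correct. The genuine gap is the middle step, the revealed-preference inequality $\mathbf{p}\cdot\mathbf{x}_i' \geq \mathbf{p}\cdot\mathbf{x}_i$, and it is not a corner case that can be ``ruled out or absorbed into the tie-breaking convention'': for some admissible selections of optima the inequality is simply false, and in exactly those selections the proposition's conclusion itself fails. Take two goods in one knapsack ($x_{i1}+x_{i2}\leq 1$), $u_{i1}=u_{i2}=1$, $\mathbf{p}=(1,2)$, $w_i=10$. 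Every bundle with $x_{i1}+x_{i2}=1$ is optimal for the \acrshort{iop}, so $\mathbf{x}_i=(0,1)$ is a legitimate choice. Raise $p_1$ to $3/2$, so $w_i'=\mathbf{p}'\cdot\mathbf{x}_i=2$; every bundle with $x_{i1}+x_{i2}=1$ costs at most $2$ at the new prices and is again optimal, so $\mathbf{x}_i'=(1,0)$ is a legitimate choice, and it satisfies $\mathbf{p}\cdot\mathbf{x}_i'=1<2=\mathbf{p}\cdot\mathbf{x}_i$ and $x_{i1}'=1>0=x_{i1}$. Thus in the tie/satiation regime --- budget slack at the optimum, which is exactly the regime where the knapsack constraints rather than the budget bind --- your step two cannot be established by any abstract argument; the proposition is only true relative to the purchasing rule of Theorem~\ref{thm:iopoptimal}, under which the agent at the old prices would have bought $(1,0)$, not $(0,1)$.

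This also exposes a circularity in your closing paragraph. You defer the tie case to ``the characterization below,'' and then claim the revealed-preference inequality ``lets me sidestep this coordinate-level bookkeeping.'' But the bookkeeping is precisely what would prove the tie case, and the tie case is precisely what is needed to prove the revealed-preference inequality; neither can lean on the other. The sentence ``good $j$ is pushed weakly later in the greedy purchase order, and since the compensation keeps the old bundle affordable, the amount of good $j$ bought cannot increase'' is not a lemma you have proved --- it is essentially the proposition itself, and making it rigorous (tracking whether $x_{ij}<1$ or $x_{ij}=1$, whether $j$ shares its frontier segment with another good, and whether the ordering of bang-per-buck ratios across knapsacks changes after the price increase) is exactly what the paper's proof in~\ref{prop-Compensation} does. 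A further warning sign: your abstract argument nowhere uses the hypothesis that each good belongs to at most one knapsack; that hypothesis is consumed entirely by the case you left open.
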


We refer to~\ref{prop-Compensation} for a detailed proof of Proposition~\ref{prop:propCompensation}. Intuitively, this proposition states that when the change in the ``level of real wealth'' arising from an increase in the price of good $j$ is compensated for then the agent will no longer consume more of good $j$ at the new price. This result aligns with classical economic theory, wherein income compensation for agents has traditionally accompanied the disappearance of Giffen behavior \cite{kreps-book}.

Having established properties of the \acrshort{iop}, we now turn to the problem of deriving equilibrium prices in the market with additional linear constraints.


\section{Social Optimization Problem with Additional Linear Constraints} \label{Impossibility}

A desirable property of Fisher markets is that the market equilibrium outcome maximizes a social objective while individuals receive their most favoured bundle of goods given the set prices. As this property of Fisher markets holds when we consider budget and capacity constraints, a natural question to ask is whether we can still achieve this property under the addition of linear Constraints~\eqref{eq:con2}. We start by showing that with the addition of these constraints and under no further modifications to Fisher markets, market clearing conditions fail to hold. To do this we define the social optimization problem (\acrshort{sop-l}) with additional linear constraints in Section~\ref{DefSocOpt} and compare its KKT conditions to that of \acrshort{iop} in Section~\ref{KKTImposs}.

We then address this negative result by defining a budget perturbed social optimization problem (\acrshort{bp-sop}) in Section~\ref{ReformSocOpt} in which we adjust the budgets of the agents. Then, in Section~\ref{SocOptKKT}, we show how to choose these budget perturbations to guarantee the equivalence of its KKT conditions with that of the \acrshort{iop} when prices are set through the dual variables of \acrshort{bp-sop}s capacity constraints. We then investigate the equilibrium computation properties in the special case of homogeneous linear constraints in Section~\ref{homogeneous-equilibrium-computation} and establish that the optimal allocation vectors corresponding to the solution of the budget perturbed social optimization problem guarantees a Pareto efficient outcome in Section~\ref{pareto-efficiency}. Finally, we provide an economic interpretation of the budget perturbed formulation in Section~\ref{InterpSocOpt}.


\subsection{A Social Optimization Problem with Additional Constraints} \label{DefSocOpt}

We first define the natural extension of the Fisher market social optimization Problem~\eqref{eq:FisherSocOpt}-\eqref{eq:FisherSocOpt3} with the addition of linear Constraints~\eqref{eq:ImpossSocOpt2}, giving the problem \acrshort{sop-l}:

\begin{maxi!}|s|[2]                   
    {\mathbf{x}_i \in \mathbb{R}^m, \forall i \in [n]}                               
    {u(\mathbf{x}_1, ..., \mathbf{x}_n) = \sum_{i=1}^{n} w_i \log \left( \sum_{j=1}^{m} u_{ij}x_{ij} \right)  \label{eq:ImpossSocOpt}}   
    {\label{eq:ImpossExample1}}             
    {}                                
    \addConstraint{\sum_{i=1}^{n} x_{ij}}{ = \Bar{s}_j, \forall j \in [m] \label{eq:ImpossSocOpt1}}    
    \addConstraint{A_t^{(i)} \mathbf{x}_i}{\leq b_{it}, \forall t \in T_i, \forall i \in [n] \label{eq:ImpossSocOpt2}}
    \addConstraint{x_{ij}}{\geq 0, \forall i, j \label{eq:ImpossSocOpt3}}  
\end{maxi!}
with capacity Constraints~\eqref{eq:ImpossSocOpt1}, additional linear Constraints~\eqref{eq:ImpossSocOpt2} and non-negativity Constraints~\eqref{eq:ImpossSocOpt3}.

\subsection{A KKT Comparison of \acrshort{iop} and \acrshort{sop-l}} \label{KKTImposs}

In classical Fisher Markets, the equilibrium price corresponds to the dual variables of the capacity constraints of the social optimization problem, and at this equilibrium, the KKT conditions of the individual and social optimization problems are equivalent \cite{DevnaurPrimalDual}. In contrast to this result, we show in Theorem~\ref{thm:imposs} that for Fisher markets with additional linear constraints market clearing conditions may fail
to exist. To prove this claim, we show that the KKT conditions of the individual and social optimization problems with
additional linear constraints (\acrshort{iop} and \acrshort{sop-l} respectively) are not equivalent.

\begin{restatable}{theorem}{imposs}
\label{thm:imposs} (Insufficiency of \acrshort{sop-l} in Computing Market Equilibria)
The price vector $\mathbf{p} \in \mathbb{R}^{m}$ corresponding to the optimal dual variables of the capacity constraint~\eqref{eq:ImpossSocOpt1} of \acrshort{sop-l} may not be an equilibrium price vector, i.e., the market clearing KKT conditions of \acrshort{iop} and \acrshort{sop-l} may not be equivalent.
\end{restatable}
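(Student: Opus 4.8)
The plan is to write out the first-order KKT conditions of both the individual problem \acrshort{iop} (Problem~\eqref{eq:eq1}--\eqref{eq:con3}) and the social problem \acrshort{sop-l} (Problem~\eqref{eq:ImpossSocOpt}--\eqref{eq:ImpossSocOpt3}), set the price vector $\mathbf{p}$ equal to the optimal dual variables of the capacity Constraints~\eqref{eq:ImpossSocOpt1}, and then ask whether the \acrshort{sop-l} primal optimizer $\mathbf{x}^*$ can simultaneously satisfy the \acrshort{iop} optimality conditions and the market-clearing requirements of Definition~\ref{def:market-eq-price}. Denoting by $\lambda_{it} \geq 0$ the \acrshort{sop-l} multipliers on the linear Constraints~\eqref{eq:ImpossSocOpt2} and writing $u_i(\mathbf{x}_i^*) = \sum_j u_{ij} x_{ij}^*$, the stationarity condition of \acrshort{sop-l} reads, for every purchased good ($x_{ij}^* > 0$), $\frac{w_i u_{ij}}{u_i(\mathbf{x}_i^*)} = p_j + \sum_{t \in T_i} \lambda_{it} A_{tj}^{(i)}$, with the reverse inequality ($\leq$) for goods with $x_{ij}^*=0$.

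First I would show that these conditions are in fact \emph{consistent} with \acrshort{iop} stationarity: setting the budget multiplier $\beta_i = u_i(\mathbf{x}_i^*)/w_i$ and the linear-constraint multipliers $\mu_{it} = \beta_i \lambda_{it}$ reproduces exactly the \acrshort{iop} stationarity $u_{ij} \leq \beta_i p_j + \sum_{t \in T_i} \mu_{it} A_{tj}^{(i)}$ with equality on the support of $\mathbf{x}_i^*$. Hence the obstruction cannot lie in the stationarity conditions; it must lie in the budget constraint. Multiplying the \acrshort{sop-l} stationarity equality by $x_{ij}^*$, summing over $j$, and invoking complementary slackness $\lambda_{it}(A_t^{(i)} \mathbf{x}_i^* - b_{it}) = 0$ then yields the budget identity $\sum_{j} p_j x_{ij}^* = w_i - \sum_{t \in T_i} \lambda_{it} b_{it}$.

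The crux of the argument is this budget-leakage term $\sum_{t \in T_i} \lambda_{it} b_{it}$. Whenever some non-homogeneous linear constraint ($b_{it}>0$) binds at the \acrshort{sop-l} optimum with $\lambda_{it}>0$, the identity forces $\sum_j p_j x_{ij}^* < w_i$, so agent $i$ does not spend her entire budget. Since $\beta_i = u_i(\mathbf{x}_i^*)/w_i > 0$, the \acrshort{iop} complementary-slackness condition $\beta_i(\mathbf{p}^T\mathbf{x}_i^* - w_i)=0$ simultaneously demands $\mathbf{p}^T \mathbf{x}_i^* = w_i$, a direct contradiction. Equivalently, at the prices $\mathbf{p}$ the \acrshort{iop} best response would redirect the unspent budget toward an unconstrained good of positive utility and strictly improve on $\mathbf{x}_i^*$, so $\mathbf{x}^*$ is not an \acrshort{iop} optimizer and the budget-clearing condition of Definition~\ref{def:market-eq-price} fails. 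To turn this into the existence statement of the theorem I would exhibit a concrete small market (a few agents with a knapsack Constraint~\eqref{eq:con2} having $b_{it}>0$) in which the \acrshort{sop-l} optimum provably binds such a constraint with a strictly positive multiplier, thereby instantiating the leakage and confirming that $\mathbf{p}$ is not an equilibrium price vector.

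I expect the main obstacle to be twofold. First, care is needed to guarantee that the exhibited market actually has a binding non-homogeneous constraint with $\lambda_{it}>0$ at the \acrshort{sop-l} optimum, rather than a slack one for which the leakage vanishes; this amounts to choosing capacities, budgets, and utilities so that the knapsack constraint is the active bottleneck. Second, one must argue cleanly that budget non-exhaustion genuinely precludes equilibrium under Definition~\ref{def:market-eq-price} --- either because the \acrshort{sop-l} allocation fails \acrshort{iop} optimality (when an unconstrained positive-utility good exists, so the leftover budget can be spent), or because every \acrshort{iop}-optimal response at $\mathbf{p}$ leaves the budget unused, in both cases violating $\mathbf{p}^T\mathbf{x}_i^* = w_i$. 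Handling the \acrshort{iop} inequality on the zero-allocation goods and the sign bookkeeping of the multipliers is routine once the budget identity is in hand.
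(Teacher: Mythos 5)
Your proposal is correct and follows essentially the same route as the paper: derive the KKT conditions of \acrshort{sop-l}, multiply the stationarity condition by $x_{ij}^*$, sum over $j$, and use complementary slackness on the linear Constraints~\eqref{eq:ImpossSocOpt2} to obtain the budget-leakage identity $\sum_{j} p_j x_{ij}^* = w_i - \sum_{t \in T_i} \lambda_{it} b_{it}$, so that whenever a non-homogeneous constraint binds with a strictly positive multiplier the agent's budget is not exhausted and the market-clearing condition of Definition~\ref{def:market-eq-price} fails. The only differences are refinements in your favor: your explicit check that the stationarity conditions of \acrshort{iop} and \acrshort{sop-l} are mutually consistent (which localizes the obstruction entirely in the budget condition, a point the paper only makes implicitly in its proof of Theorem~\ref{thm:thm2}), and your plan to exhibit a concrete market in which some $\lambda_{it}>0$ with $b_{it}>0$, whereas the paper settles for an informal sensitivity argument that $r_{it}=0$ for all non-homogeneous constraints cannot hold in general.
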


\begin{hproof}
We derive the first order necessary and sufficient KKT conditions of the social optimization problem \acrshort{sop-l} and show that under the optimal price vector corresponding to the dual variables of the capacity constraint, the budgets of the agents will not be completely used up. As a result, a market clearing equilibrium cannot hold.
\end{hproof}
For a detailed proof of this claim see~\ref{Appendix1}. This result establishes that the prices a social planner would set through the solution of \acrshort{sop-l} may not clear the market, as there would be agents with unused budgets.

\subsection{A Budget Perturbed Social Optimization Problem} \label{ReformSocOpt}

We now address this negative result through a reformulated social optimization problem in which we modify the budget of agents through a variable $\lambda_i$ for each agent $i$. This variable is introduced because of the additional linear constraints not present in Fisher markets and its exact value is derived in the KKT analysis in Section~\ref{SocOptKKT}. We further provide an economic interpretation of these perturbations in Section~\ref{InterpSocOpt}. The Budget Perturbed Social Optimization Problem (\acrshort{bp-sop}) is:

\begin{maxi!}|s|[2]                   
    {\mathbf{x}_i \in \mathbb{R}^m, \forall i \in [n]}                               
    {u(\mathbf{x}_1, ..., \mathbf{x}_n) = \sum_{i=1}^{n} (w_i + \lambda_i) \log \left(\sum_{j = 1}^{m} u_{ij}x_{ij} \right)
    \label{eq:SocOpt}}   
    {\label{eq:Example01}}             
    {}                                
    \addConstraint{\sum_{i=1}^{n} x_{ij}}{ = \Bar{s}_j, \forall j \label{eq:SocOpt1}}    
    \addConstraint{A_t^{(i)} \mathbf{x}_i}{\leq b_{it}, \forall t \in T_i, \forall i \in [n] \label{eq:SocOpt2}}
    \addConstraint{x_{ij}}{\geq 0, \forall i, j \label{eq:SocOpt3}}  
\end{maxi!}
with capacity Constraints~\eqref{eq:SocOpt1}, additional linear Constraints~\eqref{eq:SocOpt2} and non-negativity Constraints~\eqref{eq:SocOpt3}. 

\subsection{Deriving Perturbation Constants Using KKT Conditions} \label{SocOptKKT}

We now show that under an appropriate choice of the $\lambda_i$ perturbations for all agents $i$, the KKT conditions of \acrshort{bp-sop} and \acrshort{iop} are equivalent when prices are set through the dual variables of the capacity Constraints~\eqref{eq:SocOpt1}. Observing that for any choice of $\lambda = (\lambda_1, ..., \lambda_n) $, \acrshort{bp-sop} remains a convex optimization problem, it is necessary and sufficient to verify the first order KKT conditions for \acrshort{bp-sop} and \acrshort{iop}. To establish the first order KKT equivalence between the two problems, we define $r_{it}$ as the dual variable for the linear Constraints~\eqref{eq:SocOpt2} associated with constraint $t \in T_i$ for each agent $i$. Further, we define a \textit{fixed point} of the problem \acrshort{bp-sop} as one when $\lambda_i = \sum_{t \in T_i} r_{it} b_{it}$. With this definition, we now present the main result of this work, which establishes that the convex program \acrshort{bp-sop} computes the market equilibrium. Theorem~\ref{thm:thm2} states in one direction that the market clearing KKT conditions of \acrshort{bp-sop} are equivalent to that of the \acrshort{iop} if $\lambda_i = \sum_{t \in T_i} r_{it} b_{it}$. Furthermore, it also states the converse that any equilibrium price in the market for \acrshort{bp-sop} must correspond to a \textit{fixed point}, i.e., $\lambda_i = \sum_{t \in T_i} r_{it} b_{it}$, establishing a one-to-one correspondence between the equilibrium price vector and a fixed point solution of \acrshort{bp-sop}.


\begin{restatable}{theorem}{thmm}
\label{thm:thm2} (Convex Program to Compute Market Equilibrium)
Suppose that for each agent $i$ there is a good $j$ that is not associated with any linear Constraints~\eqref{eq:con2} and $u_{ij}>0$. Then the optimal solution of \acrshort{iop} is equivalent to that of \acrshort{bp-sop} if and only if the price vector $\mathbf{p} \in \mathbb{R}^{m}$ in the \acrshort{iop} is set based on the dual variables of the capacity Constraints~\eqref{eq:SocOpt1} at a fixed point solution of \acrshort{bp-sop}, i.e., $\lambda_i = \sum_{t \in T_i} r_{it} b_{it}$ for all $i$, where $r_{it}$ is the optimal dual multiplier of the constraint $A_t^{(i)} \mathbf{x}_i \leq b_{it}$ in \acrshort{bp-sop}.
\end{restatable}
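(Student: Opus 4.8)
The plan is to prove the equivalence by writing out the first-order KKT systems of \acrshort{iop} and \acrshort{bp-sop} and showing that the \emph{fixed point} condition $\lambda_i=\sum_{t\in T_i} r_{it}b_{it}$ is exactly the condition that forces each agent's budget to be exhausted under the price vector $\p$ read off from the capacity duals. Since \acrshort{iop} is a linear program and \acrshort{bp-sop} maximizes a concave, agent-separable objective over a convex feasible region, in both problems the KKT conditions are necessary and sufficient for optimality; it therefore suffices to relate the two KKT systems. Note also that the coupling between agents in \acrshort{bp-sop} occurs only through the capacity Constraints~\eqref{eq:SocOpt1}, so once their multiplier $\p$ is fixed the problem decouples into one convex subproblem per agent, which is what lets us match it against each agent's \acrshort{iop}.

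First I would record the stationarity conditions. For \acrshort{iop}, with budget multiplier $\beta_i\geq0$, linear-constraint multipliers $\hat r_{it}\geq0$, and non-negativity multipliers $\mu_{ij}\geq0$, stationarity reads $u_{ij}\leq \beta_i p_j+\sum_{t} \hat r_{it}A^{(i)}_{tj}$, with equality whenever $x_{ij}>0$. For \acrshort{bp-sop}, letting $p_j$ be the multiplier of the capacity Constraint~\eqref{eq:SocOpt1} (which is precisely the price, and is free in sign because the constraint is an equality), $r_{it}$ the multiplier of Constraint~\eqref{eq:SocOpt2}, and $u_i^\ast=\sum_j u_{ij}x_{ij}$ the optimal utility, stationarity reads $\frac{(w_i+\lambda_i)u_{ij}}{u_i^\ast}\leq p_j+\sum_t r_{it}A^{(i)}_{tj}$, again with equality on the support of $\x_i$. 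Multiplying the \acrshort{bp-sop} condition through by $\beta_i:=u_i^\ast/(w_i+\lambda_i)$ reduces its left-hand side to $u_{ij}$ and matches \acrshort{iop} term by term under the dual correspondence $\hat r_{it}=\beta_i r_{it}$, so the two stationarity systems become identical provided $\beta_i>0$; moreover $\mu_{ij}=\beta_i\bigl(p_j+\sum_t r_{it}A^{(i)}_{tj}\bigr)-u_{ij}\geq0$ follows automatically from the \acrshort{bp-sop} inequality. The hypothesis that each agent owns a good $j$ with $u_{ij}>0$ lying outside every linear constraint is what guarantees $u_i^\ast>0$ (the objective diverges to $-\infty$ if any agent's utility hits zero, and buying this good keeps utility positive), hence $\beta_i>0$ and the rescaling is legitimate.

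The heart of the argument is a budget-accounting identity from complementary slackness. Taking the \acrshort{bp-sop} stationarity equalities on the support of $\x_i$, multiplying by $x_{ij}$, and summing over $j$ gives
\[
\p^\top \x_i \;=\; (w_i+\lambda_i)-\sum_{t\in T_i} r_{it}\,\bigl(A^{(i)}_t\x_i\bigr),
\]
and the complementary slackness relations $r_{it}\bigl(A^{(i)}_t\x_i-b_{it}\bigr)=0$ collapse the last term to $\sum_t r_{it}b_{it}$, so that $\p^\top\x_i=(w_i+\lambda_i)-\sum_t r_{it}b_{it}$. Thus the budget is exactly exhausted, $\p^\top\x_i=w_i$, if and only if $\lambda_i=\sum_t r_{it}b_{it}$, i.e. if and only if $\lambda$ is a fixed point. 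This single identity powers both directions. For the ``if'' direction, at a fixed point the \acrshort{bp-sop} optimizer is feasible for \acrshort{iop} (it already satisfies the linear and non-negativity constraints and now meets the budget with equality), and the rescaled multipliers $(\beta_i,\hat r_{it},\mu_{ij})$ certify its optimality for \acrshort{iop}. For the ``only if'' direction, if the \acrshort{bp-sop} optimizer also solves \acrshort{iop} at an equilibrium price then the budget is fully spent, and the identity forces the fixed point condition. The capacity Constraints~\eqref{eq:SocOpt1} already deliver market clearing on the supply side, so budget exhaustion is the sole remaining equilibrium requirement, which is exactly what the fixed point encodes.

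The step I expect to be most delicate is the rigorous justification of the rescaling and support-matching rather than the identity itself. One must confirm that $w_i+\lambda_i>0$ (immediate at a fixed point, where $\lambda_i=\sum_t r_{it}b_{it}\geq0$) and that $u_i^\ast>0$, so that $\beta_i$ is well defined and strictly positive; one must check the non-negativity multipliers $\mu_{ij}$ are genuinely non-negative and respect complementary slackness with the same active set as \acrshort{bp-sop}; and one must verify that the single shared price vector $\p$ arising from the coupling capacity constraints is consistent with each agent's \emph{separate} budget multiplier $\beta_i$ in the decoupled \acrshort{iop}s. All of these positivity facts trace back to the standing assumption on an unconstrained good of positive utility, which is precisely why that hypothesis appears in the statement.
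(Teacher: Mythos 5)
Your proposal is correct and takes essentially the same approach as the paper's proof: both compare the first-order KKT systems of \acrshort{iop} and \acrshort{bp-sop}, pass between the two dual systems via the rescaling $y_i = \frac{\sum_j u_{ij}x_{ij}}{w_i+\lambda_i}$ (so $\tilde r_{it} = y_i r_{it}$), and hinge on the summed complementary-slackness identity $\mathbf{p}^T\mathbf{x}_i = (w_i+\lambda_i) - \sum_{t\in T_i} r_{it}b_{it}$, which is the paper's Equation~\eqref{eq:CompSlackSoc2}, so that budget exhaustion is equivalent to the fixed-point condition, with the unconstrained positive-utility good supplying the positivity ($y_i>0$, $u_i^*>0$) that both arguments need. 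The only difference is bookkeeping: the paper's forward direction starts from an \acrshort{iop} market equilibrium and constructs \acrshort{bp-sop} multipliers by the inverse scaling $r_{it}=\tilde r_{it}/y_i$, whereas you anchor the ``only if'' step on the \acrshort{bp-sop} side; the substance coincides.
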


\begin{hproof}
We first derive the necessary and sufficient first order KKT conditions for \acrshort{bp-sop} and \acrshort{iop}. The forward direction of our claim follows from considering a market equilibrium of the \acrshort{iop} and using this to show that $\lambda_i = \sum_{t \in T_i} r_{it} b_{it}$, for each agent $i$ is the \textit{fixed point} of \acrshort{bp-sop}. For the converse, we can show that if we set $\lambda_i = \sum_{t \in T_i} r_{it} b_{it}$, for each agent $i$, then each agent completely uses up their budget, while all the goods are sold to capacity.
\end{hproof}

We refer to~\ref{proof-thm2} for the complete derivation of Theorem~\ref{thm:thm2}. This result generalizes Eisenberg and Gale's convex programming methodology for classical Fisher markets to the setting of Fisher markets with additional linear constraints. Note that the technical assumption that there is a good $j$ that is not associated with any linear Constraints~\eqref{eq:con2} is the same as condition (ii) in Theorem~\ref{thm:market-eq} to establish the existence of a market equilibrium. The necessity of this assumption arises since the additional constraints may preclude agents from spending their entire budget, and thus prevent the market from clearing. As noted in Section~\ref{guarantee-existence}, we reiterate that this technical assumption is not very demanding, since agents can be allowed to keep unused budget, which can be treated as a good.


We further note that the above result does not require the linearity of the utility functions and in fact can be generalized to any homogeneous degree one concave utility function, including CES utilities. We formalize this generalization through the following corollary of Theorem~\ref{thm:thm2}.

\begin{restatable}{corollary}{corHomegeneousUtils}
\label{cor:corHomogeneousUtils} (Convex Program to Compute Market Equilibrium for Homogeneous Degree One Utilities)
Consider the problem \acrshort{iop'}, which has the same same constraints as \acrshort{iop} but its objective is replaced by a concave homogeneous degree one utility function $u_i(\mathbf{x}_i)$. Suppose that for each agent $i$ there is a good $j$ that is not associated with any linear Constraints~\eqref{eq:con2} and the agent's utility is strictly positive and increasing in the amount of good $j$ consumed. Then the optimal solution of \acrshort{iop'} is equivalent to that of \acrshort{bp-sop'} if and only if the price vector $\mathbf{p} \in \mathbb{R}^{m}$ in the \acrshort{iop'} is set based on the dual variables of the capacity Constraints~\eqref{eq:SocOpt1} at a fixed point solution of \acrshort{bp-sop'}, i.e., $\lambda_i = \sum_{t \in T_i} r_{it} b_{it}$ for all $i$, where $r_{it}$ is the optimal dual multiplier of the constraint $A_t^{(i)} \mathbf{x}_i \leq b_{it}$ in \acrshort{bp-sop'}. Here \acrshort{bp-sop'} has the same constraints as \acrshort{bp-sop} but its concave objective is $\sum_{i=1}^{n} (w_i + \lambda_i) \log \left(u_i(\mathbf{x}_i) \right)$, where $u_i(\mathbf{x}_i)$ is a concave homogeneous degree one utility function.
\end{restatable}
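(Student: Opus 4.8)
The plan is to follow the same route as the proof of Theorem~\ref{thm:thm2}, replacing every step that relied on the explicit linear form $u_i(\x_i)=\sum_j u_{ij}x_{ij}$ by the corresponding property of a concave function that is homogeneous of degree one. The first thing I would check is that \acrshort{bp-sop'} remains a convex program for any fixed $\lambda=(\lambda_1,\dots,\lambda_n)$: since each $u_i$ is concave and strictly positive on the feasible set and $\log$ is concave and increasing, each term $\log(u_i(\x_i))$ is concave, so the first-order KKT conditions are again necessary and sufficient for both \acrshort{iop'} and \acrshort{bp-sop'}. The whole argument therefore reduces to matching these two KKT systems and then reading off the budget- and market-clearing conditions of Definition~\ref{def:market-eq-price}.

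First I would write down the (super)gradient stationarity conditions. Letting $\beta_i,\gamma_{it},\mu_{ij}$ denote the multipliers of the budget, linear, and non-negativity constraints in \acrshort{iop'}, and $p_j,r_{it},\nu_{ij}$ those of the capacity, linear, and non-negativity constraints in \acrshort{bp-sop'}, stationarity reads, for some supergradients $g_i\in\partial u_i(\x_i)$,
\[
g_{ij}=\beta_i p_j+\textstyle\sum_{t}\gamma_{it}A^{(i)}_{tj}-\mu_{ij},\qquad
\frac{w_i+\lambda_i}{u_i(\x_i)}\,g_{ij}=p_j+\textstyle\sum_{t}r_{it}A^{(i)}_{tj}-\nu_{ij}.
\]
With the price faced in \acrshort{iop'} set equal to the capacity dual $p_j$ of \acrshort{bp-sop'}, these systems coincide under the identification $\beta_i=u_i(\x_i)/(w_i+\lambda_i)$, $\gamma_{it}=\beta_i r_{it}$, and $\mu_{ij}=\beta_i\nu_{ij}$. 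The strict positivity of $u_i$ guarantees that $\beta_i$ is finite and strictly positive, so this rescaling is well defined, and since $\beta_i>0$ it preserves the signs of all multipliers and all complementary-slackness relations.

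The key step, and the one place where homogeneity of degree one is indispensable, is the budget-clearing identity. For a concave function homogeneous of degree one we have $u_i(\bz)=0$, and applying the supergradient inequality at $\mathbf{y}=\bz$ and at $\mathbf{y}=2\x_i$ yields $g_i\cdot\x_i=u_i(\x_i)$ for every $g_i\in\partial u_i(\x_i)$ — the subgradient form of Euler's theorem. I would take the \acrshort{bp-sop'} stationarity condition, take the inner product with $\x_i$ and sum over $j$: by this identity the left-hand side collapses to $(w_i+\lambda_i)\,(g_i\cdot\x_i)/u_i(\x_i)=w_i+\lambda_i$, the term $\sum_j\nu_{ij}x_{ij}$ vanishes by complementary slackness, and the linear-constraint term equals $\sum_t r_{it}(A^{(i)}_t\x_i)=\sum_t r_{it}b_{it}$, again by complementary slackness. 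Imposing the fixed point $\lambda_i=\sum_t r_{it}b_{it}$ then reduces the identity to $\p^T\x_i=w_i$, i.e., every agent exhausts its unperturbed budget. In the linear case this relation was automatic; here Euler's theorem is precisely what supplies it.

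It then remains to assemble the two directions. For the forward direction I would start from an optimum of \acrshort{bp-sop'} at a fixed point, use the matched multipliers above to certify that the same allocation solves each \acrshort{iop'} at prices $p_j$, and combine the budget-clearing identity with the capacity Constraints~\eqref{eq:SocOpt1} to verify both clearing conditions; the assumption that each agent has a strictly desirable unconstrained good is what forces the budget constraint to bind (otherwise the agent could always raise utility by buying more of that good), ensuring $\beta_i>0$ and a genuine equivalence. For the converse, given an allocation simultaneously optimal for \acrshort{iop'} and \acrshort{bp-sop'}, I would run the inner-product computation in reverse: market clearing forces $\p^T\x_i=w_i$, and comparing this with the Euler identity pins down $\lambda_i=\sum_t r_{it}b_{it}$, so every equilibrium corresponds to a fixed point. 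The main obstacle I anticipate is purely technical — handling possible non-differentiability of $u_i$ (as with Leontief utilities) cleanly through supergradients rather than gradients, and confirming that both the supergradient Euler identity and the rescaling $\beta_i=u_i(\x_i)/(w_i+\lambda_i)$ remain valid and sign-consistent in that generality; once this is settled, every remaining step is a verbatim repetition of the proof of Theorem~\ref{thm:thm2}.
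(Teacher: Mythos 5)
Your proposal is correct and follows essentially the paper's own route: the paper omits the proof of Corollary~\ref{cor:corHomogeneousUtils}, remarking only that it repeats the proof of Theorem~\ref{thm:thm2} with the linearity property replaced by Euler's identity $u_i(\mathbf{x}_i)=\nabla u_i(\mathbf{x}_i)\cdot\mathbf{x}_i$, which is exactly the substitution you carry out (same KKT matching, same multiplier rescaling, same fixed-point budget identity). Your supergradient form of Euler's identity is a mild refinement that additionally covers non-differentiable homogeneous degree one utilities, a case the paper's gradient-based remark implicitly glosses over.
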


The proof of the above corollary follows from the fact that for any homogeneous degree one utility function $u_i(\mathbf{x}_i)$, which includes the important class of CES utilities, we have that $u_i(\mathbf{x}_i) = \nabla u_i(\mathbf{x}_i) \cdot \mathbf{x}_i$. This was precisely the property of linear utility functions that was used in the proof of Theorem~\ref{thm:thm2} and thus, we omit the proof of the above corollary. Corollary~\ref{cor:corHomogeneousUtils} establishes that market equilibria can be computed using \acrshort{bp-sop} for a broad range of utility functions in Fisher markets with additional linear constraints.

\subsection{Market Equilibrium with Homogeneous Linear Constraints} \label{homogeneous-equilibrium-computation}

We now consider the important special case of homogeneous linear constraints, i.e., where $b_{it} = 0$ for each $t \in T_i$ for all $i$. This setting is relevant since it includes the set of proportionality constraints and as compared to non-homogeneous linear constraints offers better computational properties, which we elucidate in this section.


We begin by noting a direct consequence of Theorem~\ref{thm:thm2} for homogeneous linear constraints. In particular, when agents have homogeneous linear constraints then market equilibria can in fact be computed using the convex program \acrshort{sop-l}. That is, \acrshort{bp-sop} reduces to \acrshort{sop-l} in the setting of homogeneous linear constraints. Corollary~\ref{cor:homogeneous-SOP1} formalizes this result.

\begin{corollary} \label{cor:homogeneous-SOP1} (Convex Program to Compute Market Equilibrium for Homogeneous Linear Constraints)
Suppose that a market equilibrium exists for \acrshort{iop} with homogeneous linear constraints. Then \acrshort{bp-sop} is equivalent to \acrshort{sop-l}, and the price vector $\mathbf{p}$ corresponding to the optimal dual variables of the capacity constraint~\eqref{eq:ImpossSocOpt1} of \acrshort{sop-l} establishes the market equilibrium prices.
\end{corollary}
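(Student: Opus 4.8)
The plan is to exploit the fact that homogeneity of the linear constraints forces every budget perturbation to vanish, so that \acrshort{bp-sop} collapses verbatim onto \acrshort{sop-l}. First I would recall that homogeneity of the linear Constraints~\eqref{eq:SocOpt2} means $b_{it} = 0$ for every constraint $t \in T_i$ and every agent $i$. Consequently, the fixed point condition identified in Theorem~\ref{thm:thm2}, namely $\lambda_i = \sum_{t \in T_i} r_{it} b_{it}$, reduces to $\lambda_i = 0$ for all $i$, regardless of the values taken by the dual multipliers $r_{it}$. Substituting $\lambda_i = 0$ into the objective of \acrshort{bp-sop} turns $\sum_{i=1}^{n}(w_i + \lambda_i)\log\left(\sum_{j} u_{ij} x_{ij}\right)$ into $\sum_{i=1}^{n} w_i \log\left(\sum_{j} u_{ij} x_{ij}\right)$, which is exactly the objective~\eqref{eq:ImpossSocOpt} of \acrshort{sop-l}. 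Since the capacity Constraints~\eqref{eq:SocOpt1}, linear Constraints~\eqref{eq:SocOpt2}, and non-negativity Constraints~\eqref{eq:SocOpt3} of \acrshort{bp-sop} coincide term-by-term with Constraints~\eqref{eq:ImpossSocOpt1}--\eqref{eq:ImpossSocOpt3} of \acrshort{sop-l}, the two are literally the same convex program at the fixed point. This establishes the first assertion that \acrshort{bp-sop} is equivalent to \acrshort{sop-l}.

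Second, I would invoke the one-to-one correspondence of Theorem~\ref{thm:thm2} between equilibrium prices and fixed point solutions of \acrshort{bp-sop}. By hypothesis a market equilibrium exists for the \acrshort{iop} with homogeneous linear constraints, so by Theorem~\ref{thm:thm2} it is realized by setting prices through the dual variables of the capacity Constraints~\eqref{eq:SocOpt1} at a fixed point of \acrshort{bp-sop}. Because the only admissible perturbation here is $\lambda = \mathbf{0}$, and \acrshort{bp-sop} at $\lambda = \mathbf{0}$ is \acrshort{sop-l}, the equilibrium price vector $\mathbf{p}$ must equal the optimal dual variables of the capacity Constraint~\eqref{eq:ImpossSocOpt1} of \acrshort{sop-l}, as claimed. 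The remaining content of the corollary is thus an immediate consequence of the vanishing of $\lambda$ together with Theorem~\ref{thm:thm2}, with no new KKT computation required.

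The only delicate point, and the step I expect to require the most care, is the legitimacy of invoking Theorem~\ref{thm:thm2}, whose statement carries the technical hypothesis that each agent has a good lying outside all linear Constraints~\eqref{eq:con2}, whereas the corollary assumes merely that an equilibrium exists. I would therefore confirm that this hypothesis is either inherited or dispensable in the homogeneous case: because $\mathbf{x}_i = \mathbf{0}$ is always feasible for $A^{(i)}\mathbf{x}_i \leq \mathbf{0}$ and the constraint cones are scale invariant, the assumed existence of an equilibrium should already supply a fixed point of \acrshort{bp-sop} with $\lambda = \mathbf{0}$. If quoting Theorem~\ref{thm:thm2} verbatim is not clean, I would instead re-run only the forward direction of its KKT argument directly at $\lambda = \mathbf{0}$, verifying that the capacity duals of \acrshort{sop-l} clear the market and exhaust every budget; everything else in the proof is a direct substitution.
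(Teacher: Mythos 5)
Your proposal is correct and takes essentially the same route as the paper, whose entire proof is the observation that homogeneity forces $\lambda_i = \sum_{t \in T_i} r_{it} b_{it} = 0$, so that \acrshort{bp-sop} coincides with \acrshort{sop-l} and Theorem~\ref{thm:thm2} supplies the equilibrium correspondence. Your additional scrutiny of the mismatch between the corollary's hypothesis (equilibrium existence) and Theorem~\ref{thm:thm2}'s hypothesis (an unconstrained good with $u_{ij}>0$ for each agent) is legitimate extra care that the paper silently glosses over; note only that the direction of Theorem~\ref{thm:thm2} you actually need to re-run is the one showing that prices set from the capacity duals clear the market and exhaust budgets (the paper's ``$\Leftarrow$'' direction), not the forward direction as you label it.
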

The proof of this corollary follows from Theorem~\ref{thm:thm2} with the observation that $\lambda_i = \sum_{t \in T_i} r_{it} b_{it} = 0$, since $b_{it} = 0$ for each $i, t$. This result implies that equilibrium prices for Fisher markets with homogeneous linear constraints can be computed using a convex program whose objective is independent of the dual variables of the additional linear constraints. Note that this is contrary to the setting of non-homogeneous linear constraints, where $b_{it}>0$ for some $i, t$. As a result, market equilibria for Fisher markets with homogeneous linear constraints can be easily computed using state-of-the-art convex programming solvers.



Using the result of Corollary~\ref{cor:homogeneous-SOP1}, we can further show that the solution of \acrshort{sop-l} can be used to test for the existence of non-negative equilibrium price vectors.


\begin{corollary}  \label{cor:computing-homogeneous} (Convex Program to Test for Existence of Non-negative Equilibrium Price for Homogeneous Linear Constraints)
Suppose that all agents have homogeneous linear constraints, i.e., $b_{it} = 0$ for each $i, t$, and consider the convex program where the capacity Constraints~\eqref{eq:ImpossSocOpt1} of \acrshort{sop-l} are replaced with inequalities, i.e., $\sum_{i = 1}^{n} x_{ij} \leq \Bar{s}_j$ for all $j \in [m]$. Then a non-negative market equilibrium exists if and only if the capacity constraints are met with equality at the optimal solution of this convex program.
\end{corollary}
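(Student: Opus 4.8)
The plan is to exploit the single structural difference between the relaxed program (call it $\mathrm{SOP\text{-}L}^{\leq}$, obtained from $\mathrm{SOP\text{-}L}$ by replacing the capacity equalities~\eqref{eq:ImpossSocOpt1} with inequalities $\sum_{i=1}^n x_{ij} \le \bar{s}_j$) and $\mathrm{SOP\text{-}L}$ itself: in the relaxed program the Lagrange multipliers of the capacity constraints, which are exactly the candidate prices $p_j$, are \emph{forced} to be non-negative by the KKT conditions, whereas in $\mathrm{SOP\text{-}L}$ they are unrestricted in sign. Since every constraint of $\mathrm{SOP\text{-}L}^{\leq}$ is affine, the linear constraint qualification holds automatically, so for this convex program the KKT conditions are both necessary and sufficient for optimality; the only care needed is that the objective stay finite, which is guaranteed by the standing hypothesis (inherited from Theorem~\ref{thm:thm2}) that each agent attains strictly positive utility through a good carrying no linear constraint. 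I fix notation by letting $\hat{\mathbf{x}}$ denote an optimal solution of $\mathrm{SOP\text{-}L}^{\leq}$ and $\hat{\mathbf{p}} \ge 0$ the associated capacity multipliers.

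For the ``if'' direction, suppose the capacity constraints hold with equality at $\hat{\mathbf{x}}$. Then $\hat{\mathbf{x}}$ is feasible for the equality-constrained $\mathrm{SOP\text{-}L}$, and since the feasible region of $\mathrm{SOP\text{-}L}$ sits inside that of $\mathrm{SOP\text{-}L}^{\leq}$, optimality of $\hat{\mathbf{x}}$ over the larger region forces it to be optimal for $\mathrm{SOP\text{-}L}$ as well, with $\hat{\mathbf{p}}$ a valid vector of capacity multipliers (the stationarity equations of the two programs coincide term by term). Invoking the homogeneous specialization of Theorem~\ref{thm:thm2}, in which the fixed point is $\lambda_i = 0$ for every $i$ so that $\mathrm{BP\text{-}SOP}$ collapses to $\mathrm{SOP\text{-}L}$ (Corollary~\ref{cor:homogeneous-SOP1}), the KKT conditions of $\mathrm{SOP\text{-}L}$ at $(\hat{\mathbf{x}},\hat{\mathbf{p}})$ coincide with the market-clearing optimality conditions of every agent's $\mathrm{IOP}$ at prices $\hat{\mathbf{p}}$. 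Hence $\hat{\mathbf{x}}$ is an equilibrium allocation and $\hat{\mathbf{p}}$ equilibrium prices; since $\hat{\mathbf{p}} \ge 0$, the equilibrium is non-negative.

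For the ``only if'' direction, assume a non-negative equilibrium exists, with allocation $\mathbf{x}^*$ and prices $\mathbf{p}^* \ge 0$. By Definition~\ref{def:market-eq-price} all goods are sold, so $\mathbf{x}^*$ meets every capacity constraint with equality, and by Corollary~\ref{cor:homogeneous-SOP1} the pair $(\mathbf{x}^*, \mathbf{p}^*)$ satisfies the stationarity and complementarity conditions of $\mathrm{SOP\text{-}L}$. I then verify that this same pair satisfies the KKT conditions of $\mathrm{SOP\text{-}L}^{\leq}$: the stationarity equations are identical, the multiplier-sign requirement $\mathbf{p}^* \ge 0$ is met by hypothesis, and capacity complementary slackness holds trivially because each capacity constraint is tight. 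By sufficiency of KKT for the convex relaxed program, $\mathbf{x}^*$ is optimal for $\mathrm{SOP\text{-}L}^{\leq}$ and attains equality in the capacity constraints, which is the desired conclusion.

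The main obstacle is the potential non-uniqueness of the optimal allocation of $\mathrm{SOP\text{-}L}^{\leq}$: the objective is strictly concave only in the utility values $u_i(\mathbf{x}_i)$, not in $\mathbf{x}_i$ themselves, so a priori distinct optima could exhibit different total consumptions of a good whose price is zero. I would address this by arguing that the test is precisely about the \emph{attainability} of tight capacity at an optimum---either by exhibiting, as in the ``only if'' step, an optimal solution with tight capacity, or by observing that the common optimal utility values pin down, through the stationarity relations, whether each good must be fully sold, so that a slack optimum and a tight one cannot coexist for a positively priced good. The remaining steps are routine: writing out the two Lagrangians explicitly and reading off that their stationarity conditions agree, and checking that the finiteness of the logarithmic objective is preserved under the constraint-qualification argument.
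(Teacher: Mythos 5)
Your proof is correct and follows essentially the same route as the paper's: both directions hinge on the definition of an equilibrium price vector together with Corollary~\ref{cor:homogeneous-SOP1} (equivalently, Theorem~\ref{thm:thm2} specialized to the fixed point $\lambda_i = 0$), exactly as in the paper's two-sentence argument. The difference is only one of detail—you spell out the KKT verification that the equilibrium allocation is optimal for the relaxed program and that a tight optimum of the relaxed program is optimal for \acrshort{sop-l} with non-negative multipliers, steps the paper leaves implicit—and your closing caveat about possible non-uniqueness of optima (slack optima can coexist with tight ones only for zero-priced goods) points to an ambiguity in the corollary's statement itself rather than a gap in your argument.
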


\begin{proof}
In one direction, if a non-negative market equilibrium exists then by definition of an equilibrium price vector the capacity constraints of the convex program must be met with equality. In the other direction, if the capacity constraints of the new convex program are met with equality at its optimal solution, then the prices can be set based on the optimal dual variables of the capacity constraints of \acrshort{sop-l}, which corresponds to a market equilibrium price by Corollary~\ref{cor:homogeneous-SOP1}.
\end{proof}


The above corollary implies that the test for existence of a non-negative equilibrium price vector with homogeneous linear constraints involves solving a convex program and checking if the capacity constraints are met with equality. This result overcomes the question of the existence of an equilibrium price vector for Fisher markets with homogeneous linear constraints by providing a necessary and sufficient condition to test for equilibrium existence. We note that the result of Corollary~\ref{cor:computing-homogeneous} cannot be directly extended to the case of non-homogeneous linear constraints. This is because when some $b_{it} >0$ then the convex program \acrshort{bp-sop} has variables $\lambda_i$ in the objective that depend on the dual variables of the additional linear constraints, which we do not have knowledge of \emph{a priori}.



\subsection{Pareto Efficiency of the Optimal Allocation} \label{pareto-efficiency}

In this section, we establish the Pareto efficiency of the optimal allocation resulting from the solution of \acrshort{bp-sop} corresponding to the \textit{fixed point} budget perturbations. Before presenting the result on the Pareto efficiency of the optimal allocation of \acrshort{bp-sop}, we begin by formalizing the notion of Pareto efficiency through the following definitions.

\begin{definition} (Pareto Dominance)
An allocation $\left( \mathbf{x}_1, ...,\mathbf{x}_n \right)$ is said to Pareto dominate an allocation $\left( \mathbf{y}_1, ..., \mathbf{y}_n \right)$ if $u_i(\mathbf{x}_i) \geq u_i(\mathbf{y}_i)$, for all $i$ and there exists an agent $k$, such that $u_k(\mathbf{x}_k) > u_k(\mathbf{y}_k)$. Here $\mathbf{x}_i, \mathbf{y}_i \in \mathbb{R}^m$ for all $i \in [n]$.
\end{definition}

\begin{definition} (Pareto Efficiency)
An allocation ($\mathbf{x}_1$, ..., $\mathbf{x}_n$) is said to be Pareto efficient if it is not Pareto dominated by any other feasible allocation. Here $\mathbf{x}_i \in \mathbb{R}^m$ for all $i \in [n]$.
\end{definition}

To rephrase the above notion of Pareto efficiency, we have that if an allocation is Pareto efficient then no agent can become better off at some other allocation without making some other agent worse off. Now, to establish that the optimal allocation corresponding to the solution of \acrshort{bp-sop} is Pareto efficient, we will leverage the following theorem from \cite{kreps-book}.

\begin{theorem} \label{thm:pareto-efficiency-kreps} (\cite{kreps-book} Pareto Optimality of General Social Choice Functions)
Suppose that $W: \mathbb{R}^{n} \mapsto \mathbb{R}$ is strictly increasing and $u$ is the vector function of individual utilities $u_i : \mathbb{R}^{m} \mapsto \mathbb{R}$. If $(\mathbf{x}_1^*, ..., \mathbf{x}_n^*)$ is a solution to the problem
$\max_{(\mathbf{x}_1, ..., \mathbf{x}_n) \in X} W (u(\mathbf{x}_1, ..., \mathbf{x}_n))$
then $(\mathbf{x}_1^*, ..., \mathbf{x}_n^*)$ is Pareto efficient. Here $X$ represents the feasible space of all allocations and $\mathbf{x}_i \in \mathbb{R}^m$ for all $i \in [n]$.
\end{theorem}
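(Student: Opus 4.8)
The plan is to argue by contradiction, exploiting the monotonicity of $W$ to translate Pareto dominance in utility space into a strict increase in the social objective. Since the claim is a standard welfare-theorem statement, I expect the argument to be short; the only care needed is in matching the hypothesis on $W$ to the definition of Pareto dominance.

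First I would suppose, toward a contradiction, that the maximizer $(\mathbf{x}_1^*, \ldots, \mathbf{x}_n^*)$ is \emph{not} Pareto efficient. By the definition of Pareto efficiency, there then exists a feasible allocation $(\mathbf{y}_1, \ldots, \mathbf{y}_n) \in X$ that Pareto dominates it, i.e.\ $u_i(\mathbf{y}_i) \geq u_i(\mathbf{x}_i^*)$ for every $i \in [n]$ and $u_k(\mathbf{y}_k) > u_k(\mathbf{x}_k^*)$ for at least one agent $k$. Writing $\mathbf{a} = u(\mathbf{x}_1^*, \ldots, \mathbf{x}_n^*)$ and $\mathbf{b} = u(\mathbf{y}_1, \ldots, \mathbf{y}_n)$ as points of $\mathbb{R}^n$, these relations say precisely that $\mathbf{b} \geq \mathbf{a}$ coordinate-wise with $\mathbf{b} \neq \mathbf{a}$.

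The key step is then to invoke the hypothesis that $W$ is strictly increasing: under the standard reading that $\mathbf{b} \geq \mathbf{a}$ with $\mathbf{b} \neq \mathbf{a}$ implies $W(\mathbf{b}) > W(\mathbf{a})$, I obtain $W(u(\mathbf{y}_1, \ldots, \mathbf{y}_n)) > W(u(\mathbf{x}_1^*, \ldots, \mathbf{x}_n^*))$. Since $(\mathbf{y}_1, \ldots, \mathbf{y}_n) \in X$ is feasible, this strict inequality contradicts the assumption that $(\mathbf{x}_1^*, \ldots, \mathbf{x}_n^*)$ maximizes $W(u(\cdot))$ over $X$. Hence no dominating feasible allocation can exist, and $(\mathbf{x}_1^*, \ldots, \mathbf{x}_n^*)$ is Pareto efficient.

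I expect the only genuine subtlety, rather than a real obstacle, to be pinning down the precise sense in which $W$ is ``strictly increasing.'' The argument requires $W$ to be monotone in each coordinate with the property that a weak increase in all coordinates together with a strict increase in at least one coordinate yields a strictly larger value; this is exactly the componentwise monotonicity that mirrors the definition of Pareto dominance. Once that interpretation is fixed, the remainder is a one-line contradiction requiring no further structure on $W$ (in particular, neither concavity nor continuity is needed) and no structure on $X$ beyond its containing the dominating allocation.
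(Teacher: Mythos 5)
Your proof is correct. There is, however, no in-paper argument to compare it against: the paper imports this theorem verbatim from Kreps \cite{kreps-book} and uses it as a black box to establish Proposition~\ref{pareto-eff}, so your write-up supplies the proof the paper omits, and it is exactly the standard contradiction argument underlying the cited result (negate Pareto efficiency, obtain a feasible dominating allocation, and use monotonicity of $W$ to contradict optimality).

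Your care about the meaning of ``strictly increasing'' is also well placed, not pedantic. The paper's own Definition~\ref{def:strict-increasing-function} of a strictly increasing vector-valued function states the conclusion with a weak inequality, $u(\mathbf{x}) \geq u(\mathbf{y})$, which is evidently a typo: under that literal reading a constant $W$ would qualify as strictly increasing, and the theorem would be false. The componentwise reading you adopt --- weak dominance in every coordinate plus strict improvement in at least one coordinate implies $W(\mathbf{b}) > W(\mathbf{a})$ --- is precisely what the theorem needs, and it is also the property actually verified in the paper's application, where $W$ has components $W_i(y) = (w_i + \lambda_i)\log(y)$ with $w_i + \lambda_i > 0$. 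With that interpretation fixed, your argument is complete; as you note, no continuity, concavity, or structure on $X$ is required.
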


We can apply Theorem~\ref{thm:pareto-efficiency-kreps} to show that the optimal solution of \acrshort{bp-sop} is pareto optimal through the following proposition.

\begin{proposition} \label{pareto-eff} (Pareto Optimality of \acrshort{bp-sop})
The optimal allocation $\mathbf{x}_i^*$ for all agents $i$ corresponding to a \textit{fixed point} solution of \acrshort{bp-sop} is Pareto optimal.
\end{proposition}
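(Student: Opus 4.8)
The plan is to invoke Theorem~\ref{thm:pareto-efficiency-kreps} with the social welfare function $W$ taken to be the objective of \acrshort{bp-sop}. Concretely, I would define $W: \mathbb{R}^n_{>0} \mapsto \mathbb{R}$ by $W(z_1, \dots, z_n) = \sum_{i=1}^n (w_i + \lambda_i) \log(z_i)$ and let $u(\mathbf{x}_1, \dots, \mathbf{x}_n) = (u_1(\mathbf{x}_1), \dots, u_n(\mathbf{x}_n))$ be the vector of individual utilities. With these identifications the \acrshort{bp-sop} objective~\eqref{eq:SocOpt} is exactly $W(u(\mathbf{x}_1, \dots, \mathbf{x}_n))$, and the feasible region $X$ in Theorem~\ref{thm:pareto-efficiency-kreps} is the set of allocations satisfying the capacity Constraints~\eqref{eq:SocOpt1}, the linear Constraints~\eqref{eq:SocOpt2}, and the non-negativity Constraints~\eqref{eq:SocOpt3}, which is precisely the set of feasible allocations over which Pareto efficiency is defined.

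The only hypothesis of Theorem~\ref{thm:pareto-efficiency-kreps} that requires verification is that $W$ is strictly increasing. Since $\partial W / \partial z_i = (w_i + \lambda_i)/z_i$, and the optimal utilities $z_i = u_i(\mathbf{x}_i^*)$ are strictly positive (guaranteed by the assumption that each agent values a good with $u_{ij}>0$, so that the $\log$ in the objective is finite at the optimum), the map $W$ is strictly increasing in each coordinate precisely when $w_i + \lambda_i > 0$ for every agent $i$. Establishing this positivity is the crux of the argument, and I expect it to be the main obstacle.

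To show $w_i + \lambda_i > 0$, I would use the \textit{fixed point} characterization $\lambda_i = \sum_{t \in T_i} r_{it} b_{it}$, where $r_{it}$ is the optimal dual multiplier of the inequality constraint $A_t^{(i)} \mathbf{x}_i \leq b_{it}$ in \acrshort{bp-sop}. Because these are inequality constraints of a maximization problem, their KKT multipliers are non-negative, i.e., $r_{it} \geq 0$; moreover $b_{it} \geq 0$ by assumption, since $\mathbf{b}_i \in \mathbb{R}^{l_i}_{\geq 0}$. Hence $\lambda_i = \sum_{t \in T_i} r_{it} b_{it} \geq 0$, and as budgets satisfy $w_i > 0$, we obtain $w_i + \lambda_i \geq w_i > 0$ for every $i$.

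With $W$ strictly increasing and the \textit{fixed point} allocation $(\mathbf{x}_1^*, \dots, \mathbf{x}_n^*)$ being an optimal solution of $\max_{(\mathbf{x}_1,\dots,\mathbf{x}_n) \in X} W(u(\mathbf{x}_1,\dots,\mathbf{x}_n))$, Theorem~\ref{thm:pareto-efficiency-kreps} then yields directly that $(\mathbf{x}_1^*, \dots, \mathbf{x}_n^*)$ is Pareto efficient, completing the proof. Everything apart from the sign analysis guaranteeing $w_i + \lambda_i > 0$ is a routine application of the cited result.
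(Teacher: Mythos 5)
Your proposal is correct and follows essentially the same route as the paper's proof: both invoke Theorem~\ref{thm:pareto-efficiency-kreps} with $W$ chosen so that $W \circ u$ equals the \acrshort{bp-sop} objective, and both reduce the matter to checking $w_i + \lambda_i > 0$ via $w_i > 0$ and $\lambda_i \geq 0$. Your justification of $\lambda_i \geq 0$ from the fixed point relation $\lambda_i = \sum_{t \in T_i} r_{it} b_{it}$ together with $r_{it} \geq 0$ and $b_{it} \geq 0$ is in fact slightly more explicit than the paper's, which simply asserts this non-negativity.
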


\begin{proof}
To prove this result, we need to find a strictly increasing function $W$, such that $W \circ u(\mathbf{x}_1, ..., \mathbf{x}_n) = \sum_i (w_i + \lambda_i) \log(u_i(\mathbf{x}_i))$. Now, since $u(\mathbf{x}_1, ..., \mathbf{x}_n) = (u_1(\mathbf{x}_1), ..., u_n(\mathbf{x}_n))$, we can take $W$ as the function with the $i$'th component as $W_i(y) = (w_i + \lambda_i) \log(y)$, where $\lambda_i$ corresponds to the fixed point solution of \acrshort{bp-sop}. Note that without loss of generality $w_i>0$, as if $w_i = 0$, then this agent cannot purchase any goods and so can be removed from consideration when formulating the social optimization problem. This implies that $w_i + \lambda_i > 0$, as $w_i>0$ and $\lambda_i \geq 0$. Furthermore, since $\log(y)$ is strictly increasing in $y$, we have that $W_i(y)$ is strictly increasing in $y$. Now certainly for such a function $W$, it follows that $W \circ u(\mathbf{x}_1, ..., \mathbf{x}_n) = \sum_i (w_i + \lambda_i) \log(u_i(\mathbf{x}_i))$. Invoking Theorem~\ref{thm:pareto-efficiency-kreps} where the set $X$ is defined by the Constraints~\eqref{eq:SocOpt1}-\eqref{eq:SocOpt3}, we have that the optimal allocation $\mathbf{x}_i^*$ for all agents $i$ corresponding to a fixed point solution of \acrshort{bp-sop} is Pareto optimal, proving our claim.
\end{proof}

Proposition~\ref{pareto-eff} establishes that the optimal solution to \acrshort{bp-sop} at the \textit{fixed point} is a desirable market outcome since there is no other feasible allocation that makes all agents at least as well off and at least one agent strictly better off.


\subsection{Economic Relevance of Solution of \acrshort{bp-sop}} \label{InterpSocOpt}

The solution of the convex program \acrshort{bp-sop} provides a method to set market clearing prices when agents have additional linear constraints. In this section we provide an economic interpretation to motivate the necessity of the budget perturbations in setting market clearing prices. To this end, we first consider the classical Fisher market social objective~\eqref{eq:FisherSocOpt}, which is also the objective function of \acrshort{sop-l}. For this ``budget weighted log utility'' objective, we can interpret the budget $w_i$ as the priority of agent $i$ in the market. This method of prioritization based on agent's budgets is the right one to set market clearing prices when each agent in the market has just budget constraints and goods have capacity constraints. However, prioritizing agents based on their budgets is insufficient in guaranteeing that the market clears when agents have additional linear Constraints~\eqref{eq:con2}, as was established in Theorem~\ref{thm:imposs}.

As a result, we require a different method of prioritization of agents when they have additional linear Constraints~\eqref{eq:con2}. Mathematically, the objective function of \acrshort{bp-sop}~\eqref{eq:SocOpt} provides us with the ``right'' level of prioritization of agents through appropriately chosen budget perturbations. Since the budget perturbations $\lambda_i$ depend on the dual variables of the additional linear constraints, as was established in Theorem~\ref{thm:thm2}, we have that agents are re-prioritized based on how ``constrained'' they are. That is, if agents are highly constrained, e.g., healthcare workers in a pandemic environment who may have little availability to use public spaces in a public goods market, then they are assigned a greater priority through a larger weight $\lambda_i$ as compared to less constrained agents. This re-prioritization of agents ensures that more constrained agents receive goods that lie within their small feasible constraint set, which they would not have been able to avail through the solution of \acrshort{sop-l} with the classical Fisher objective. As for less constrained agents, they have far more slack in the goods that lie within their feasible constraint set and thus a smaller weight $\lambda_i$, i.e., a lower level of prioritization, suffices for the market to clear.

Finally, we reiterate that while the prices are set using the solution of \acrshort{bp-sop} the budget perturbations are irrelevant from the point of view of the agents. That is, each agent solves their own \acrshort{iop} in which they only see the equilibrium price vector \textbf{p}. This observation follows directly from the result of Theorem~\ref{thm:thm2}, establishing that each agent obtains their most preferred bundle of goods given the prices set through the solution of \acrshort{bp-sop}.


\section{Fixed Point Iteration to Determine Budget Perturbations} \label{FixedPoint}

We have established that to set market clearing prices, a social planner should solve \acrshort{bp-sop}, where the budget perturbations are set such that $\lambda_i = \sum_{t \in T_i} r_{it} b_{it}$. However, this implies that $\lambda_i$ depends on the dual variables of the problem, which we have no knowledge of \emph{a priori}. In this section, we show how to compute the appropriate value of $\lambda_i$, through a fixed point iteration that directly solves the centralized problem \acrshort{bp-sop} (Section~\ref{algorithm-fixed-pt}) and numerically validate its convergence through experiments (Section~\ref{experiments}).


\subsection{Fixed Point Iteration Algorithm} \label{algorithm-fixed-pt}

To determine the true value of the perturbation parameters specified by the vector $\lambda \in \mathbb{R}_{\geq 0}^{n}$, we consider an iterative scheme of the form $G\left(\lambda_1^{(k)}, ..., \lambda_n^{(k)}\right) = \left(\mathbf{r}_{1}^{(k)}, ..., \mathbf{r}_n^{(k)}\right)$, where we update the perturbations as $\left(\lambda_1^{(k+1)}, ..., \lambda_n^{(k+1)}\right) = \left(\sum_{t \in T_i} r_{1t}^{(k)} b_{1t}, ..., \sum_{t \in T_i} r_{nt}^{(k)} b_{nt} \right)$. Here $G$ is a function that takes in the $k^{th}$ iterate of the budget perturbations $\lambda_i^{(k)}$ for all agents $i$, solves the corresponding social optimization problem \acrshort{bp-sop} and returns the dual variables, $\mathbf{r}_i^{(k)} \in \mathbb{R}_{\geq 0}^{l_i}$ of the linear Constraints~\eqref{eq:SocOpt2}.

Algorithm~\ref{alg:Algo1} depicts the fixed point iterative scheme, where $\lambda = (\lambda_1, ..., \lambda_{n})$ is the vector of budget perturbations, and $\mathbf{R} = (\mathbf{r}_1, ..., \mathbf{r}_n)$ is the matrix of the dual variables of \acrshort{bp-sop}s linear Constraints~\eqref{eq:SocOpt2}. In this algorithm, first the budget perturbations for each agent are initialized to some non-negative constant, e.g., zero, and the problem \acrshort{bp-sop} is solved given these budget perturbations. The dual variables of this problem are then used to reset the budget perturbations in \acrshort{bp-sop} using the relationship $\lambda_i = \sum_{t \in T_i} r_{it} b_{it}$ after which \acrshort{bp-sop} is re-computed. This procedure is repeated until the $L_2$ norm of the difference between the $\lambda$ perturbations and the dual variables of \acrshort{bp-sop} becomes small.

\begin{algorithm} 
\label{alg:Algo1}
\SetAlgoLined
\SetKwInOut{Input}{Input}\SetKwInOut{Output}{Output}
\Input{Function $G(\cdot)$ to calculate dual variables of linear Constraints~\eqref{eq:SocOpt2} of \acrshort{bp-sop}, $\lambda^{(0)} \in \mathbb{R}_{\geq 0}^{n}$}
\Output{Budget Perturbation Parameters $\lambda$}
$\mathbf{R}^{(0)} \leftarrow G(\lambda)$ \;
 \For{$k = 0, 1, 2, ...$}{
  $\lambda_i^{(k+1)} \leftarrow \sum_{t \in T_i} r_{it}^{(k)} b_{it}$ $\forall i$ \;
  $\mathbf{R}^{(k+1)} \leftarrow G(\lambda^{(k+1)})$ \;
  }
\caption{Fixed Point Iterative Scheme}
\end{algorithm}

\subsection{Numerical Experiments with Fixed Point Iterative Scheme} \label{experiments}

We now numerically evaluate the convergence of Algorithm~\ref{alg:Algo1} on a problem with $n=200$ agents and $m = 6$ goods, with three identical knapsack constraints for each agent. In particular, we suppose that $x_{i1} + x_{i2} \leq 1$, $x_{i3} + x_{i4} \leq 1$ and $x_{i5} + x_{i6} \leq 1$ for all agents $i$ in the market. Each agent $i$ has three linear Constraints~\eqref{eq:SocOpt2} and so we denote the set $T_i = \{1, 2, 3 \}$. The capacities of the goods are $\Bar{s}_j = 100$, for each good $j \in [m]$. Furthermore, in this experiment each agent $i$ is endowed with a random budget $w_i$ and their preferences are captured through randomly generated utilities. Both the budgets and utilities are generated from the uniform $[0, 1]$ distribution.

For the above problem instance, Figure~\ref{BudgetUtilityPerturbed} plots the $L_2$ norm of the difference between the $\lambda$ perturbations and the dual variables of \acrshort{bp-sop} at each iteration, given by $\left\|\lambda^{(k)} - \sum_{t = 1}^{3} \mathbf{r}_t^{(k)}\right\|_2$, since $b_{it} = 1$ for all $t \in T_i$ for each agent $i$. Here $\lambda^{(k)} = \left(\lambda_1^{(k)}, ..., \lambda_{n}^{(k)}\right)$ and $\mathbf{r}_t^{(k)} = \left(r_{1t}^{(k)}, ..., r_{nt}^{(k)}\right)$, where $r_{it}^{(k)}$ is the dual variable of the \acrshort{bp-sop} at iteration $k$, and $n$ is the number of agents. Figure~\ref{BudgetUtilityPerturbed} confirms that the iterative scheme converges quickly to a fixed point by showing convergence within 40 iterations, which highlights the computational feasibility of our mechanism. We note that the experiments also confirmed feasibility of allocations with respect to additional linear Constraints~\eqref{eq:SocOpt2}, which may have been violated with classical Fisher markets.

\begin{figure}[!h]
      \centering
      \includegraphics[width=0.7\linewidth]{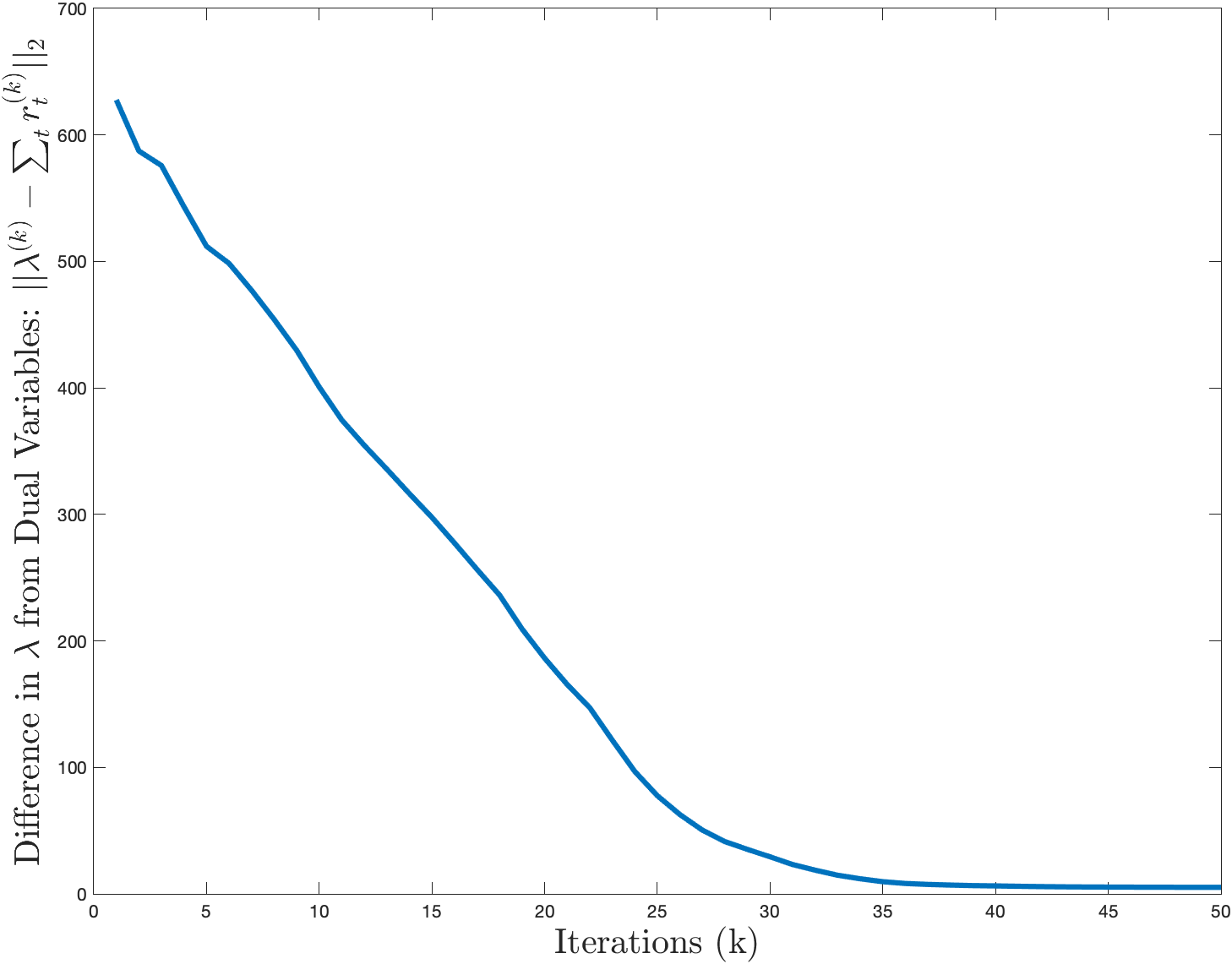}
      \caption{Numerical Convergence of fixed point iterative scheme with 200 agents, 6 goods and 3 knapsack constraints for each agent. The budgets and utilities were assigned randomly using a Uniform $[0, 1]$ distribution to agents.}
      \label{BudgetUtilityPerturbed}
   \end{figure}

\section{Distributed Algorithms to Compute Equilibrium Prices} \label{ADMM-Convergence-Guarantees}


While the fixed point iterative scheme provided fast numerical convergence results, it (i) involved the solution of a centralized optimization problem at each step, which can be quite computationally intensive, and (ii) required complete information of agent's utilities to derive prices. In this section, we present a new class of distributed algorithms, based on Alternating Direction Methods (ADMs), to determine the equilibrium prices whilst overcoming the implementation hurdles of the centralized fixed point iterative scheme. We also present theoretical convergence results for the distributed algorithms for Fisher markets with homogeneous linear constraints. Furthermore, we provide a direct extension of the convergence results of the Alternating Direction Method of Multipliers (ADMM) algorithm for classical Fisher markets, which is of independent interest, since this algorithm converges for all concave homogeneous degree one utility functions including linear utilities. This overcomes a fundamental barrier in equilibrium computation in classical Fisher markets, since existing distributed algorithms require the utility functions of agents to be strongly concave and have a convergence rate that depends on the utility functions of agents. Our algorithms are motivated by the tatonnement based distributed algorithms in the market literature, wherein price updates are made based on the discrepancy between supply and demand. However, contrary to the tatonnement literature wherein the step size for the price updates is often dependent on the specific utility function of agents, we develop a distributed algorithm that is ``step-size free'', i.e., the step size is fixed to a parameter $\beta > 0$ irrespective of the problem instance. We further believe that the application of methods such as ADMM may be of independent interest in the economics and game theory communities.


We begin this section by providing a survey of the algorithms and results for Alternating Direction Methods (ADMs) relevant to this work (Section~\ref{ADMM-math-prelims}). We then introduce an AMA and ADMM algorithm for Fisher markets with homogeneous linear constraints and present convergence results in Sections~\ref{AMA-convergence} and~\ref{ADMM-classical-fisher}. Next, we validate these theoretical convergence guarantees through numerical experiments in Section~\ref{numerical-conv-admm}. While the convergence guarantees of ADMM do not extend to Fisher markets with non-homogeneous linear constraints, we present numerical convergence results for Fisher markets with non-homogeneous linear constraints in Section~\ref{ADMM-physical-constraints}.

\subsection{Alternating Direction Methods} \label{ADMM-math-prelims}

Alternating Direction Methods (ADMs) have been used widely for distributed optimization purposes and these algorithms are primarily concerned with the following linearly constrained programs that have a separable structure. We present the results and algorithms for the following maximization problem:

\begin{maxi!}|s|[2]                   
    {\mathbf{x} \in \mathcal{X}, \mathbf{y} \in \mathcal{Y} }                               
    {h(\mathbf{x}, \mathbf{y}) = f(\mathbf{x}) + g(\mathbf{y}) \label{eq:eqADMMIntro1}}   
    {\label{eq:ADMMIntroExample001}}             
    {}                                
    \addConstraint{A \mathbf{x} + B \mathbf{y}}{= \mathbf{c} \label{eq:conADMMIntro1}}    
\end{maxi!}
where $\mathcal{X} \in \mathbb{R}^{n_1}$ and $\mathcal{Y} \in \mathbb{R}^{n_2}$ are closed convex sets, $A \in \mathbb{R}^{ \Bar{m} \times n_1}$, $B \in \mathbb{R}^{\Bar{m} \times n_2}$ and $\mathbf{c} \in \mathbb{R}^{\Bar{m}}$ for some $n_1, n_2, \Bar{m} \in \mathbb{N}$. Furthermore, $f:\mathbb{R}^{n_1} \rightarrow \mathbb{R}$ and $g:\mathbb{R}^{n_2} \rightarrow \mathbb{R}$ and $h:\mathbb{R}^{n_1 + n_2} \rightarrow \mathbb{R}$. 


Denoting $\mathbf{\lambda} \in \mathbb{R}^{\Bar{m}}$ as the dual variable of constraint~\eqref{eq:conADMMIntro1}, we can formulate the following augmented Lagrangian of the optimization Problem~\eqref{eq:eqADMMIntro1}-\eqref{eq:conADMMIntro1} for some parameter $\beta>0$:

\[ \mathcal{L}_{\beta}(\mathbf{x}, \mathbf{y}) = f(\mathbf{x}) + g(\mathbf{y}) - \mathbf{\lambda}^{T} (A \mathbf{x} + B \mathbf{y} - \mathbf{c}) - \frac{\beta}{2} \left \lVert A \mathbf{x} + B \mathbf{y} - \mathbf{c}  \right \rVert^2 \]
Note that $\beta = 0$ gives the Lagrangian.

We first present the two-block Alternating Direction Method of Multipliers (ADMM) in Algorithm~\ref{alg:AlgoStandardADMM}. In every iteration of ADMM, we first maximize the augmented Lagrangian only with respect to $\mathbf{x}$ and then only with respect to $\mathbf{y}$, after which we update the dual multiplier $\lambda$.


\begin{algorithm} 
\label{alg:AlgoStandardADMM}
\SetAlgoLined
\SetKwInOut{Input}{Input}\SetKwInOut{Output}{Output}
\Input{Initial dual multiplier $\mathbf{\lambda}^{(0)}$, and initial vector $\mathbf{y}^{(0)}$}
 \For{$k = 0, 1, 2, ...$}{
  $\mathbf{x}^{(k+1)} = \argmax_{\mathbf{x} \in \mathcal{X}}  \mathcal{L}_{\beta} (\mathbf{x}, \mathbf{y}^{(k)}) $ \;
  $\mathbf{y}^{(k+1)} = \argmax_{\mathbf{y} \in \mathcal{Y}} \mathcal{L}_{\beta}(\mathbf{x}^{(k+1)}, \mathbf{y}) $ \;
  $\mathbf{\lambda}^{(k+1)} \leftarrow \mathbf{\lambda}^{(k)} + \beta (A \mathbf{x}^{(k+1)} + B \mathbf{y}^{(k+1)} - \mathbf{c})$ \;}
\caption{Two Block ADMM}
\end{algorithm}

A major focus of the ADMM literature has been towards the characterization of the convergence rate of Algorithm~\ref{alg:AlgoStandardADMM}. In particular, it has been established in \cite{admm-main-conv} that Algorithm~\ref{alg:AlgoStandardADMM} converges under concavity assumptions of the functions $f(\mathbf{x})$ and $g(\mathbf{y})$.

\begin{theorem} (\cite{admm-main-conv} Two Block ADMM Convergence) \label{thm:ADMMStandardConvergence}
If $f(\mathbf{x})$ and $g(\mathbf{y})$ are concave functions in their respective arguments then Algorithm~\ref{alg:AlgoStandardADMM} converges to the optimal solution of Problem~\eqref{eq:eqADMMIntro1}-\eqref{eq:conADMMIntro1} with rate $O(\frac{1}{k})$, where $k$ is the number of iterations of Algorithm~\ref{alg:AlgoStandardADMM}.
\end{theorem}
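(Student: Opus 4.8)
The plan is to reduce the statement to the standard variational-inequality (VI) analysis underlying \cite{admm-main-conv}, adapting it only to the maximization convention used here. First I would negate the objective to obtain the equivalent convex program $\min_{\mathbf{x} \in \mathcal{X}, \mathbf{y} \in \mathcal{Y}} \{-f(\mathbf{x}) - g(\mathbf{y}) : A\mathbf{x} + B\mathbf{y} = \mathbf{c}\}$ and observe that the iteration of Algorithm~\ref{alg:AlgoStandardADMM} is intrinsically unchanged, since maximizing $\mathcal{L}_{\beta}$ is the same as minimizing $-\mathcal{L}_{\beta}$. Writing $\mathbf{u} = (\mathbf{x},\mathbf{y})$, $\mathbf{w} = (\mathbf{x},\mathbf{y},\lambda)$, and $\theta(\mathbf{u}) = -f(\mathbf{x}) - g(\mathbf{y})$, I would characterize an optimal primal-dual point $\mathbf{w}^*$ as a solution of the VI $\theta(\mathbf{u}) - \theta(\mathbf{u}^*) + (\mathbf{w} - \mathbf{w}^*)^T F(\mathbf{w}^*) \geq 0$ for all feasible $\mathbf{w}$, where $F(\mathbf{w}) = (-A^T\lambda,\, -B^T\lambda,\, A\mathbf{x} + B\mathbf{y} - \mathbf{c})$. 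The key structural fact, verified by direct expansion, is that $F$ is affine with $(\mathbf{w} - \tilde{\mathbf{w}})^T(F(\mathbf{w}) - F(\tilde{\mathbf{w}})) = 0$, so in particular $F$ is monotone.

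Next I would extract, from the first-order optimality conditions of the $\mathbf{x}$- and $\mathbf{y}$-subproblems together with the dual update $\lambda^{k+1} = \lambda^{k} + \beta(A\mathbf{x}^{k+1} + B\mathbf{y}^{k+1} - \mathbf{c})$, a single per-iteration inequality of the form $\theta(\mathbf{u}) - \theta(\mathbf{u}^{k+1}) + (\mathbf{w} - \mathbf{w}^{k+1})^T F(\mathbf{w}^{k+1}) \geq \tfrac{1}{2}\bigl(\|\mathbf{v} - \mathbf{v}^{k+1}\|_H^2 - \|\mathbf{v} - \mathbf{v}^{k}\|_H^2\bigr)$, valid for every feasible $\mathbf{w}$, where $\mathbf{v} = (\mathbf{y},\lambda)$ is the part of the iterate that persists between rounds and $H = \mathrm{diag}(\beta B^T B,\, \tfrac{1}{\beta} I) \succeq 0$. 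The main obstacle is precisely this step: because ADMM is Gauss-Seidel (the $\mathbf{y}$-update already uses $\mathbf{x}^{k+1}$), the subproblem conditions produce cross terms coupling $\mathbf{x}^{k+1}$, $\mathbf{y}^{k+1}$, and the dual residual. These must be reorganized, using the dual-update relation to eliminate the price terms and the polarization identity $2(\mathbf{a}-\mathbf{b})^T H(\mathbf{d}-\mathbf{b}) = \|\mathbf{a}-\mathbf{b}\|_H^2 + \|\mathbf{d}-\mathbf{b}\|_H^2 - \|\mathbf{a}-\mathbf{d}\|_H^2$, into the clean telescoping right-hand side above, with the discarded terms being manifestly nonnegative.

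Finally I would invoke monotonicity of $F$ to replace $F(\mathbf{w}^{k+1})$ by $F(\mathbf{w})$, which preserves the inequality because it can only increase the left-hand side, then sum over $k = 0, \ldots, t-1$ so that the right-hand side telescopes to $-\tfrac{1}{2}\|\mathbf{v} - \mathbf{v}^{0}\|_H^2$, and divide by $t$. Setting the ergodic average $\tilde{\mathbf{w}}_t = \tfrac{1}{t}\sum_{k=1}^{t} \mathbf{w}^{k}$ and applying Jensen's inequality to the convex $\theta$, this yields $\theta(\tilde{\mathbf{u}}_t) - \theta(\mathbf{u}) + (\tilde{\mathbf{w}}_t - \mathbf{w})^T F(\mathbf{w}) \leq \tfrac{1}{2t}\|\mathbf{v} - \mathbf{v}^{0}\|_H^2$ for every feasible $\mathbf{w}$, which is exactly the statement that $\tilde{\mathbf{w}}_t$ solves the VI to accuracy $O(1/t)$; translating back through the sign flip gives the claimed $O(\tfrac{1}{k})$ convergence of Algorithm~\ref{alg:AlgoStandardADMM}. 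Since the result is quoted from \cite{admm-main-conv}, I would either carry out this argument in full or simply cite it, the only genuine adaptation being the bookkeeping for the maximization sign convention.
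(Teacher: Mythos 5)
There is no proof in the paper to compare your argument against: Theorem~\ref{thm:ADMMStandardConvergence} is imported verbatim from \cite{admm-main-conv} and used as a black box (its only role is to yield Theorem~\ref{thm:ADMMClassicalFisher} and Corollary~\ref{cor:ADMMClassicalFisher}). The relevant comparison is therefore with the proof in that reference, and your outline is a faithful reconstruction of it: pass to the equivalent minimization, characterize primal--dual optima of Problem~\eqref{eq:eqADMMIntro1}--\eqref{eq:conADMMIntro1} by the variational inequality with $F(\mathbf{w}) = (-A^T\lambda,\, -B^T\lambda,\, A\mathbf{x}+B\mathbf{y}-\mathbf{c})$, note that $F$ is affine and skew-symmetric (hence monotone), derive a per-iteration inequality that telescopes in the semi-norm induced by $H = \mathrm{diag}(\beta B^TB,\, \tfrac{1}{\beta}I)$ with $\mathbf{v}=(\mathbf{y},\lambda)$, then sum, apply Jensen, and conclude that the ergodic average satisfies the VI to accuracy $O(1/t)$. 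Your reading of the conclusion is also the right one: the $O(\frac{1}{k})$ rate asserted in the theorem is an ergodic rate for averaged iterates, measured by the VI (objective-plus-feasibility) residual, not a per-iterate distance bound.

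The one place your sketch is materially looser than the actual argument is the per-iteration inequality, and since you yourself call it the main obstacle, here is exactly where it bites. The inequality does not hold in the clean form you state at $\mathbf{w}^{k+1}$, i.e., with dual component $\lambda^{k+1}$: the $\mathbf{x}$-subproblem optimality condition naturally involves the intermediate multiplier $\hat\lambda^k = \lambda^k + \beta(A\mathbf{x}^{k+1}+B\mathbf{y}^{k}-\mathbf{c})$ (in the paper's maximization signs), which differs from $\lambda^{k+1}$ by $\beta B(\mathbf{y}^{k}-\mathbf{y}^{k+1})$. Evaluating the VI at $\mathbf{w}^{k+1}$ therefore leaves a cross term of the form $\beta(\mathbf{x}-\mathbf{x}^{k+1})^TA^TB(\mathbf{y}^{k}-\mathbf{y}^{k+1})$, which is sign-indefinite for arbitrary feasible $\mathbf{x}$ and cannot be absorbed into the $H$-norm telescope. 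The proof in \cite{admm-main-conv} avoids this by stating the inequality at the prediction point $\tilde{\mathbf{w}}^k = (\mathbf{x}^{k+1},\mathbf{y}^{k+1},\hat\lambda^k)$ and averaging over the $\tilde{\mathbf{w}}^k$ rather than the $\mathbf{w}^{k+1}$. Moreover, not all of the remaining discarded terms are ``manifestly'' nonnegative: one of them, essentially the inner product of the successive dual difference with $B$ times the successive $\mathbf{y}$-difference, is controlled by applying monotonicity of the (sub/super)differential of $g$ to the optimality conditions of two \emph{consecutive} $\mathbf{y}$-updates, which is a genuine extra ingredient rather than bookkeeping. With those two repairs your outline goes through; alternatively, citing \cite{admm-main-conv}, as the paper itself does, is of course sufficient.
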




Another well studied alternating direction method is the Alternating Minimization Algorithm (AMA) as proposed by Tseng \cite{ama-convergence}. This algorithm closely mirrors the ADMM algorithm other than the update of the first block of variables, which are updated using the Lagrangian $\mathcal{L}_{0}$ rather than the augmented Lagrangian $\mathcal{L}_{\beta}$ for some $\beta>0$. The maximization variant of the AMA algorithm is presented in Algorithm~\ref{alg:AlgoStandardAMA}.

\begin{algorithm} 
\label{alg:AlgoStandardAMA}
\SetAlgoLined
\SetKwInOut{Input}{Input}\SetKwInOut{Output}{Output}
\Input{Initial dual multiplier $\mathbf{\lambda}^{(0)}$, and initial vector $\mathbf{y}^{(0)}$}
 \For{$k = 0, 1, 2, ...$}{
  $\mathbf{x}^{(k+1)} = \argmax_{\mathbf{x} \in \mathcal{X}}  \mathcal{L}_{0}(\mathbf{x}, \mathbf{y}^{(k)}) $ \;
  $\mathbf{y}^{(k+1)} = \argmax_{\mathbf{y} \in \mathcal{Y}}  \mathcal{L}_{\beta}(\mathbf{x}^{(k+1)}, \mathbf{y}) $ \;
  $\mathbf{\lambda}^{(k+1)} \leftarrow \mathbf{\lambda}^{(k)} + \beta (A \mathbf{x}^{(k+1)} + B \mathbf{y}^{(k+1)} - \mathbf{c})$ \;}
\caption{Two Block AMA}
\end{algorithm}

The convergence of the above algorithm to the optimal solution of Problem~\eqref{eq:eqADMMIntro1}-\eqref{eq:conADMMIntro1} is guaranteed under the assumption that the function $f(\mathbf{x})$ is strongly concave \cite{ama-convergence}. A formal statement of this result is provided below. 

\begin{theorem} (\cite{ama-convergence} Two Block AMA Convergence) \label{thm:AMAStandardConvergence}
If $f(\mathbf{x})$ is a strongly concave function with strong concavity modulus $\sigma_f$ and $g(\mathbf{y})$ is a concave function then Algorithm~\ref{alg:AlgoStandardAMA} converges to the optimal solution of Problem~\eqref{eq:eqADMMIntro1}-\eqref{eq:conADMMIntro1} with rate $O(\frac{1}{k})$, when $\beta < 2\frac{\sigma_f}{\rho(B^T B)}$. Here $\rho(B^T B)$ is the spectral radius of the matrix $B^T B$ and $k$ is the number of iterations of Algorithm~\ref{alg:AlgoStandardAMA},.
\end{theorem}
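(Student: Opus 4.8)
The plan is to prove the theorem by recognizing the AMA iteration as a proximal gradient (forward--backward splitting) method applied to the Lagrangian dual of Problem~\eqref{eq:eqADMMIntro1}-\eqref{eq:conADMMIntro1}, and then to invoke the classical $O(\frac{1}{k})$ rate for proximal gradient schemes whose smooth component has a Lipschitz-continuous gradient. First I would form the dual function $d(\lambda) = \max_{\mathbf{x} \in \mathcal{X}, \mathbf{y} \in \mathcal{Y}} \mathcal{L}_0(\mathbf{x}, \mathbf{y})$ and observe that, since the coupling constraint is the only term linking the two blocks, it separates as
\begin{equation}
d(\lambda) = \lambda^{T} \mathbf{c} + \Phi(\lambda) + \Psi(\lambda), \qquad \Phi(\lambda) = \max_{\mathbf{x} \in \mathcal{X}}\{f(\mathbf{x}) - \lambda^{T} A \mathbf{x}\}, \qquad \Psi(\lambda) = \max_{\mathbf{y} \in \mathcal{Y}}\{g(\mathbf{y}) - \lambda^{T} B \mathbf{y}\}.
\end{equation}
Because the primal is a maximization, the dual problem is $\min_{\lambda} d(\lambda)$, and I would split $d$ into the smooth piece $\lambda^{T}\mathbf{c} + \Phi(\lambda)$ and the (generally nonsmooth) piece $\Psi(\lambda)$.

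The central structural claim is that one full sweep of Algorithm~\ref{alg:AlgoStandardAMA} equals one forward--backward step on this dual, and I would establish it in two pieces. The $\mathbf{x}$-update $\mathbf{x}^{(k+1)} = \argmax_{\mathbf{x}} \mathcal{L}_0(\mathbf{x}, \mathbf{y}^{(k)})$ is exactly the inner maximization defining $\Phi$ at the current multiplier, so by Danskin's theorem $\nabla \Phi(\lambda^{(k)}) = -A\mathbf{x}^{(k+1)}$; together with the linear term $\lambda^{T}\mathbf{c}$ this yields the explicit (gradient) step. The $\mathbf{y}$-update, which uses the augmented Lagrangian $\mathcal{L}_\beta$ with parameter $\beta$, followed by the multiplier update $\lambda^{(k+1)} \leftarrow \lambda^{(k)} + \beta(A\mathbf{x}^{(k+1)} + B\mathbf{y}^{(k+1)} - \mathbf{c})$, I would show is precisely the proximal (implicit) step for $\Psi$ with proximal parameter $\beta$. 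Chaining the two recovers the forward--backward recursion $\lambda^{(k+1)} = \mathrm{prox}_{\beta \Psi}\big(\lambda^{(k)} - \beta\,\nabla(\lambda^{T}\mathbf{c} + \Phi)(\lambda^{(k)})\big)$, which identifies the method and is the cleanest route to the rate.

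The crux, and the step I expect to be the main obstacle, is showing that $\Phi$ is continuously differentiable with a Lipschitz-continuous gradient, since forward--backward splitting converges only when the explicit step acts on a smooth function and the step size stays below $2/L$, with $L$ the Lipschitz modulus of that gradient. Here strong concavity of $f$ does precisely this work: for strongly concave $f$ the maximizer $\mathbf{x}(\lambda)$ defining $\Phi$ is unique and varies $\frac{1}{\sigma_f}$-Lipschitz continuously in its argument, which forces $\nabla \Phi$ to be Lipschitz with a modulus governed by $\sigma_f$ and the spectral radius of the relevant coupling matrix; by contrast, mere concavity of $g$ only needs to guarantee that $\mathrm{prox}_{\beta \Psi}$ is firmly nonexpansive, which is all the implicit step requires. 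With smoothness in hand, the stated restriction is exactly the classical forward--backward requirement $\beta < 2/L$, and the $O(\frac{1}{k})$ dual rate follows from the standard descent-lemma/telescoping argument for proximal gradient methods. Finally I would translate dual suboptimality back to the primal: the recovered iterates converge to a primal optimum at the same $O(\frac{1}{k})$ rate because the duality gap is controlled jointly by the dual suboptimality and the vanishing residual $A\mathbf{x}^{(k)} + B\mathbf{y}^{(k)} - \mathbf{c}$. For the detailed constants and the complete descent analysis I would defer to \cite{ama-convergence}.
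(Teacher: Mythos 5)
The paper never proves this theorem: it is imported as a black-box result from \cite{ama-convergence}, so there is no internal proof to compare against. Your reconstruction is the standard argument behind that citation --- Tseng's identification of AMA with forward--backward splitting (proximal gradient) applied to the Lagrangian dual --- and its key steps are sound: the $\mathbf{x}$-update computes $\nabla \Phi(\lambda^{(k)}) = -A\mathbf{x}^{(k+1)}$ by Danskin's theorem, the $\mathbf{y}$-update followed by the multiplier update is exactly $\mathrm{prox}_{\beta \Psi}$ evaluated at the gradient step (this follows from matching the optimality condition $B^T\lambda^{(k+1)} \in \partial g(\mathbf{y}^{(k+1)})$ with the prox characterization), and strong concavity of $f$ is precisely what makes $\nabla \Phi$ Lipschitz and licenses the explicit step. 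Two caveats are worth recording. First, carried out carefully, your Danskin step gives the Lipschitz modulus $L = \rho(A^T A)/\sigma_f$, so the step-size bound involves the matrix $A$ coupling the \emph{strongly concave} block, which is what Tseng and Goldstein et al.\ actually prove; the theorem as stated in the paper writes $\rho(B^T B)$, an inconsistency of the paper's own notation (its later applications silently redefine $B$ to be the matrix of coefficients of $\mathbf{x}$), not a defect of your argument. Second, the ``standard descent-lemma/telescoping argument'' you invoke yields the $O(\frac{1}{k})$ objective rate only for $\beta \leq 1/L$; on the wider range $\beta < 2/L$ stated in the theorem, convergence follows from averaged-operator/firm-nonexpansiveness arguments, but the simple telescoping proof does not directly apply, so obtaining the rate on the full stated range requires the more careful analysis in \cite{ama-convergence} --- which is exactly the part you defer to the reference.
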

Here the strong concavity modulus is defined as follows.

\begin{definition} (Strong Convexity Modulus) \label{def:strong-convexity}
We say a function $f: \mathbb{R}^{n} \rightarrow \mathbb{R}$ has a strong concavity modulus $\sigma_f \geq 0$ if for all $x, y \in \mathbb{R}^{n}$ and $t \in[0,1]$ we have that: 
$$
f(t y + (1-t) x) \geq t  f(y)+(1-t) f(x) + \frac{\sigma_f}{2}  t (1-t) \|x-y\|_{2}^{2}
$$
\end{definition}




\subsection{AMA for Fisher Markets with Homogeneous Linear Constraints} \label{AMA-convergence}

Traditional tatonnement algorithms for classical Fisher markets involve agents distributedly solving their individual optimization problems with prices of the goods being adjusted upwards when the demand for that good exceeds supply and adjusted downwards when supply exceeds demand. In this section, we present and characterize the convergence of an AMA based tatonnement algorithm to distributedly compute market equilibrium prices when agents have homogeneous linear constraints. We use the algorithm and results for AMA presented in Section~\ref{ADMM-math-prelims}.

To apply AMA, we begin by reformulating \acrshort{bp-sop} in the two-block form as in the canonical form of Problem~\eqref{eq:eqADMMIntro1}-\eqref{eq:conADMMIntro1}. To motivate the two-block form that we choose, we note that Algorithm~\ref{alg:AlgoStandardAMA} for AMA involves the formulation of an augmented Lagrangian, which is non-separable due to the capacity constraints that couple the allocations of all agents. Thus, to achieve a distributed implementation we introduce a new vector of consumption bundles $\mathbf{y}_i \in \mathbb{R}^m$ for each agent $i$ and define the following reformulation of \acrshort{bp-sop}, denoted as \acrshort{bp-sop-adm}:

\begin{maxi!}|s|[2]                   
    {\mathbf{x}_i \in \mathcal{X}_i, \mathbf{y}_i \in \mathcal{Y}_i, \forall i \in [n]}                               
    {\sum_i w_i \log \left(u_i(\mathbf{x}_i)\right) \label{eq:AMAOptPhys}}   
    {\label{eq:AMAPhysExample1}}             
    {}                                
    \addConstraint{\sum_i y_{ij}}{ = \Bar{s}_j, \forall j \in [m] \label{eq:AMAOptPhys1}}    
    \addConstraint{\mathbf{y}_i}{ = \mathbf{x}_i, \forall i \in [n] \label{eq:AMAOptPhys2}}
\end{maxi!}
Here $\mathcal{X}_i =  \{\mathbf{x}_i: \mathbf{x}_i \in \mathbb{R}_{\geq 0}^{m}, A_t^{(i)} \mathbf{x}_i \leq 0, \forall t \in T_i \}$, $\mathcal{Y}_i = \{\mathbf{y}_i: \mathbf{y}_i \in \mathbb{R}^m\}$ and $u_i(\mathbf{x}_i)$ is a homogeneous degree one utility function. Note further that $\lambda_i = 0$ by Theorem~\ref{thm:thm2} since all the additional linear constraints are homogeneous, i.e., $b_{it} = 0$ for all $i, t$. To ensure the separability of the augmented Lagrangian in terms of the individual consumption bundles $\mathbf{x}_i$, we replaced $x_{ij}$ with $y_{ij}$ in~\eqref{eq:AMAOptPhys1} and added the constraint $\mathbf{x}_i = \mathbf{y}_i$ in~\eqref{eq:AMAOptPhys2}. Note that with the nature of the added constraints the problem \acrshort{bp-sop-adm} will yield the same solution as the problem \acrshort{bp-sop}. Further, note that the objective is independent of $\mathbf{y}_i$, which implies that the objective is $h(\mathbf{x}, \mathbf{y}) = f(\mathbf{x}) + g(\mathbf{y})$, where $\mathbf{x} = (\mathbf{x}_1, ..., \mathbf{x}_n)$, $\mathbf{y} = (\mathbf{y}_1, ..., \mathbf{y}_n)$, $f(\mathbf{x}) = \sum_i w_i \log \left(\sum_j u_{ij} x_{ij}\right)$ and $g(\mathbf{y}) = 0$.

To present the AMA algorithm for the reformulated problem \acrshort{bp-sop-adm}, we denote $\mathbf{p}$ as the dual variable of the capacity constraint~\eqref{eq:AMAOptPhys1} and $\mathbf{\Tilde{p}}_i$ as the dual variable of the constraint~\eqref{eq:AMAOptPhys2} for each agent $i$. Here $\mathbf{p}$ is interpreted as the price set in the market and $\mathbf{\Tilde{p}}_i$ as the individual specific price vectors. Algorithm~\ref{alg:Algo-BP-SOP-AMA} presents the distributed implementation to solve \acrshort{bp-sop-adm} using the AMA algorithm elucidated in Algorithm~\ref{alg:AlgoStandardAMA}.

\begin{algorithm} 
\label{alg:Algo-BP-SOP-AMA}
\SetAlgoLined
\SetKwInOut{Input}{Input}\SetKwInOut{Output}{Output}
\Input{Initial price vector $\mathbf{p}^{(0)}$, Individual price vectors $\mathbf{\Tilde{p}}_i^{(0)}$ for all $i$, Initial baseline demand $\mathbf{y}_i^{(0)}$, and Step size $\beta$}
\Output{Equilibrium Price vector $\mathbf{p}^*$}
 \For{$k = 0, 1, 2, ...$}{
  {$\mathbf{x}_i^{(k+1)} = \argmax_{\mathbf{x}_i \in \mathcal{X}_i} \left\{w_i \log \left(u_i(\mathbf{x}_i) \right) - \sum_j \Tilde{p}_{ij}^{(k)} x_{ij}  \right\} $}, for all $i$ \;
  $\mathbf{y}^{(k+1)} = \argmax_{\mathbf{y}} \small\{\sum_i \sum_j \Tilde{p}_{ij}^{(k)} y_{ij} - \frac{\beta}{2} \sum_{i, j} (x_{ij}^{(k+1)}-y_{ij})^2 - \sum_j p_j^{(k)} (\sum_i y_{ij} - \Bar{s}_j) - \frac{\beta}{2} \sum_j \left(\sum_i y_{ij} - \Bar{s}_j \right)^2  \small\}$ \;
  $\mathbf{p}^{(k+1)} \leftarrow \mathbf{p}^{(k)} + \beta (\sum_{i} \mathbf{y}_i^{(k+1)} - \Bar{\mathbf{s}})$, where $\Bar{\mathbf{s}} = (\Bar{s}_1, ..., \Bar{s}_m)$ \;
  $\mathbf{\Tilde{p}}_i^{(k+1)} \leftarrow \mathbf{\Tilde{p}}_i^{(k)} + \beta ( \mathbf{x}_i^{(k+1)} - \mathbf{y}_i^{(k+1)}))$ for all $i$
  }
\caption{Two Block AMA for \acrshort{bp-sop-adm}}
\end{algorithm}

A few comments about Algorithm~\ref{alg:Algo-BP-SOP-AMA} are in order. First, the separability of the augmented Lagrangian of \acrshort{bp-sop-adm} enabled each agent to solve an individual objective through an update of the $\mathbf{x}_i$'s at each step, which gives the desired distributed implementation. In fact each individual agent solves their individual optimization problem when observing a price vector $\mathbf{\Tilde{p}}_i^{(k)}$ at iteration $k$ due to the equivalence of the KKT conditions of the \acrshort{iop} and the objective ($w_i \log \left(u_i(\mathbf{x}_i) \right) - \sum_j \Tilde{p}_{ij}^{(k)} x_{ij}$) in Algorithm~\ref{alg:Algo-BP-SOP-AMA} \cite{nesterov-fisher-gale}. Finally, we note that the individual objectives of agents in the update of $\mathbf{x}_i$'s depend on individually observed prices rather than the market prices. However, if we initialize the individual price vectors to the initial market price, then at each iteration the individual price vectors of agents will always equal the market price. We formalize this in the following proposition.


\begin{proposition} \label{prop-prices-same} (Equivalence of Individual and Market Prices)
If $\mathbf{p}^{(0)} = \mathbf{\Tilde{p}}_i^{(0)}$ for all $i$ in Algorithm~\ref{alg:Algo-BP-SOP-AMA} then $\mathbf{p}^{(k)} = \mathbf{\Tilde{p}}_i^{(k)}$ for all $i$ at each iteration $k$.
\end{proposition}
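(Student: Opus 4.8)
The plan is to argue by induction on the iteration counter $k$, with the base case $k=0$ supplied directly by the hypothesis $\mathbf{p}^{(0)} = \mathbf{\Tilde{p}}_i^{(0)}$ for all $i$. For the inductive step I would establish $\mathbf{\Tilde{p}}_i^{(k+1)} = \mathbf{p}^{(k+1)}$ for every $i$, and the key observation is that this equality can be read off directly from the first-order optimality conditions of the $\mathbf{y}$-subproblem in Algorithm~\ref{alg:Algo-BP-SOP-AMA}, rather than from the inductive hypothesis on the prices themselves.

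Concretely, I would first note that since $\mathcal{Y}_i = \mathbb{R}^m$ is unconstrained and the $\mathbf{y}$-objective is a strictly concave quadratic — its Hessian is negative definite owing to the $-\tfrac{\beta}{2}\sum_{i,j}(x_{ij}^{(k+1)}-y_{ij})^2$ term — the maximizer $\mathbf{y}^{(k+1)}$ is the unique stationary point. I would then differentiate the $\mathbf{y}$-objective with respect to $y_{ij}$ and set it to zero, obtaining
\begin{equation*}
\Tilde{p}_{ij}^{(k)} + \beta\bigl(x_{ij}^{(k+1)} - y_{ij}^{(k+1)}\bigr) - p_j^{(k)} - \beta\Bigl(\textstyle\sum_{i'} y_{i'j}^{(k+1)} - \Bar{s}_j\Bigr) = 0 .
\end{equation*}
Grouping the terms attached to agent $i$ on one side and the market-wide terms on the other yields
\begin{equation*}
\underbrace{\Tilde{p}_{ij}^{(k)} + \beta\bigl(x_{ij}^{(k+1)} - y_{ij}^{(k+1)}\bigr)}_{=\,\Tilde{p}_{ij}^{(k+1)}} \;=\; \underbrace{p_j^{(k)} + \beta\Bigl(\textstyle\sum_{i'} y_{i'j}^{(k+1)} - \Bar{s}_j\Bigr)}_{=\,p_j^{(k+1)}} ,
\end{equation*}
where the two under-braced quantities are verbatim the dual-update rules for $\mathbf{\Tilde{p}}_i$ and $\mathbf{p}$ appearing in the last two lines of Algorithm~\ref{alg:Algo-BP-SOP-AMA}. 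Hence $\Tilde{p}_{ij}^{(k+1)} = p_j^{(k+1)}$ for all $i$ and $j$, i.e.\ $\mathbf{\Tilde{p}}_i^{(k+1)} = \mathbf{p}^{(k+1)}$, which closes the induction.

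I do not expect a substantive obstacle here; the only care needed is to confirm that the $\mathbf{y}$-subproblem is a genuine concave program so that stationarity is both necessary and sufficient, and then to track signs carefully when differentiating the two quadratic penalties so that the dual-update expressions re-emerge intact. It is worth remarking that this computation actually proves a slightly stronger statement: the stationarity condition forces $\mathbf{\Tilde{p}}_i^{(k)} = \mathbf{p}^{(k)}$ for every $k \geq 1$ regardless of the initialization, so the hypothesis $\mathbf{p}^{(0)} = \mathbf{\Tilde{p}}_i^{(0)}$ is in fact needed only to cover the base case $k=0$.
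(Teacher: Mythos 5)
Your proof is correct and follows essentially the same route as the paper's: take the first-order condition of the strictly concave $\mathbf{y}$-subproblem and read off that the two sides of the stationarity equation are verbatim the dual-update rules for $\mathbf{\Tilde{p}}_i$ and $\mathbf{p}$. Your closing remark is also accurate — the paper's proof, though framed as an induction, never actually invokes the inductive hypothesis, so the equality indeed holds for all $k \geq 1$ regardless of initialization; and your sum over agents, $\sum_{i'} y_{i'j}^{(k+1)}$, corrects a subscript slip in the paper's displayed first-order condition, which writes the supply-gap sum over goods.
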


\begin{proof}
Since we have $\mathbf{p}^{(0)} = \mathbf{\Tilde{p}}_i^{(0)}$, to prove this claim it suffices to show that if $\mathbf{p}^{(k)} = \mathbf{\Tilde{p}}_i^{(k)}$ at iteration $k$ then $\mathbf{p}^{(k+1)} = \mathbf{\Tilde{p}}_i^{(k+1)}$. Thus, suppose $\mathbf{p}^{(k)} = \mathbf{\Tilde{p}}_i^{(k)}$ at iteration $k$. Then to update $\mathbf{y}_i$ for all $i$, we consider:

\[ \max_{\{\mathbf{y}_i\}_{i = 1}^{n}} \left\{\sum_i \sum_j \Tilde{p}_{ij}^{(k)} \mathbf{y}_{ij} - \frac{\beta}{2} \sum_{i, j} (x_{ij}^{(k+1)}-y_{ij})^2 - \sum_j p_j^{(k)} (\sum_i y_{ij} - \Bar{s}_j) - \frac{\beta}{2} \sum_j \left(\sum_i y_{ij} - \Bar{s}_j \right)^2  \right\} \]
Taking the first order condition of the above objective with respect to $y_{ij}$, we have that:

\[ \Tilde{p}_{ij}^{(k)} + \beta (x_{ij}^{(k+1)}-y_{ij}^{(k+1)}) - p_j^{(k)} - \beta (\sum_{j} y_{ij}^{(k+1)} - \Bar{s}_j) = 0 \]
This implies that:

\[ \Tilde{p}_{ij}^{(k)} + \beta (x_{ij}^{(k+1)}-y_{ij}^{(k+1)}) = p_j^{(k)} + \beta (\sum_{j} y_{ij}^{(k+1)} - \Bar{s}_j) \]
Since the above two terms are equivalent, we have that:

\[ \Tilde{p}_{ij}^{(k+1)} = \Tilde{p}_{ij}^{(k)} + \beta (x_{ij}^{(k+1)} - y_{ij}^{(k+1)}) = p_j^{(k)} + \beta (\sum_{j} y_{ij}^{(k+1)} - \Bar{s}_j) = p_j^{(k+1)} \]
Since $j$ is arbitrary, we have proven our claim that for all agents $i$ we have $\mathbf{p}^{(k+1)} = \mathbf{\Tilde{p}}_i^{(k+1)}$.
\end{proof}

The significance of Proposition~\ref{prop-prices-same} is that each agent distributedly solves her individual optimization problem given the market prices rather than individually observed prices. Furthermore, the equivalence between the individual and market prices implies that the individual price vectors can be completely dropped from Algorithm~\ref{alg:Algo-BP-SOP-AMA}, thereby simplifying it. We elucidate this simplification when presenting the ADMM algorithm in Section~\ref{ADMM-classical-fisher}.

\subsubsection{Convergence Properties of Algorithm~\ref{alg:Algo-BP-SOP-AMA}}

We now turn to characterizing the conditions for the convergence of Algorithm~\ref{alg:Algo-BP-SOP-AMA} to the market equilibrium price. To this end, we require by Theorem~\ref{thm:AMAStandardConvergence} that one of the terms of the objective function $h(\mathbf{x}, \mathbf{y}) = f(\mathbf{x}) + g(\mathbf{y})$ is strictly concave. Note that strict concavity is sufficient for strong concavity provided that the function takes on values in a compact set, which is guaranteed, for instance, when the prices of the goods are strictly positive at each iteration. Since in Problem~\eqref{eq:AMAOptPhys}-\eqref{eq:AMAOptPhys2}, $g(\mathbf{y}) = 0$, we must have that $f(\mathbf{x}) = \sum_i w_i  \log \left(u_i(\mathbf{x}_i) \right)$ is strictly concave. Theorem~\ref{thm:AMAClassicalFisher} establishes that Algorithm~\ref{alg:Algo-BP-SOP-AMA} converges to the equilibrium price vector with rate $O(\frac{1}{k})$ if the homogeneous degree one utility functions $u_i(\mathbf{x}_i)$ are strictly concave. This is formalized through the following result.

\begin{theorem} (Convergence of AMA for \acrshort{bp-sop-adm}) \label{thm:AMAClassicalFisher}
Suppose a market equilibrium for \acrshort{bp-sop-adm} exists. If $u_i(\mathbf{x}_i)$ is any homogeneous degree 1 utility function that is strictly concave then Algorithm~\ref{alg:Algo-BP-SOP-AMA} converges to the equilibrium price vector of \acrshort{bp-sop-adm} with rate $O(\frac{1}{k})$ if $\beta < \frac{2 \sigma_{f}}{\rho(B^T B)}$ and the price vector is strictly positive at each iteration. Here $k$ is the number of iterations of Algorithm~\ref{alg:Algo-BP-SOP-AMA}, $\sigma_f$ is the strong concavity parameter for the function $f(\mathbf{x}) = \sum_i w_i  \log \left(u_i(\mathbf{x}_i) \right)$ and $B$ is the matrix corresponding to the coefficients of $\mathbf{x}$ in the Constraints~\eqref{eq:AMAOptPhys1}-\eqref{eq:AMAOptPhys2}.
\end{theorem}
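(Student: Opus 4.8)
The plan is to cast the reformulated problem \acrshort{bp-sop-adm} into the canonical two-block form of Problem~\eqref{eq:eqADMMIntro1}--\eqref{eq:conADMMIntro1}, verify that Algorithm~\ref{alg:Algo-BP-SOP-AMA} is exactly the specialization of the generic Two Block AMA of Algorithm~\ref{alg:AlgoStandardAMA} to this instance, and then invoke the standard AMA guarantee of Theorem~\ref{thm:AMAStandardConvergence}. Concretely, I would set $\mathbf{x} = (\mathbf{x}_1, \dots, \mathbf{x}_n)$ and $\mathbf{y} = (\mathbf{y}_1, \dots, \mathbf{y}_n)$ with $\mathcal{X} = \prod_i \mathcal{X}_i$ and $\mathcal{Y} = \prod_i \mathcal{Y}_i$, take $f(\mathbf{x}) = \sum_i w_i \log(u_i(\mathbf{x}_i))$ (the first, non-augmented block) and $g(\mathbf{y}) = 0$, and read off the coupling matrices from the capacity Constraints~\eqref{eq:AMAOptPhys1} and the consistency Constraints~\eqref{eq:AMAOptPhys2}: $A$ collects the identity blocks multiplying each $\mathbf{x}_i$ in $\mathbf{x}_i = \mathbf{y}_i$, while $B$ collects the $-I$ blocks of the consistency constraints together with the summation rows $\sum_i y_{ij} = \Bar{s}_j$, with $\mathbf{c} = (\Bar{\mathbf{s}}, \mathbf{0})$. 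With $\mathbf{p}$ the dual variable of~\eqref{eq:AMAOptPhys1} and $\Tilde{\mathbf{p}}_i$ the dual variables of~\eqref{eq:AMAOptPhys2}, the $\mathbf{x}$-update in Algorithm~\ref{alg:Algo-BP-SOP-AMA} is precisely the maximization of $\mathcal{L}_0$ over the first block (which by the equivalence of its KKT conditions is each agent's \acrshort{iop}), the $\mathbf{y}$-update is the maximization of $\mathcal{L}_\beta$ over the second block, and the two dual updates are the single generic multiplier step $\lambda^{(k+1)} = \lambda^{(k)} + \beta(A\mathbf{x}^{(k+1)} + B\mathbf{y}^{(k+1)} - \mathbf{c})$ read blockwise.

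It then remains to check the hypotheses of Theorem~\ref{thm:AMAStandardConvergence}. The second-block function $g(\mathbf{y}) = 0$ is trivially concave, so the crux is to certify that $f$ is strongly concave with modulus $\sigma_f > 0$, and this is the main obstacle. The difficulty is genuine: a homogeneous degree one utility has $\nabla^2 u_i(\mathbf{x}_i)\,\mathbf{x}_i = 0$ along the ray direction, and although the logarithm restores curvature along rays, the Hessian of $\log(u_i)$ is homogeneous of degree $-2$, so its curvature decays like $1/\|\mathbf{x}_i\|^2$ as allocations grow large. Consequently $f$ is \emph{not} strongly concave on all of $\mathcal{X}$, and the naive reading of ``strictly concave'' is insufficient. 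The remedy I would make precise is to exploit the hypothesis that the price vector stays strictly positive at every iteration: strict positivity of $\mathbf{p}^{(k)}$ together with the budget and capacity data bounds the affordable and feasible demands, confining the iterates $\mathbf{x}^{(k)}$ to a fixed bounded set $K \subset \mathcal{X}$ that is bounded away from the degenerate large-scale regime. On $K$ the strict concavity of each $u_i$ transverse to rays, combined with the $-1$ ray curvature of $\log(u_i)$, makes $\nabla^2 f$ uniformly negative definite, so $\sigma_f = \inf_{K} \lambda_{\min}(-\nabla^2 f) > 0$ is attained; this is exactly the constant entering the step-size bound $\beta < 2\sigma_f / \rho(B^T B)$.

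With $f$ strongly concave on $K$, $g$ concave, and $\beta < 2\sigma_f/\rho(B^T B)$, Theorem~\ref{thm:AMAStandardConvergence} yields convergence of the primal-dual iterates of Algorithm~\ref{alg:Algo-BP-SOP-AMA} to an optimal solution of \acrshort{bp-sop-adm} at rate $O(\tfrac{1}{k})$. The final step is to translate this into convergence of the equilibrium \emph{price}. Since the added constraints $\mathbf{y}_i = \mathbf{x}_i$ make \acrshort{bp-sop-adm} and \acrshort{bp-sop} share the same optimal allocation, and since the constraints here are homogeneous so that $\lambda_i = 0$ and \acrshort{bp-sop} coincides with \acrshort{sop-l} (Corollary~\ref{cor:homogeneous-SOP1}), the optimal dual variable of the capacity Constraints~\eqref{eq:AMAOptPhys1} is precisely the equilibrium price vector $\mathbf{p}^*$. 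Hence the convergence $\mathbf{p}^{(k)} \to \mathbf{p}^*$ guaranteed above is exactly convergence to the market equilibrium price at the stated $O(\tfrac{1}{k})$ rate; Proposition~\ref{prop-prices-same} additionally ensures $\Tilde{\mathbf{p}}_i^{(k)} = \mathbf{p}^{(k)}$, so the distributed price updates are genuinely driven by the common market price. This completes the argument.
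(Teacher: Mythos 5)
Your proposal is correct and follows essentially the same route as the paper: both reduce the claim to the generic two-block AMA guarantee of Theorem~\ref{thm:AMAStandardConvergence} by checking that $\mathcal{X}$ and $\mathcal{Y}$ are closed convex sets, that $g(\mathbf{y}) = 0$ is concave, and that $f(\mathbf{x}) = \sum_i w_i \log(u_i(\mathbf{x}_i))$ supplies the needed concavity, with the dual update of the capacity constraint delivering the price iterates. If anything, you are more careful than the paper at the one delicate point: the paper's proof simply asserts that $f$ is \emph{strictly} concave and then invokes a theorem demanding \emph{strong} concavity with modulus $\sigma_f$ (the compactness-from-positive-prices remark appears only in the discussion preceding the theorem), whereas you make explicit why strong concavity fails globally (the Hessian of $\log u_i$ is homogeneous of degree $-2$, so curvature decays as $1/\|\mathbf{x}_i\|^2$) and how the strictly-positive-price hypothesis confines the iterates to a bounded set on which $\sigma_f > 0$, which is exactly the repair the paper's argument implicitly needs.
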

\begin{proof}
To prove the above theorem, it suffices for us to ensure that we satisfy the assumptions of Theorem~\ref{thm:AMAStandardConvergence}, since the algorithm that we have applied is identical to Algorithm~\ref{alg:AlgoStandardAMA}. We first observe that there is no restriction on $\mathbf{y}_i$ for all agents $i$ and so $\mathcal{Y} = \mathbb{R}^{nm}$ is a closed convex subset of $\mathbb{R}^{n_2}$, where $n_2 = nm$. Next, we have that $\mathcal{X} = \{(\mathbf{x}_1, ..., \mathbf{x}_n): (\mathbf{x}_1, ..., \mathbf{x}_n): \mathbf{x}_i \in \mathbb{R}_{\geq 0}^m, A_t^{(i)} \mathbf{x}_i \leq 0, \forall t \in T_i\}$ is also a closed convex subset of $\mathbb{R}^{n_1}$, where $n_1 = nm$.

Next, we observe that $g(\mathbf{y}_1, ..., \mathbf{y}_n) = 0$ and so it is certainly a concave function. Furthermore, $f(\mathbf{x}_1, ..., \mathbf{x}_n) = \sum_i w_i \log(u_i(\mathbf{x}_i))$ is a strictly concave function, since $u_i(\mathbf{x}_i)$ is strictly concave homogeneous degree 1 utility function.

Since, we have satisfied all of the requisite assumptions of Theorem~\ref{thm:AMAStandardConvergence}, we have that Algorithm~\ref{alg:Algo-BP-SOP-AMA} converges to the equilibrium price vector with a rate $O(\frac{1}{k})$ for any strictly utility function that is homogeneous of degree 1 provided $\beta < \frac{2 \sigma_{f}}{\rho(B^T B)}$.
\end{proof}
This result implies a convergence to the equilibrium price vector with agents distributedly solving their individual optimization problems as long as the utility functions are strictly concave. However, the step size $\beta$ depends on the strong concavity parameter of $f$, which depends on the utility function of agents that a market designer may not have complete knowledge of \emph{a priori}. We resolve this drawback of AMA with the ADMM algorithm, which we present in Section~\ref{ADMM-classical-fisher}.

This convergence result also extends to classical Fisher markets without additional linear constraints. In the classical Fisher market setup the set $\mathcal{X}$ is now $\{(\mathbf{x}_1, ..., \mathbf{x}_n): \mathbf{x}_i \in \mathbb{R}_{\geq 0}^m \}$, which is also a closed and convex set. Note that at each iteration if the prices are strictly positive then due to the budget constraints of buyers the set of values $\mathbf{x}_i$ for each agent $i$ at each iteration lies in a bounded set. Corollary~\ref{cor:AMAClassicalFisher} shows that Algorithm~\ref{alg:Algo-BP-SOP-AMA} converges to the equilibrium price vector for classical Fisher markets.

\begin{corollary} (Convergence of AMA for Classical Fisher Markets) \label{cor:AMAClassicalFisher}
If $u_i(\mathbf{x}_i)$ is a strictly concave homogeneous degree one utility function then Algorithm~\ref{alg:Algo-BP-SOP-AMA} converges to the equilibrium price vector for classical Fisher markets with rate $O(\frac{1}{k})$ if $\beta < \frac{2 \sigma_{f}}{\rho(B^T B)}$ and the price vector is strictly positive at each iteration. Here $k$ is the number of iterations of Algorithm~\ref{alg:Algo-BP-SOP-AMA}, $\mathcal{X}_i = \{\mathbf{x}_i : \mathbf{x}_i \in \mathbb{R}_{\geq 0}^m \}$, $\sigma_f$ is the strong concavity parameter for the function $f(\mathbf{x}) = \sum_i w_i  \log \left(u_i(\mathbf{x}_i) \right)$ and $B$ is the matrix corresponding to the coefficients of $\mathbf{x}$ in the Constraints~\eqref{eq:AMAOptPhys1}-\eqref{eq:AMAOptPhys2}.
\end{corollary}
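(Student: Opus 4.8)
The plan is to derive this corollary as a direct specialization of Theorem~\ref{thm:AMAClassicalFisher}, since a classical Fisher market is precisely the instance of \acrshort{bp-sop-adm} in which no agent carries any linear Constraints~\eqref{eq:con2}, so that each feasible set collapses from $\{\mathbf{x}_i : \mathbf{x}_i \in \mathbb{R}_{\geq 0}^m, A_t^{(i)}\mathbf{x}_i \leq 0, \forall t \in T_i\}$ to $\mathcal{X}_i = \{\mathbf{x}_i : \mathbf{x}_i \in \mathbb{R}_{\geq 0}^m\}$. Concretely, I would invoke the standard AMA guarantee of Theorem~\ref{thm:AMAStandardConvergence} and verify its hypotheses for the present reformulation. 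First I would check the structural assumptions, exactly mirroring the corresponding step in the proof of Theorem~\ref{thm:AMAClassicalFisher}: the product set $\mathcal{X} = \prod_i \mathcal{X}_i$ is the non-negative orthant in $\mathbb{R}^{nm}$, hence closed and convex; $\mathcal{Y} = \mathbb{R}^{nm}$ is trivially closed and convex; and $g(\mathbf{y}) = 0$ is concave. These are immediate, the only change from the linearly constrained case being that the inequalities $A_t^{(i)}\mathbf{x}_i \leq 0$ have been dropped from the description of $\mathcal{X}_i$.

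The substantive step is to verify that $f(\mathbf{x}) = \sum_i w_i \log\left(u_i(\mathbf{x}_i)\right)$ meets the strong-concavity hypothesis with a modulus $\sigma_f > 0$, as required for the step-size bound $\beta < 2\sigma_f/\rho(B^T B)$ in Theorem~\ref{thm:AMAStandardConvergence}. Strict concavity of each homogeneous degree one $u_i$ makes $f$ strictly concave, but strict concavity alone need not furnish a uniform positive modulus over all of $\mathbb{R}_{\geq 0}^{nm}$; one must first confine the iterates to a compact set. Here I would use the hypothesis that the price vector remains strictly positive at every iteration together with the budget structure of the market: each $\mathbf{x}_i$ update maximizes $w_i \log\left(u_i(\mathbf{x}_i)\right) - \sum_j \tilde{p}_{ij}^{(k)} x_{ij}$, and because $u_i$ is homogeneous of degree one the utility term grows only logarithmically along any ray while the strictly positive linear cost term grows linearly, so the maximizer lies at a finite point and the allocations stay within a bounded region. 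On the closure of this region strict concavity yields a genuine strong-concavity modulus $\sigma_f > 0$, which is exactly the parameter appearing in the step-size condition.

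With all hypotheses in hand, Theorem~\ref{thm:AMAStandardConvergence} gives $O(\frac{1}{k})$ convergence of Algorithm~\ref{alg:Algo-BP-SOP-AMA} to the optimal primal-dual solution of the reformulated Problem~\eqref{eq:AMAOptPhys}-\eqref{eq:AMAOptPhys2}. To translate this into convergence to the equilibrium price vector I would appeal to the established correspondence between the convex program and the market equilibrium: for classical Fisher markets every constraint is (vacuously) homogeneous, so $\lambda_i = \sum_{t \in T_i} r_{it} b_{it} = 0$, and by Corollary~\ref{cor:homogeneous-SOP1} the dual variable $\mathbf{p}$ of the capacity Constraints~\eqref{eq:AMAOptPhys1} at the optimum is exactly the equilibrium price vector. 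Since Proposition~\ref{prop-prices-same} ensures the individual price vectors coincide with the market price $\mathbf{p}^{(k)}$ throughout, convergence of the dual iterate $\mathbf{p}^{(k)}$ delivers the claimed convergence to the equilibrium price. The main obstacle is the strict-to-strong concavity passage in the middle paragraph: one must argue carefully that positivity of prices together with the budget and homogeneity structure keep every iterate inside a fixed compact set, since the step-size bound $\beta < 2\sigma_f/\rho(B^T B)$ is vacuous without a uniform positive $\sigma_f$.
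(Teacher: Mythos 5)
Your proposal is correct and follows essentially the same route as the paper: the paper treats this corollary as an immediate specialization of Theorem~\ref{thm:AMAClassicalFisher}, verifying the hypotheses of Theorem~\ref{thm:AMAStandardConvergence} with $\mathcal{X}_i$ reduced to the non-negative orthant, and justifying strong concavity exactly as you do---via boundedness of the iterates under strictly positive prices (the paper's remarks preceding the theorem and the corollary make this same strict-to-strong concavity argument, if more tersely). Your added care in spelling out the boundedness of the $\mathbf{x}_i$ updates from homogeneity plus the linear price term, and the final identification of the limiting dual with the equilibrium price via the homogeneous-constraint correspondence, only makes explicit what the paper leaves implicit.
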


This result contributes to the literature on classical Fisher markets since it provides a distributed methodology to compute market equilibria. However, as with Theorem~\ref{thm:AMAClassicalFisher} the step size of the price updates depends on the knowledge of agent's utilities.

\subsubsection{Market Implementation Details of Algorithm~\ref{alg:Algo-BP-SOP-AMA}}

We now summarize the methodology to implement Algorithm~\ref{alg:Algo-BP-SOP-AMA} to derive equilibrium prices in Algorithm~\ref{alg:Market-implementation-bp-sop-ama}.
\begin{algorithm} 
\label{alg:Market-implementation-bp-sop-ama}
\SetAlgoLined
\SetKwInOut{Input}{Input}\SetKwInOut{Output}{Output}
\Input{Price vector \textbf{p} in the market and Baseline demand $\mathbf{y}$}
Repeat Until Convergence to Equilibrium Price:
\begin{enumerate}
    \item Agents distributedly solve their \acrshort{iop} based on the market price \textbf{p} 
    \item Market designer updates baseline demand $\mathbf{y}$ based on the observed demands $\mathbf{x}$
    \item Prices are updated in the market asynchronously using tatonnement with a fixed step size $\beta>0$
\end{enumerate}
\caption{Market Implementation of Algorithm~\ref{alg:Algo-BP-SOP-AMA}}
\end{algorithm}

Note that Algorithm~\ref{alg:Market-implementation-bp-sop-ama} requires no information of agent's utilities in step (1) since agents solve their own \acrshort{iop}s; however, information on agent's utilities is necessary to set the right step size $\beta$ for the price updates in step (3) to guarantee convergence of Algorithm~\ref{alg:Algo-BP-SOP-AMA}.

\subsection{ADMM for Fisher Markets with Homogeneous Linear Constraints} \label{ADMM-classical-fisher}

Algorithm~\ref{alg:Algo-BP-SOP-AMA}, like many tatonnement based algorithms for classical Fisher markets, only converges to the equilibrium price for strictly concave homogeneous degree one utility functions and requires the step size of the price updates to depend on the utility functions of agents. The primary reason for the non-convergence of AMA for general concave homogeneous degree one utility functions is that the objective in the individual updates of the $\mathbf{x}$'s at each iteration is not strictly concave. As a result, this suggests that to extend convergence results to the setting of general concave homogeneous degree one utility functions, the individual $\mathbf{x}$ update steps need to be made strictly concave through utility perturbations. In this section, we turn to a new class of distributed ADM algorithms, the Alternating Direction Method of Multipliers (ADMM), that guarantee convergence for all concave homogeneous degree one utility functions, including linear utilities, through a perturbed individual optimization problem. In the ADMM algorithm, prices are updated with a step size that does not depend on agent's utilities.

We begin by presenting the ADMM algorithm for \acrshort{bp-sop-adm}. Algorithm~\ref{alg:AlgoFisherADMM} presents the distributed implementation to solve \acrshort{bp-sop-adm} using the ADMM algorithm elucidated in Algorithm~\ref{alg:AlgoStandardADMM}.

\begin{algorithm} 
\label{alg:AlgoFisherADMM}
\SetAlgoLined
\SetKwInOut{Input}{Input}\SetKwInOut{Output}{Output}
\Input{Initial price vector $\mathbf{p}$, and initial baseline demand $\mathbf{y}_i^{(0)}$}
\Output{Equilibrium Price vector $\mathbf{p}^*$}
 \For{$k = 0, 1, 2, ...$}{
  $\mathbf{x}_i^{(k+1)} = \argmax_{\mathbf{x}_i \in \mathcal{X}_i} \small\{w_i \log \left(u_i(\mathbf{x}_i) \right) - \sum_j p_j^{(k+1)} x_{ij} - \frac{\beta}{2} \sum_{i, j} (x_{ij}-y_{ij}^{(k+1)})^2 \small\}$, for all $i$ \;
  $\mathbf{y}^{(k+1)} = \argmax_{\mathbf{y}} \small\{ - \frac{\beta}{2} \sum_{i, j} (x_{ij}^{(k+1)}-y_{ij})^2 - \frac{\beta}{2} \sum_j \left(\sum_i y_{ij} - \Bar{s}_j \right)^2  \small\}$ \;
  $\mathbf{p}^{(k+1)} \leftarrow \mathbf{p}^{(k)} + \beta (\sum_{i} \mathbf{y}_i^{(k+1)} - \Bar{\mathbf{s}})$, where $\Bar{\mathbf{s}} = (\Bar{s}_1, ..., \Bar{s}_m)$ \;}
\caption{Two Block ADMM for \acrshort{bp-sop-adm}}
\end{algorithm}

Note that we have used Proposition~\ref{prop-prices-same} to simplify Algorithm~\ref{alg:AlgoFisherADMM}, since the individual prices $\mathbf{\Tilde{p}}_i$ can be initialized to the market prices $\mathbf{p}$ for all agents $i$ and thus there is no longer a need to update the individual price vectors. We further observe that each agent now maximizes a regularized individual utility function, which ensures the strict concavity of objective in the $\mathbf{x}$ updates.

\subsubsection{Convergence Properties of Algorithm~\ref{alg:AlgoFisherADMM}}

We now characterize the convergence of Algorithm~\ref{alg:AlgoFisherADMM} for both \acrshort{bp-sop-adm} as well as for classical Fisher markets. The main result of this section states that Algorithm~\ref{alg:AlgoFisherADMM} converges to the equilibrium price vector of \acrshort{bp-sop-adm} for any concave homogeneous degree one utility function. 

\begin{theorem} (Convergence of ADMM for \acrshort{bp-sop-adm}) \label{thm:ADMMClassicalFisher}
Suppose that a market equilibrium for \acrshort{bp-sop-adm} exists and $u_i(\mathbf{x}_i)$ is any concave homogeneous degree one utility function. Then Algorithm~\ref{alg:AlgoFisherADMM} converges to the equilibrium price vector of \acrshort{bp-sop-adm} with rate $O(\frac{1}{k})$, where $k$ is the number of iterations of Algorithm~\ref{alg:AlgoFisherADMM}.
\end{theorem}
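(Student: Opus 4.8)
The plan is to recognize that \acrshort{bp-sop-adm} is a particular instance of the canonical two-block program~\eqref{eq:eqADMMIntro1}-\eqref{eq:conADMMIntro1}, to verify that it satisfies exactly the concavity hypotheses of the standard convergence result in Theorem~\ref{thm:ADMMStandardConvergence}, and then to translate the resulting convergence of the primal-dual iterates into convergence of the price (dual) vector via Theorem~\ref{thm:thm2}. Concretely, I would first identify the two blocks: the $\mathbf{x}$-block carries $f(\mathbf{x}) = \sum_i w_i \log(u_i(\mathbf{x}_i))$ on the feasible set $\mathcal{X} = \prod_i \mathcal{X}_i$, while the $\mathbf{y}$-block carries $g(\mathbf{y}) = 0$ on $\mathcal{Y} = \mathbb{R}^{nm}$. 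The linear coupling $A\mathbf{x} + B\mathbf{y} = \mathbf{c}$ is assembled from the capacity constraints~\eqref{eq:AMAOptPhys1} together with the consistency constraints~\eqref{eq:AMAOptPhys2}, so that Algorithm~\ref{alg:AlgoFisherADMM} is literally Algorithm~\ref{alg:AlgoStandardADMM} applied to this data, having used Proposition~\ref{prop-prices-same} to drop the redundant individual price updates.

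The second step is to check the two concavity hypotheses of Theorem~\ref{thm:ADMMStandardConvergence}. The set $\mathcal{X}_i = \{\mathbf{x}_i \in \mathbb{R}^m_{\geq 0} : A_t^{(i)} \mathbf{x}_i \leq 0,\ \forall t \in T_i\}$ is a polyhedral cone, hence closed and convex, and $\mathcal{Y}_i = \mathbb{R}^m$ is trivially closed and convex. Since $g \equiv 0$ is concave, the only substantive point is the concavity of $f$. Here I would use that each $u_i$ is a concave homogeneous degree one utility, positive on the relevant domain, so that $\log \circ u_i$ is concave as the composition of the concave nondecreasing function $\log$ with the concave function $u_i$; weighting by $w_i > 0$ and summing preserves concavity. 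Crucially --- and this is what separates the present result from the AMA result of Theorem~\ref{thm:AMAClassicalFisher} --- no strict or strong concavity of $f$ is required, so linear utilities $u_i(\mathbf{x}_i) = \sum_j u_{ij} x_{ij}$ are admissible. The quadratic augmented-Lagrangian term in the $\mathbf{x}$-update of Algorithm~\ref{alg:AlgoFisherADMM} supplies the regularization that renders each agent's subproblem well posed even in the absence of strict concavity, which is precisely why the step size $\beta$ need not depend on the utilities.

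With both hypotheses verified, Theorem~\ref{thm:ADMMStandardConvergence} yields that the iterates of Algorithm~\ref{alg:AlgoFisherADMM} converge to an optimal primal-dual solution of \acrshort{bp-sop-adm} at rate $O(\frac{1}{k})$. It then remains to identify the limiting dual vector $\mathbf{p}^{(k)}$ of the capacity constraint~\eqref{eq:AMAOptPhys1} with the equilibrium price. Since the consistency constraints~\eqref{eq:AMAOptPhys2} force $\mathbf{x}_i = \mathbf{y}_i$ at every feasible point, \acrshort{bp-sop-adm} shares its optimal allocation and its capacity-constraint multipliers with \acrshort{bp-sop}, where $\lambda_i = 0$ in the homogeneous case since $b_{it} = 0$. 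Invoking Theorem~\ref{thm:thm2}, or its homogeneous degree one extension Corollary~\ref{cor:corHomogeneousUtils}, the multiplier of the capacity constraint at the optimum is exactly the market equilibrium price, so $\mathbf{p}^{(k)}$ converges to the equilibrium price vector at the stated $O(\frac{1}{k})$ rate.

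The main obstacle I anticipate is bridging the gap between the statement of Theorem~\ref{thm:ADMMStandardConvergence} --- which bounds the objective gap and the primal residual $\|A\mathbf{x}^{(k)} + B\mathbf{y}^{(k)} - \mathbf{c}\|$ at rate $O(\frac{1}{k})$ --- and the desired convergence of the \emph{specific} dual block $\mathbf{p}^{(k)}$ to the equilibrium price rather than merely of the primal allocation. I would address this by appealing to the underlying two-block ADMM theory, which under joint concavity establishes convergence of the entire primal-dual iterate sequence $(\mathbf{x}^{(k)}, \mathbf{y}^{(k)}, \mathbf{p}^{(k)})$ to a saddle point of the Lagrangian, and then identifying the limiting capacity multiplier with the equilibrium price through Theorem~\ref{thm:thm2}. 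The remaining technical care is to ensure $u_i(\mathbf{x}_i) > 0$ on the feasible region so that $f$ stays finite and concave; this is inherited from the positivity assumption carried over from Theorem~\ref{thm:thm2} together with the standing assumption that a market equilibrium for \acrshort{bp-sop-adm} exists.
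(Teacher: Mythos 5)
Your proposal is correct and takes essentially the same route as the paper: the paper proves this theorem by verifying that $\mathcal{X}$ and $\mathcal{Y}$ are closed convex sets and that $f(\mathbf{x}) = \sum_i w_i \log(u_i(\mathbf{x}_i))$ and $g \equiv 0$ are concave, then invoking Theorem~\ref{thm:ADMMStandardConvergence} to conclude $O(\frac{1}{k})$ convergence for any $\beta > 0$, emphasizing exactly your point that no strict concavity (hence no utility-dependent step size) is required. If anything you are more careful than the paper, which omits both the explicit identification of the limiting capacity-constraint multiplier with the equilibrium price via Theorem~\ref{thm:thm2} and the distinction between objective-gap/primal-residual convergence and convergence of the dual block $\mathbf{p}^{(k)}$ that you flag and resolve through saddle-point convergence.
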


The proof of the above claim follows directly from Theorem~\ref{thm:ADMMStandardConvergence} and so Algorithm~\ref{alg:AlgoFisherADMM} converges for any $\beta > 0$, e.g., $\beta = 1$, which is independent of agent's utilities. The argument of this proof is near identical to that for the proof of Theorem~\ref{thm:AMAClassicalFisher} and so is omitted. As with the AMA algorithm, we can extend this convergence result to the classical Fisher market framework, which we formalize through the following corollary.

\begin{corollary} (Convergence of ADMM for Classical Fisher Markets) \label{cor:ADMMClassicalFisher}
If $u_i(\mathbf{x}_i)$ is any homogeneous degree 1 utility function then Algorithm~\ref{alg:AlgoFisherADMM} converges to the equilibrium price vector for classical Fisher markets with rate $O(\frac{1}{k})$, where $k$ is the number of iterations of Algorithm~\ref{alg:AlgoFisherADMM}, and $\mathcal{X}_i = \{\mathbf{x}_i : \mathbf{x}_i \in \mathbb{R}_{\geq 0}^m \}$.
\end{corollary}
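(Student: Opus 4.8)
The plan is to show that the corollary follows from Theorem~\ref{thm:ADMMStandardConvergence} in exactly the same manner as Theorem~\ref{thm:ADMMClassicalFisher} does, with the only modification being the relaxation of the feasible set $\mathcal{X}_i$ to drop the homogeneous linear constraints. Concretely, the classical Fisher market is recovered from \acrshort{bp-sop-adm} by taking $T_i = \emptyset$ for every agent $i$, so that the two-block reformulation~\eqref{eq:AMAOptPhys}-\eqref{eq:AMAOptPhys2} retains its structure with objective $f(\mathbf{x}) = \sum_i w_i \log(u_i(\mathbf{x}_i))$ and $g(\mathbf{y}) = 0$, the coupling constraint $\mathbf{y}_i = \mathbf{x}_i$, and the capacity constraint $\sum_i y_{ij} = \bar{s}_j$. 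It therefore suffices to verify that the hypotheses of Theorem~\ref{thm:ADMMStandardConvergence} continue to hold under this relaxed feasible set, mirroring the verification already carried out in the proof of Theorem~\ref{thm:AMAClassicalFisher}.

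First I would check the geometric assumptions. With no additional linear constraints we have $\mathcal{X} = \{(\mathbf{x}_1, \dots, \mathbf{x}_n) : \mathbf{x}_i \in \mathbb{R}_{\geq 0}^m\}$, which is the non-negative orthant in $\mathbb{R}^{nm}$ and hence closed and convex, while $\mathcal{Y} = \mathbb{R}^{nm}$ is trivially closed and convex. Next I would check the concavity assumptions: $g(\mathbf{y}) = 0$ is concave, and $f(\mathbf{x}) = \sum_i w_i \log(u_i(\mathbf{x}_i))$ is concave for every concave homogeneous degree one $u_i$, since $\log$ is concave and increasing and each $w_i > 0$. Crucially---and this is precisely what distinguishes the ADMM result from its AMA counterpart---Theorem~\ref{thm:ADMMStandardConvergence} requires only concavity of $f$ and $g$, not strict concavity. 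Hence there is no need to invoke a strong concavity modulus or to restrict the step size, and the $O(\frac{1}{k})$ rate holds for any fixed $\beta > 0$, independent of the agents' utilities.

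The remaining step is to translate convergence to the optimal solution of the reformulated problem into convergence of the price iterates $\mathbf{p}^{(k)}$ to the equilibrium price vector. Since $\mathbf{p}$ is the dual variable associated with the capacity constraint $\sum_i y_{ij} = \bar{s}_j$, and the classical Fisher market is the $T_i = \emptyset$ specialization of \acrshort{bp-sop}, Theorem~\ref{thm:thm2} (with every $\lambda_i = 0$) reduces to the Eisenberg--Gale correspondence: the optimal dual variables of the capacity constraints are exactly the equilibrium prices, at which each agent's \acrshort{iop} is solved. I expect the only mild subtlety---rather than a genuine obstacle---to be the boundedness issue already flagged in the text: to ensure the limit is well-defined and the KKT correspondence applies, one notes that whenever the price iterates remain strictly positive, the budget constraints confine each $\mathbf{x}_i$ to a bounded set, so the relevant quantities stay in a compact region. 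Given this, the convergence guarantee of Algorithm~\ref{alg:AlgoFisherADMM} for classical Fisher markets follows directly, completing the argument.
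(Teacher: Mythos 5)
Your proposal is correct and follows essentially the same route as the paper, which omits the proof precisely because it amounts to re-verifying the hypotheses of Theorem~\ref{thm:ADMMStandardConvergence} (only concavity of $f$ and $g$, closedness and convexity of $\mathcal{X} = \mathbb{R}_{\geq 0}^{nm}$ and $\mathcal{Y} = \mathbb{R}^{nm}$) exactly as in the proof of Theorem~\ref{thm:AMAClassicalFisher}, with the key point---that no strong concavity and hence no utility-dependent step size is needed---being the same one you identify. Your final step relating the dual iterates to the equilibrium prices via Theorem~\ref{thm:thm2} with $\lambda_i = 0$ is also the correspondence the paper implicitly relies on, so there is no gap.
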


We further note in the specific case of linear utilities that Corollary~\ref{cor:ADMMClassicalFisher} implies a convergence guarantee of $O(\frac{1}{k})$. This result overcomes a theoretical barrier for distributed algorithms for linear Fisher markets that do not rely on information on agent's utilities to compute equilibrium prices. To the best of our knowledge, existing convergence rates of distributed algorithms provide guarantees for Fisher markets with strictly concave utilities; however these results do not generalize to Fisher markets with linear utilities. For instance, Cole and Fleischer \cite{Cole2008FastconvergingTA} derive a tatonnement algorithm with fast convergence rates for markets with weak gross substitutes and concede that their results do not generalize to linear utilities. Furthermore, we note that existing tatonnement algorithms either rely on price updates wherein the step size depends on the specific utility functions of agents or the convergence of the algorithm is dependent on the specific form of the utility functions. However, Algorithm~\ref{alg:AlgoFisherADMM} provides a convergence guarantee when the step size is set to a constant parameter $\beta > 0$ that does not depend on the utility functions of agents. 

We close this section with another immediate corollary of Theorem~\ref{thm:ADMMClassicalFisher}, which states that in the limit when Algorithm~\ref{alg:AlgoFisherADMM} converges to the equilibrium price vector of \acrshort{bp-sop-adm} each agent will be maximizing their individual optimization problems without the utility perturbation term.

\begin{corollary} \label{ADMM-Objective-convergence} (Equivalence of \acrshort{iop} and Individual Objectives at Convergence)
When Algorithm~\ref{alg:AlgoFisherADMM} converges to the equilibrium price vector of \acrshort{bp-sop-adm}, then the individual optimization problem in the $\mathbf{x}$ updates for each agent in Algorithm~\ref{alg:AlgoFisherADMM} has the same mathematical structure as the \acrshort{iop}.
\end{corollary}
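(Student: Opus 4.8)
The plan is to isolate the single structural difference between the $\mathbf{x}$-update in Algorithm~\ref{alg:AlgoFisherADMM} and the (KKT-equivalent-to-\acrshort{iop}) individual objective of the AMA scheme in Algorithm~\ref{alg:Algo-BP-SOP-AMA}, namely the quadratic proximal term $\frac{\beta}{2}\sum_{i,j}(x_{ij}-y_{ij})^2$, and to show that this term drops out once the algorithm has converged. First I would invoke Theorem~\ref{thm:ADMMClassicalFisher}: because Algorithm~\ref{alg:AlgoFisherADMM} is an instance of the two-block ADMM of Algorithm~\ref{alg:AlgoStandardADMM}, convergence to the equilibrium price $\mathbf{p}^*$ of \acrshort{bp-sop-adm} entails asymptotic primal feasibility of the reformulated problem; in particular the consensus residual of the coupling constraint~\eqref{eq:AMAOptPhys2} satisfies $\mathbf{x}_i^{(k)} - \mathbf{y}_i^{(k)} \to \mathbf{0}$ for every agent $i$.

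Second, I would substitute this limiting feasibility into the $\mathbf{x}$-update objective. Since $x_{ij}=y_{ij}$ at the fixed point, the proximal penalty $\frac{\beta}{2}\sum_{i,j}(x_{ij}-y_{ij})^2$ vanishes, and by Proposition~\ref{prop-prices-same} the coefficient of the linear term equals the common market price $p_j^*$. Hence at convergence each agent's subproblem reduces to
\[ \max_{\mathbf{x}_i \in \mathcal{X}_i}\ \Bigl\{\, w_i \log\bigl(u_i(\mathbf{x}_i)\bigr) - \sum_j p_j^* x_{ij} \,\Bigr\}, \]
which is exactly the regularizer-free individual objective appearing in Algorithm~\ref{alg:Algo-BP-SOP-AMA}, now evaluated at $\mathbf{p}^*$.

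Third, I would establish that this reduced problem shares the KKT system of the \acrshort{iop}~\eqref{eq:eq1}-\eqref{eq:con3} (with $b_{it}=0$). Writing $\mu_{it}\ge 0$ for the multiplier of $A_t^{(i)}\mathbf{x}_i \le 0$ and $\nu_{ij}\ge 0$ for nonnegativity, stationarity reads
\[ \frac{w_i}{u_i(\mathbf{x}_i)}\frac{\partial u_i(\mathbf{x}_i)}{\partial x_{ij}} - p_j^* - \sum_{t \in T_i}\mu_{it} A^{(i)}_{tj} + \nu_{ij} = 0. \]
Multiplying through by $u_i(\mathbf{x}_i)/w_i>0$ and relabeling the multipliers recovers the stationarity conditions of the \acrshort{iop} with budget multiplier $\gamma_i = u_i(\mathbf{x}_i)/w_i$; the homogeneous-degree-one identity $u_i(\mathbf{x}_i)=\nabla u_i(\mathbf{x}_i)\cdot\mathbf{x}_i$ — the same property exploited in Theorem~\ref{thm:thm2} and Corollary~\ref{cor:corHomogeneousUtils} — then forces the budget constraint $(\mathbf{p}^*)^T\mathbf{x}_i = w_i$ to be active, so the two problems share an identical KKT system. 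This is precisely the equivalence already recorded for the AMA updates via \cite{nesterov-fisher-gale}, and it yields the claimed identical mathematical structure.

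The main obstacle is the limiting argument in the first two steps. At every finite iteration the quadratic term makes the $\mathbf{x}$-subproblem strictly concave, hence uniquely solvable, whereas the limiting objective $w_i\log(u_i(\mathbf{x}_i)) - (\mathbf{p}^*)^T\mathbf{x}_i$ is only concave and, for linear $u_i$, may have a non-unique maximizer. I therefore expect the delicate point to be showing that any limit point of the strictly-concave maximizers is itself a maximizer of the unregularized limiting objective — a stability-of-$\argmax$ argument under a vanishing regularizer — together with verifying that the residual genuinely vanishes rather than merely that the ADMM objective gap closes. Once primal feasibility in the limit is secured, the remaining \acrshort{iop}-equivalence is the routine Eisenberg--Gale/Nesterov KKT bookkeeping sketched above.
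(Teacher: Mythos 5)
Your proposal follows essentially the same route as the paper's own proof: use convergence to conclude $x_{ij} = y_{ij}$ so the quadratic regularizer vanishes, reduce the $\mathbf{x}$-update to $\max_{\mathbf{x}_i \in \mathcal{X}_i}\{w_i \log(u_i(\mathbf{x}_i)) - \sum_j p_j^* x_{ij}\}$, and then match its KKT system to that of the \acrshort{iop} via the Euler identity for homogeneous degree one utilities. Your additional care about the stability of the $\argmax$ under the vanishing regularizer and the explicit use of Proposition~\ref{prop-prices-same} are refinements of steps the paper simply asserts, not a different argument.
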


\begin{proof}
At convergence to the market equilibrium, we know that $x_{ij}^* = y_{ij}^*$ for all $i, j$, since the regularization terms all go to 0 in the ADMM algorithm. Thus, the individual utility update step reduces to:

\[\mathbf{x}_i^* = \argmax_{\mathbf{x}_i: \mathcal{X}_i} \small\{w_i \log \left(u_i(\mathbf{x}_i) \right) - \sum_j p_j^* x_{ij}  \small\} \text{for all } i \]
where $\mathcal{X}_i =  \{\mathbf{x}_i: \mathbf{x}_i \in \mathbb{R}_{\geq 0}^{m} \text{ and } A_t^{(i)} \mathbf{x}_i \leq 0, \forall t \in T_i \}$ and $\mathbf{p}^*$ is the equilibrium price in the market and so the market clears at this price. It is not hard to see that the first order necessary and sufficient KKT conditions (at the market clearing quantities) of the above problem are equivalent to that of the \acrshort{iop} for homogeneous linear constraints with a price vector $\mathbf{p}^*$. Thus, when Algorithm~\ref{alg:AlgoFisherADMM} converges to the equilibrium price vector of \acrshort{bp-sop-adm}, then the individual optimization problem in the $\mathbf{x}$ updates for each agent in Algorithm~\ref{alg:AlgoFisherADMM} has the same mathematical structure as the \acrshort{iop}.
\end{proof}

This result suggests that the regularization terms in the individual update step for each agent goes to zero when Algorithm~\ref{alg:AlgoFisherADMM} converges to the equilibrium price vector. This is a desirable outcome since agents solve their \acrshort{iop}s rather than a perturbed objective when observing equilibrium prices.

\subsubsection{Market Implementation Details of Algorithm~\ref{alg:AlgoFisherADMM}}

Algorithm~\ref{alg:AlgoFisherADMM} can be used to derive equilibrium prices in the market in an analogous manner to the market implementation of Algorithm~\ref{alg:Algo-BP-SOP-AMA}, with the modification that agents now distributedly solve a regularized utility function rather than their \acrshort{iop}. To incentivize agents to optimize for the regularized utility function rather than their true individual objective \acrshort{iop}, the market designer must post rebates in addition to the prices for each of the goods. In particular, the rebates must be designed such that the agent is compensated for the loss in their objective function value, i.e., the difference in their objective on optimizing the perturbed and unperturbed utility functions. The rebates will enable the agents to gain back ``lost'' utility on having to maximize the perturbed utility function. As a result, it can be ensured that it is in each agent's best interests to optimize the perturbed utility function as in Algorithm~\ref{alg:AlgoFisherADMM}. Furthermore, the utility perturbation term can be interpreted as a penalty or additional ``cost'' to the agent $i$ for deviating from a baseline demand $\mathbf{y}_i$ since each agent's individual optimization problem is equivalent to that of the individual ADMM objective without the squared regularization term.

The ADMM algorithm, unlike the AMA mechanism, requires no information of agent's utilities to derive the right step size $\beta$ for the price updates to guarantee convergence of Algorithm~\ref{alg:AlgoFisherADMM} to the equilibrium price. Thus, as long as agents are sophisticated enough to solve the regularized individual objective function, the ADMM algorithm can be used to derive market equilibrium prices when agents have any concave homogeneous degree 1 utility functions, including linear utilities.

\subsection{Numerical Results to Verify Theoretical Guarantees} \label{numerical-conv-admm}

We now verify the convergence results from Corollaries~\ref{cor:AMAClassicalFisher} and~\ref{cor:ADMMClassicalFisher} for Algorithms~\ref{alg:Algo-BP-SOP-AMA} and~\ref{alg:AlgoFisherADMM} respectively through numerical experiments. In this section, we present computational results of the two algorithms for classical Fisher markets with ten agents and ten goods, with a capacity of one for each of the goods. Furthermore, both the budgets and utilities of agents are generated from the Uniform $[0, 1]$ distribution. In the case of linear utilities we observe that the ADMM algorithm converges with a step size of $\beta = 1$, while the AMA mechanism does not converge with a step size as small as $\beta = 0.1$, as is depicted in Figure~\ref{ADMMLinearConvergence}. We then tested our algorithm for Cobb Douglas utilities, which is a strictly concave utility function, and observed convergence for both the ADMM and the AMA algorithms, thereby corroborating the results of Corollaries~\ref{cor:AMAClassicalFisher} and~\ref{cor:ADMMClassicalFisher}. The numerical convergence for Cobb Douglas utilities is shown in Figure~\ref{ADMMCobbDouglasConvergence}.

\begin{figure}[!h]
  \centering
  \begin{minipage}[b]{0.48\textwidth}
  
    \includegraphics[width=\textwidth]{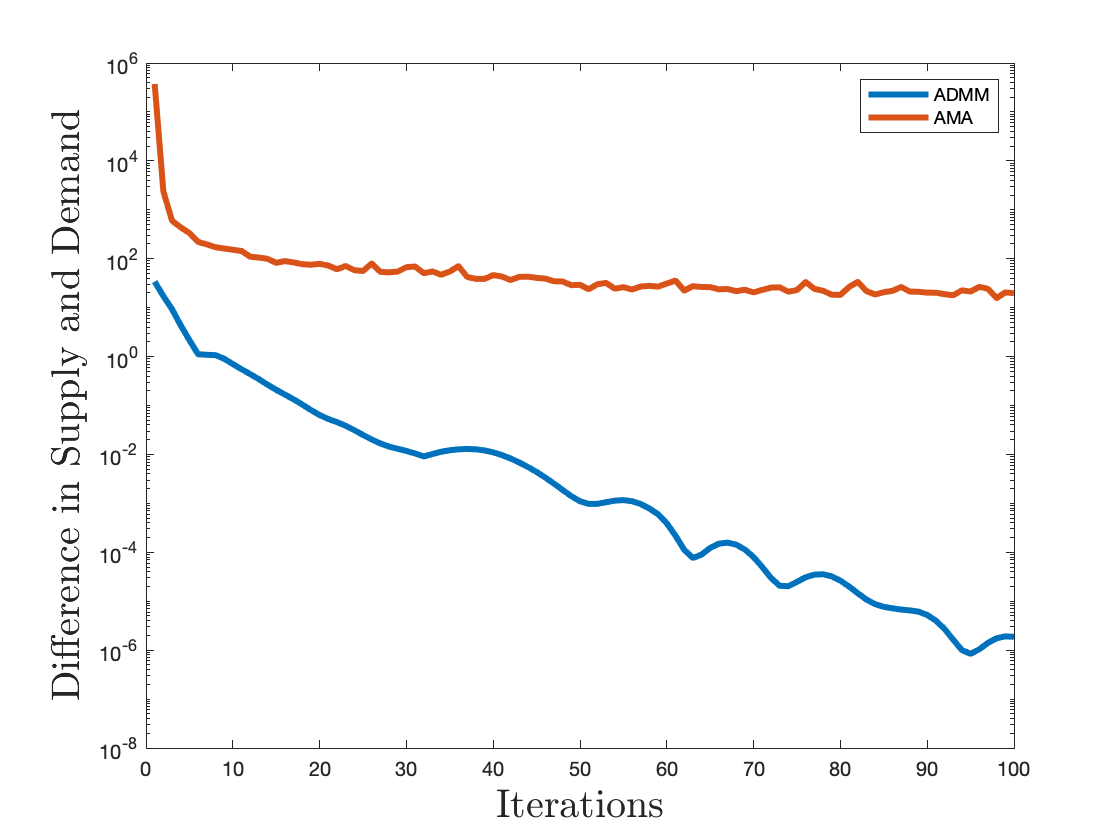}
    \caption{A comparison of the convergence of ADMM and AMA algorithms for Linear utility functions in classical Fisher markets.}
    \label{ADMMLinearConvergence}
  \end{minipage}
  \hfill
  \begin{minipage}[b]{0.48\textwidth}
    \includegraphics[width=\textwidth]{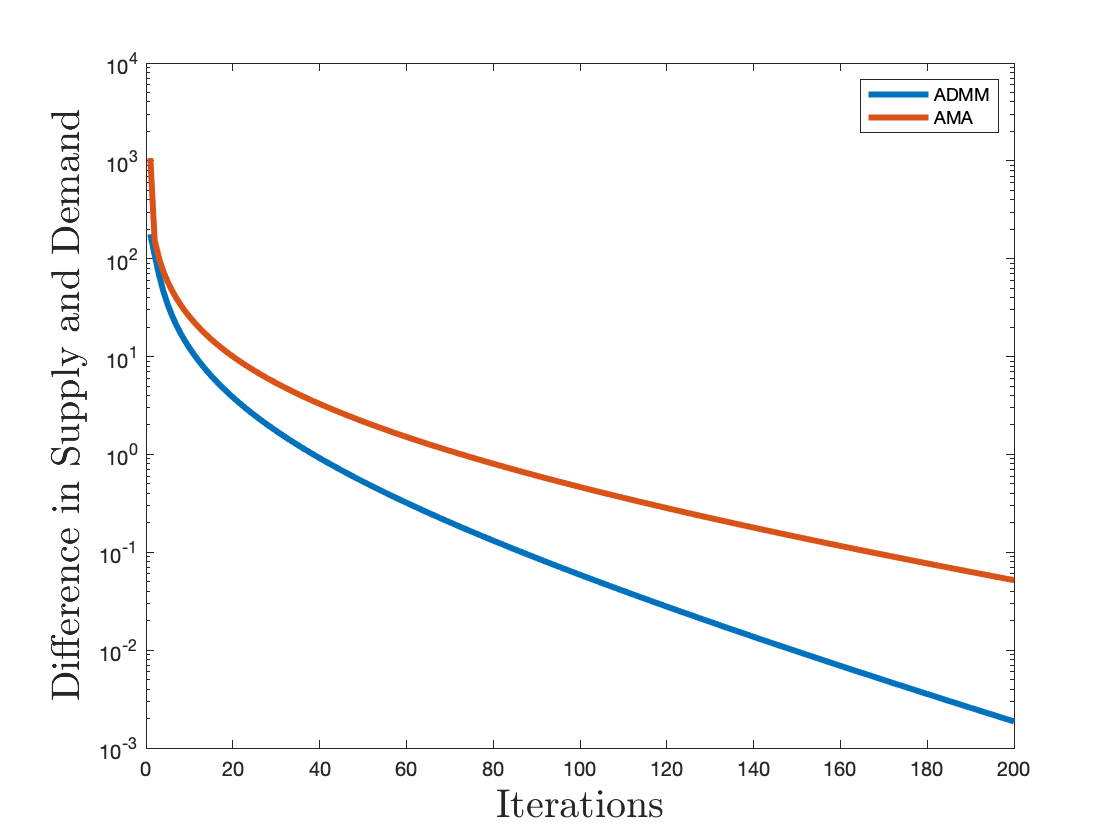}
    \caption{A comparison of the convergence of the ADMM and AMA algorithms for Cobb Douglas utility functions in classical Fisher markets.}
    \label{ADMMCobbDouglasConvergence}
  \end{minipage}
\end{figure}

\subsection{ADMM for Fisher Markets with Non-Homogeneous Linear Constraints} \label{ADMM-physical-constraints}

We now develop a distributed ADMM algorithm for Fisher markets with non-homogeneous linear constraints. To this end, we follow a similar procedure to that considered in Section~\ref{AMA-convergence} by defining the following reformulation of \acrshort{bp-sop}, denoted as \acrshort{bp-sop-admm}:

\begin{maxi!}|s|[2]                   
    {\mathbf{x}_i \in \mathbb{R}_{\geq 0}^{m}, \mathbf{y}_i \in \mathbb{R}^m, \forall i \in [n]}                               
    {\sum_i (w_i + \lambda_i) \log \left(\sum_j u_{ij} x_{ij}\right) \label{eq:ADMMOptPhys}}   
    {\label{eq:ADMMPhysExample1}}             
    {}                                
    \addConstraint{\sum_i y_{ij}}{ = \Bar{s}_j, \forall j \in [m] \label{eq:ADMMOptPhys1}}    
    \addConstraint{\mathbf{y}_i}{ = \mathbf{x}_i, \forall i \in [n] \label{eq:ADMMOptPhys2}}
    \addConstraint{\max \left\{ A_t^{(i)} \mathbf{x}_i - b_{it}, 0 \right\}}{= 0, \forall t \in T_i, \forall i \in [n] \label{eq:ADMMOptPhys3}}
\end{maxi!}

The above formulation is different from \acrshort{bp-sop-adm} in that we have now explicitly kept the additional linear constraints rather than dissolving these into $\mathcal{X}_i$ for each agent $i$. Doing so is necessary since the parameter $\lambda_i$ for each agent $i$ is not known \emph{a priori} as it depends on the dual variables of the additional linear constraints. We note that the dual variables $r_{it}$ of the additional linear constraints will enter the augmented Lagrangian of \acrshort{bp-sop-admm}, which can be used to set $\lambda_i = \sum_{t \in T_i} r_{it} b_{it}$, as is required for a market equilibrium in Theorem~\ref{thm:thm2}. Finally, since the ADMM procedure relies on the equality of constraints, we modified the linear inequality constraints to $\max \left\{ A_t^{(i)} \mathbf{x}_i - b_{it}, 0 \right\} = 0$ $\forall t \in T_i, \forall i \in [n]$, which is now piecewise linear. Note that with the nature of the added constraints the problem \acrshort{bp-sop-admm} will yield the same solution as the problem \acrshort{bp-sop}. Further, note that the objective is independent of $\mathbf{y}_i$, which implies that the objective is $h(\mathbf{x}, \mathbf{y}) = f(\mathbf{x}) + g(\mathbf{y})$, where $\mathbf{x} = (\mathbf{x}_1, ..., \mathbf{x}_n)$, $\mathbf{y} = (\mathbf{y}_1, ..., \mathbf{y}_n)$, $f(\mathbf{x}) = \sum_i (w_i + \lambda_i) \log \left(\sum_j u_{ij} x_{ij}\right)$ and $g(\mathbf{y}) = 0$. Algorithm~\ref{alg:Algo-BP-SOP-ADMM} presents the distributed implementation to solve \acrshort{bp-sop-admm} using the ADMM algorithm elucidated in Algorithm~\ref{alg:AlgoStandardADMM}.

\begin{algorithm} 
\label{alg:Algo-BP-SOP-ADMM}
\SetAlgoLined
\SetKwInOut{Input}{Input}\SetKwInOut{Output}{Output}
\Input{Initial price vector $\mathbf{p}^{(0)}$, and initial baseline demand $\mathbf{y}_i^{(0)}$}
\Output{Equilibrium Price vector $\mathbf{p}^*$}
 \For{$k = 0, 1, 2, ...$}{
  {$\begin{aligned} \mathbf{x}_i^{(k+1)} &= \argmax_{\mathbf{x}_i: \mathbf{x}_i \geq 0} \left\{\left(w_i + \sum_t r_{it}^{(k)}b_{it}\right) \log \left(\sum_j u_{ij} x_{ij} \right) - \sum_j p_{ij}^{(k)} x_{ij} - \sum_t r_{it}^{(k)} \max \left\{ A_t^{(i)} \mathbf{x}_i - b_{it}, 0 \right\} \right. \\ &{} \left.  - \frac{\beta}{2} \sum_{i, j} (x_{ij}-y_{ij}^{(k)})^2 - \frac{\beta}{2} \left( \sum_i \left(\max \left\{ A_t^{(i)} \mathbf{x}_i - b_{it}, 0 \right\}\right)^2 \right) \right\}  \end{aligned}$}, for all $i$ \;
  $\mathbf{y}^{(k+1)} = \argmax_{\mathbf{y}} \small\{ - \frac{\beta}{2} \sum_{i, j} (x_{ij}^{(k+1)}-y_{ij})^2 - \frac{\beta}{2} \sum_j \left(\sum_i y_{ij} - \Bar{s}_j \right)^2  \small\}$ \;
  $\mathbf{p}^{(k+1)} \leftarrow \mathbf{p}^{(k)} + \beta (\sum_{i} \mathbf{y}_i^{(k+1)} - \Bar{\mathbf{s}})$, where $\Bar{\mathbf{s}} = (\Bar{s}_1, ..., \Bar{s}_m)$ \;
  $r_{it}^{(k+1)} \leftarrow r_{it}^{(k)} + \beta \left(\max \left\{ A_t^{(i)} \mathbf{x}_i - b_{it}, 0 \right\} \right)$
  }
\caption{Two Block ADMM for \acrshort{bp-sop-admm}}
\end{algorithm}

A few comments about Algorithm~\ref{alg:Algo-BP-SOP-ADMM} are in order. First, by Proposition~\ref{prop-prices-same} we have simplified the algorithm by initializing the individual prices $\mathbf{\Tilde{p}}_i$ to the market prices $\mathbf{p}$ for all agents $i$. Thus, individual price vectors, i.e., the dual variables for constraint~\eqref{eq:ADMMOptPhys2}, no longer need to be updated. Next, at each step we have that each agent solves a regularized utility function, where $\lambda_i = \sum_{t \in T_i} r_{it}b_{it}$ as was required in Theorem~\ref{thm:thm2} for a market equilibrium. Furthermore, the distributed implementation implies that a central planner does not need any information on the utilities of agents provided that agents are sophisticated enough to compute the optimal solution of the objective in the $\mathbf{x}_i$ update steps. 

We now turn to studying the convergence of Algorithm~\ref{alg:Algo-BP-SOP-ADMM}. The primary difficulty in establishing theoretical convergence guarantees for non-homogeneous linear constraints is that at each step not only do the dual variables change but also the original function that we are trying to maximize changes. In particular, the function $f(\mathbf{x}) = \sum_i (w_i + \lambda_i) \log \left(\sum_j u_{ij} x_{ij}\right)$ changes with each iteration, since it is required that $\lambda_i = \sum_{t \in T_i} r_{it} b_{it}$ for a market equilibrium. As a result, we cannot necessarily extend the convergence guarantees for ADMM to Fisher markets with non-homogeneous linear constraints, as the function $f$ changes at each step, since the value of $\lambda$ is updated. Thus, we validate the convergence of Algorithm~\ref{alg:Algo-BP-SOP-ADMM} through numerical experiments and defer the theoretical treatment of the convergence of Algorithm~\ref{alg:Algo-BP-SOP-ADMM} as an open question for future research.

Figure~\ref{ADMM-Physical-Constraints} depicts the convergence to an equilibrium price in a market with 10 buyers, 20 goods and two identical knapsack linear constraints for each buyer (with 10 goods corresponding to each of the two knapsack constraints). The utilities and budgets were both generated from the Uniform $[0, 1]$ distribution and the capacities of each good were set to 0.5. It can be observed that both the additional linear constraints and the capacity constraints are met after about 60 iterations. Furthermore, we note that the degree of accuracy with which the constraints are met is far greater than that with the fixed point iterative scheme presented in Section~\ref{experiments} indicating the robustness of this ADMM procedure.
\begin{figure}[!h]
      \centering
      \includegraphics[width=0.8\linewidth]{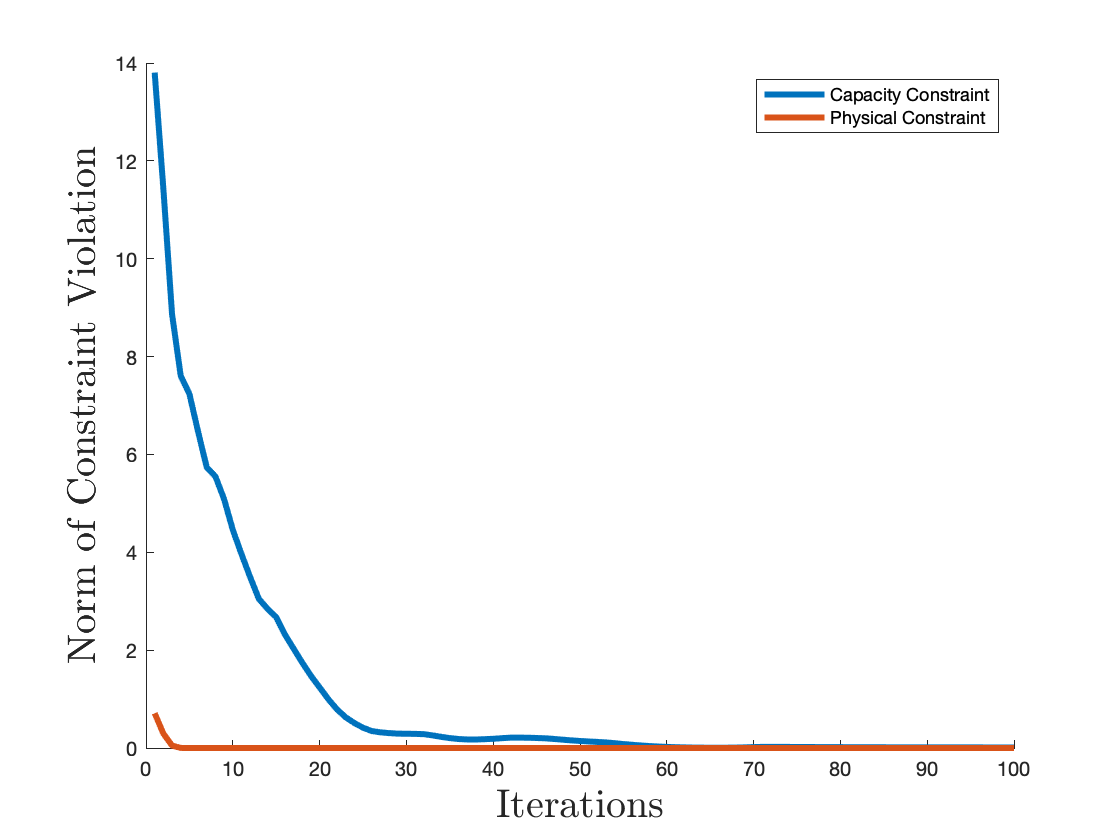}
      \caption{Convergence of Algorithm~\ref{alg:Algo-BP-SOP-ADMM} for Fisher market with additional Non-Homogeneous Linear constraints for a market with 10 buyers, 20 goods and two identical knapsack linear constraints for each buyer.}
      \label{ADMM-Physical-Constraints}
   \end{figure}

\section{Conclusions and Future Work} \label{Conclusion}

In this work, we have developed a generalization of Fisher markets to the setting of additional linear constraints and proposed novel distributed algorithms with provable convergence guarantees for the modified Fisher market with homogeneous linear constraints. To this end, we defined a new individual optimization problem \acrshort{iop} in the presence of additional linear constraints and established conditions to guarantee the existence of a market equilibrium. Even though the properties of \acrshort{iop} are fundamentally different from that of the individual optimization problem in Fisher markets, we proposed a mechanism to derive a market equilibrium in the presence of additional linear constraints, thereby generalizing the Fisher market framework. In particular, we formulated a new social optimization problem \acrshort{bp-sop} wherein we perturbed the budgets of agents to establish an equilibrium price in the market. To obtain the right budget perturbation parameters, we used a fixed point iterative scheme for the reformulated social optimization problem and numerically validated the convergence of this iterative scheme. Since this fixed point iterative scheme involved solving the centralized problem \acrshort{bp-sop}, we proposed a new class of distributed algorithms based on Alternating Direction Methods. In particular, we developed an AMA algorithm in which agents solve their \acrshort{iop} at each iteration, and developed an ADMM algorithm that does not require any knowledge of agent's utilities and updates prices with a utility independent step size. Finally, we proved the convergence of these distributed algorithms for Fisher markets with homogeneous linear constraints and extended these convergence results to the classical Fisher market setting. Since the ADMM algorithm converges to the equilibrium price vector for classical Fisher markets with linear utilities, it overcomes a theoretical barrier in distributed equilibrium computation for classical Fisher markets.

There are various interesting directions of future research that warrant more study. First, the allocations provided by the proposed market mechanisms are fractional, and as the market designer may like to make discrete allocations, it would be interesting to investigate the loss in social efficiency under integral constraints. Furthermore, while we have numerically shown convergence guarantees of the fixed point iterative scheme and the distributed ADMM algorithm for Fisher markets with non-homogeneous linear constraints, it would be beneficial to theoretically understand the convergence of these algorithms. The result on the non-convexity of the price equilibrium set suggests that strong convergence results may not be possible. Thus, we believe that a stronger characterization of the computational complexity of this problem would provide a more nuanced appreciation of whether computing the fixed point or the equilibrium prices with non-homogeneous linear constraints is feasible in polynomial time. Finally, an interesting area of research is generalizing this framework to an online setting in which customers arrive in the market platform sequentially and an irrevocable decision needs to be made about the prices in the market while still achieving a socially efficient allocation.

\section*{Acknowledgments} 

\noindent This work is supported by the Institute of Computational and Mathematical Engineering (ICME) Graduate Fellowship at Stanford University and the Human Centered Artificial Intelligence (HAI) Grant at Stanford University.

\bibliography{main}

\printglossary[type=\acronymtype]

\appendix
\section{Appendix}

\subsection{Definitions}

In this section, we introduce some definitions of terms commonly used throughout the paper.

\begin{definition} (Homogeneous Degree $\alpha$ Function) \label{def:homogeneous-degree-alpha}
For any scalar $\alpha$, a real valued function $u(\mathbf{x}),$ where $\mathbf{x} \in \mathbb{R}^m$, is homogeneous of degree $\alpha$ if
$$
u(t \mathbf{x})=t^{\alpha} f(\mathbf{x})
$$
for all $t>0$
\end{definition}

\begin{definition} (CES utility Function) \label{def:CES-utilities}
A utility function is a Constant Elasticity of Substitution utility if $u(\mathbf{x}) = \left(\sum_{j=1}^{m} \alpha_{j} x_{j}^{\rho}\right)^{1 / \rho}$, where $\mathbf{x} \in \mathbb{R}^m$ and $\rho \in (-\infty, 1]$ is some constant.
\end{definition}
Note that when $\rho=1$ then we have a linear utility function.


\begin{definition} (Homogeneous Linear Constraints) \label{def:homogeneous-linear-constraints}
A linear constraint of \acrshort{iop} of the form $A_t^{(i)} \mathbf{x}_i \leq b_{it}$ is homogeneous if $b_{it} = 0$. The problem \acrshort{iop} has homogeneous linear constraints if $b_{it} = 0$ for all additional linear constraints $t \in T_i$. If for any $t \in T_i$, $b_{it}>0$ then the additional linear constraints of agent $i$ are non-homogeneous.
\end{definition}

\begin{definition} (Strictly Increasing Vector Valued Function) \label{def:strict-increasing-function}
A vector valued function $u: \mathbb{R}^n \mapsto \mathbb{R}$ is strictly increasing if for any two bundles $\mathbf{x} \in \mathbb{R}^n$ and $\mathbf{y} \in \mathbb{R}^n$, with $\mathbf{x} \geq \mathbf{y}$ and $\mathbf{x} \neq \mathbf{y}$, it follows that $u(\mathbf{x}) \geq u(\mathbf{y})$. 
\end{definition}

\subsection{Proof of Proposition 1} \label{non-existence-prop1}

We now restate and proceed to prove Proposition~\ref{prop:non-existence-eq}.

\nonexistenceeq*

\begin{proof}
\textbf{Homogeneous Linear Constraints}: We consider a counter example with linear proportionality constraints. Consider a market with two buyers and two goods, with capacities $\Bar{s}_1 = 1$ and $\Bar{s}_2 = 1$, and utilities and budgets as specified in Table~\ref{tab:table-non-existence-homogeneous}. 

\begin{table}[tbh] 
\centering
\begin{tabular}{|c|c|c|c|} 
\hline
                 & \textbf{Utility for Good 1} & \textbf{Utility for Good 2} & \textbf{Budget} \\ \hline
\textbf{Buyer 1} & 1                         & 1                         & 1              \\ \hline
\textbf{Buyer 2} & 1                         & 1                        & 1               \\ \hline
\end{tabular} 
\caption{Utilities and budgets of buyers in a two buyer, two good market to establish the non-existence of a market equilibrium when agents have homogeneous linear constraints.}
\label{tab:table-non-existence-homogeneous}
\end{table}
Suppose that buyer 1 has the proportionality constraint $x_{11} - x_{12} \leq 0$ and buyer 2 has the proportionality constraint $2x_{21} - x_{22} \leq 0$. We further have that the capacity constraints can be written as $x_{11} + x_{21} \leq 1$ and $x_{12} + x_{22} \leq 1$.

For the market to clear it must be that $x_{11} + x_{21} = 1$ and $x_{12} + x_{22} = 1$ must both hold, i.e., both goods are completely sold out. We now show that if $x_{12} + x_{22} = 1$ holds then $x_{11} + x_{21} = 1$ cannot hold. So suppose that $x_{12} + x_{22} = 1$, then we have by summing the two proportionality constraints that $x_{11} + x_{21} \leq x_{11} + 2x_{21} \leq x_{12} + x_{22} = 1$. We now consider the following two cases:

\begin{enumerate}
    \item $x_{21} > 0$: In this case, the first inequality is a strict inequality, i.e., $x_{11} + x_{21} < x_{11} + 2x_{21}$, which implies that $x_{11} + x_{21} < 1$ indicating that the market does not clear.
    \item $x_{21} = 0$: In this case, since $x_{11} \leq x_{12}$ by the capacity constraint for good 2 it must be that that $x_{12} = 1$. This in turn implies that $x_{22} = 0$, i.e., agent 2 purchases neither good 1 nor good 2 and so does not spend its budget. Thus, if $x_{21} = 0$ then again the market does not clear.
\end{enumerate}

Thus, we have established that a market equilibrium does not exist when agents have homogeneous linear constraints.

\textbf{Non-homogeneous Linear Constraints}: We consider a counter example with knapsack linear constraints. Consider a market with two buyers and two goods, with capacities $\Bar{s}_1 = 1.5$ and $\Bar{s}_2 = 0.5$, and utilities and budgets as specified in Table~\ref{table-non-existence}. 

\begin{table}[tbh]
\centering
\begin{tabular}{|c|c|c|c|} 
\hline
                 & \textbf{Utility for Good 1} & \textbf{Utility for Good 2} & \textbf{Budget} \\ \hline
\textbf{Buyer 1} & 200                         & 0.1                         & 15              \\ \hline
\textbf{Buyer 2} & 100                         & 1.1                         & 5               \\ \hline
\end{tabular} 
\caption{Utilities and Budgets of buyers in a two buyer, two good market to establish the non-existence of a market equilibrium when agents have non-homogeneous linear constraints.}
\label{table-non-existence}
\end{table}
Suppose that we have the following knapsack constraints $x_{11} + x_{12} \leq 1$ and $x_{21} + x_{22} \leq 1$ for buyers 1 and 2 respectively.

To show that no equilibrium exists in this market, we consider the following two cases:

\begin{enumerate}
    \item $p_1 > 10$: From the knapsack constraint, buyer 1 can buy at most one unit of good 1 and buyer 2 cannot afford 0.5 units of good 1 and so it follows that $\sum_{i = 1}^{2} x_{i1}(p_1) < \Bar{s}_1 = 1.5$, establishing that good 1 cannot be cleared. Thus, $p_1>10$ cannot be a market clearing price.
    \item $p_1 \leq 10$: Since the utilities of both buyers is higher for good 1 than for good 2, it must be that the price for good 1 must be higher than the price for good 2. However, this implies that $p_2 \leq p_1 \leq 10$, which implies that buyer 1 cannot use up their budget, since the buyer can purchase at most one unit of both goods combined. Thus, $p_1 \leq 10$ cannot be a market clearing price.
\end{enumerate}

The two cases above establish that for this example a market equilibrium fails to exist in the presence of additional non-homogeneous linear constraints not considered in classical Fisher markets.
\end{proof}

\subsection{Proof of Proposition~\ref{prop:negative-price-eq}} \label{apdx:negative-price-eq}

We now restate and proceed to prove Proposition~\ref{prop:negative-price-eq}.

\negativeeq*

\begin{proof}

We consider an example market with knapsack linear constraints. Consider a market with two buyers and three goods, with capacities $\Bar{s}_1 = \Bar{s}_2 = \Bar{s}_3 = 1$, and utilities and budgets as specified in Table~\ref{table-negative-eq}. 

\begin{table}[tbh]
\centering
\begin{tabular}{|c|c|c|c|c|} 
\hline
                 & \textbf{Utility for Good 1} & \textbf{Utility for Good 2} & \textbf{Utility for Good 3} & \textbf{Budget} \\ \hline
\textbf{Buyer 1} & 1                         & 2 &      11                    & 10              \\ \hline
\textbf{Buyer 2} & 1                         & 10 & 1                         & 0.5               \\ \hline
\end{tabular}
\caption{Utilities and Budgets of buyers in a two buyer, three good market to establish the negativity of the equilibrium price when agents have knapsack linear constraints.}
\label{table-negative-eq}
\end{table}
Suppose that we have the following knapsack constraints $x_{11} + x_{12} \leq 1$ and $x_{21} + x_{22} \leq 1$ for buyers 1 and 2 respectively. Furthermore, we do not impose any linear constraints on the purchase of good 3 for either agent.

Then consider the price $\mathbf{p} = (-1, 0.5, 11)$. Solving the \acrshort{iop} at this price we have that $x_1^*(\mathbf{p}) = (1, 0, 1)$ and $x_2^*(\mathbf{p}) = (0, 1, 0)$. Note that at this price all goods are sold to capacity and each agent's budget is entirely used up. Thus, $\mathbf{p} = (-1, 0.5, 11)$ is an equilibrium price vector.
\end{proof}

\subsection{Proof of Theorem~\ref{thm:market-eq}} \label{existence-market-eq}

To prove the existence of a market equilibrium, we will show the existence of a fixed point. We do so leveraging Sperner's Lemma, which is a combinatorial analog of fixed point theorems, e.g., Brouwer's fixed point theorem. To elucidate Sperner's Lemma, we first introduce some notation. 

Let $\Delta_m$ be an $m$-dimensional simplex with vertices $v_0, ..., v_m$ and let $\Delta_i$ be the face of the simplex opposite to the vertex $v_i$. Furthermore, suppose that the simplex $\Delta$ is subdivided into smaller simplices forming a simplical complex $\mathcal{S}$. Then Sperner's Lemma can be stated as follows:

\begin{lemma} (\cite{ivanov2019sperners} Sperner's Lemma)
Suppose that the vertices of $\mathcal{S}$ are labelled by the numbers $\{0, 1, 2, ..., m \}$ in such a way that if a vertex $v$ belongs to a face $\Delta_i$ then the label of $v$ is not equal to $i$. Then there exists an $m$-dimensional simplex of $\mathcal{S}$ such that the labels of the vertices are equal to $\{0, 1, 2, ..., m \}$.
\end{lemma}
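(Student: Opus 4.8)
The plan is to prove Sperner's Lemma by induction on the dimension $m$, using a parity (double-counting) argument that in fact establishes the stronger statement that the number of fully-labeled simplices of $\mathcal{S}$ is \emph{odd}. Throughout I will call an $m$-simplex of $\mathcal{S}$ \emph{fully-labeled} if its $m+1$ vertices carry all the labels $\{0, 1, \ldots, m\}$, and I will call an $(m-1)$-dimensional facet of a simplex a \emph{door} if its $m$ vertices carry exactly the labels $\{0, 1, \ldots, m-1\}$. The base case $m = 0$ is immediate: a $0$-simplex is a single vertex, which must carry the label $0$, so there is exactly one fully-labeled simplex, an odd number.

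For the inductive step I would first classify each $m$-simplex $\sigma$ of $\mathcal{S}$ by its number of doors. If $\sigma$ is fully-labeled it has exactly one door, namely the facet opposite the unique vertex labeled $m$. If instead the vertex labels of $\sigma$ are exactly $\{0, \ldots, m-1\}$ (so label $m$ is absent and, by pigeonhole, exactly one label is repeated), then $\sigma$ has exactly two doors, obtained by deleting either of the two vertices carrying the repeated label. In every remaining case some label in $\{0, \ldots, m-1\}$ is missing, so $\sigma$ has no door at all. Thus every $m$-simplex has $0$, $1$, or $2$ doors, and it has an \emph{odd} number of doors if and only if it is fully-labeled.

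Next I would sum the number of doors over all $m$-simplices of $\mathcal{S}$ in two ways. Each door that is an interior facet is shared by exactly two $m$-simplices and so is counted twice, while each door lying on the boundary of $\Delta_m$ is counted once. Hence, working modulo $2$, the number of fully-labeled $m$-simplices is congruent to the number of boundary doors. The crucial geometric point is then to locate the boundary doors: a door avoids the label $m$, and the Sperner labeling condition forbids a vertex on the face $\Delta_i$ from carrying label $i$; since a door uses every label in $\{0, \ldots, m-1\}$, it cannot lie on any of $\Delta_0, \ldots, \Delta_{m-1}$, so every boundary door must lie on the single face $\Delta_m$ opposite $v_m$. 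Finally, the labeling restricted to $\Delta_m$ is a Sperner labeling of an $(m-1)$-dimensional simplex, and the boundary doors on $\Delta_m$ are precisely its fully-labeled $(m-1)$-simplices; by the induction hypothesis their number is odd. Chaining the congruences gives that the number of fully-labeled $m$-simplices is odd, hence nonzero, which is the desired conclusion.

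The main obstacle is the door-counting classification together with the double-counting identity: one must verify carefully that a non-fully-labeled simplex contributes an even number of doors — the pigeonhole argument for the ``exactly two doors'' case is precisely where the parity is born — and that the interior/boundary split of shared facets is exactly right, so that the modulo-$2$ reduction cancels every interior door and leaves only the boundary count. Confining the boundary doors to the lone face $\Delta_m$ via the Sperner hypothesis is the step that powers the induction, and I would take care to check that the induced labeling on $\Delta_m$ genuinely satisfies the Sperner condition in dimension $m-1$ so that the inductive hypothesis applies.
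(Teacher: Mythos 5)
Your proposal is correct and complete in outline. Note first that the paper itself offers no proof of this lemma: it is stated with a citation to \cite{ivanov2019sperners} and used as a black box in the proof of Theorem~\ref{thm:market-eq}, so there is no internal argument to compare against. What you give is the standard combinatorial parity proof (the ``doors'' double-counting argument), and it is sound: the classification of an $m$-simplex as having $0$, $1$, or $2$ doors according to whether its label multiset misses some label in $\{0,\ldots,m-1\}$, equals $\{0,\ldots,m\}$, or equals $\{0,\ldots,m-1\}$ with one repetition is exhaustive and correct, and it yields the key invariant that a simplex has an odd number of doors precisely when it is fully labeled. The modulo-$2$ count then correctly cancels interior doors (each shared by exactly two $m$-simplices of the subdivision) and leaves the boundary doors, and your localization of boundary doors to the single face $\Delta_m$ is exactly right: a door carries every label in $\{0,\ldots,m-1\}$, while the Sperner condition forbids label $i$ on $\Delta_i$, so a boundary door cannot lie in any $\Delta_i$ with $i<m$. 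Two routine verifications you already flag should indeed be spelled out in a full write-up: (i) an $(m-1)$-simplex of $\mathcal{S}$ contained in the boundary of $\Delta_m$ lies within a \emph{single} facet (its relative interior is connected and the pairwise intersections of facets have dimension at most $m-2$), so the interior/boundary dichotomy in the double count is clean; and (ii) the restriction of $\mathcal{S}$ to $\Delta_m$ is a simplicial subdivision whose induced labels lie in $\{0,\ldots,m-1\}$ (vertices on $\Delta_m$ avoid label $m$) and satisfy the Sperner condition in dimension $m-1$ (the face of $\Delta_m$ opposite $v_i$ is contained in $\Delta_i$), so the inductive hypothesis genuinely applies. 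As a bonus, your argument proves the stronger statement that the number of fully labeled simplices is odd, whereas the paper only needs existence of one such simplex (and, for the application in~\ref{existence-market-eq}, its availability under arbitrarily fine triangulations, which your proof supplies since the parity argument is uniform in the subdivision).
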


Leveraging Sperner's Lemma, we now restate and proceed to prove Theorem~\ref{thm:market-eq}.

\marketeq*

\begin{proof}
To present the proof we re-scale the constraints and the objective function such that the resulting problem consists of a market where the capacity of each good is one and the total budget of all agents is one. First, we scale agent $i$'s allocation $x_{ij}$ by a factor $\frac{1}{\Bar{s}_j}$, and her utility $u_{ij}$ by a factor $\Bar{s}_j$. Next, we scale the coefficients $A_{tj}^{(i)}$ for constraint $t$ by $\Bar{s}_j$. Finally, letting $\sum_{i = 1}^n w_i = W$, we scale price $p_j$ by a factor $\frac{\Bar{s}_j}{W}$, and the budgets by a factor $\frac{1}{W}$. It is easy to see that under the new budgets and prices, we obtain the same optimal allocations for the \acrshort{iop}.

Thus, we focus our attention to a market where capacities of each good are normalized to $1$, and total budget of all agents is normalized to 1, i.e., $\sum_{i = 1}^{n} w_i = 1$.  Further, by condition (i) in the statement of the theorem it is clear that if there exists an equilibrium then the price vector must be strictly positive. Thus, we restrict our focus to price vectors lying in the standard simplex.


We now define an excess demand function $f_j(\mathbf{p}^*) = \sum_{i = 1}^{n} x_{ij}(p_j^*) - 1$ and use Sperner's lemma to prove the existence of an equilibrium price vector $\mathbf{p}^* = (p_1^*, p_2^*, ..., p_m^*) \in \Delta_m$, where $\Delta_m$ is the standard simplex, such that for all goods $j$, the excess demand function $f_j(\mathbf{p}^*) = \sum_{i = 1}^{n} x_{ij}(p_j^*) - 1 = 0$.

For any $\mathbf{p} \in \Delta_m$, we define a coloring function $c: \mathbf{p} \mapsto \small\{0, 1, ..., m\small\}$, such that $c(\mathbf{p}) = j$ for a $j$ that satisfies $f_j(\mathbf{p}) \leq 0$ and $p_j \neq 0$. We label a vertex of the simplex with $0$ if the above conditions are not satisfied for all $j \in [m]$, e.g., when $\mathbf{p} = 0$. We claim that such a coloring satisfies Sperner's lemma.

To see this, we first note that all the $m$ non-zero corner points of the simplex must be colored with different colors, as at each corner point $k$ exactly one entry $p_k = 1 \neq 0$. In fact when $p_k = 1$, as the budgets are normalized to $1$, it must be that $f_k(\mathbf{p}) \leq 0$, as at most one unit of good $k$ can be purchased by all agents collectively. Thus, we have that all the corner points of the simplex must be colored with a different color. For any other bounding vertex there must exist $j$, such that $p_j \neq 0$ and $f_j(\mathbf{p}) \leq 0$. This is because if for all $j$, such that $p_j \neq 0$, we have that $f_j(\mathbf{p})>0$ then $\sum_{i = 1}^{n} \sum_{j = 1}^{m} p_j x_{ij} > \sum_{j = 1}^{m} p_j = 1 = \sum_{i = 1}^{n} w_i$, a contradiction. Thus, we have a valid coloring, which satisfies the condition of Sperner's lemma.

Next, by Sperner's lemma we have that there must exist a base simplex such that each of its corner points has distinct colors. Taking finer and finer triangulations and using that the demand function is continuous, we can find a $\mathbf{p}^*$, such that $f_j(\mathbf{p}^*) \leq 0$, $\forall j$, as each subsequent triangulation still lies on the standard simplex. Thus, we have shown $\forall j$ that $\sum_{i = 1}^{n} x_{ij}(p_j^*) \leq 1$.

Now to prove that the above inequality is in fact an equality, we proceed by contradiction. In particular, suppose that $\exists j$ with $p_j^*>0$, such that $\sum_{i = 1}^{n} x_{ij}(p_j^*) < 1$. Then clearly there must exist some buyer $i$ whose budget was not used up since $\sum_{i = 1}^{n} \sum_{j = 1}^{m} p_j^* x_{ij} < 1 = \sum_{i = 1}^{n} w_i$. However, by the condition that for all agents $i$ there exists a good $j$ without additional linear constraints, $i$ can buy more units of good $j$ (and gain a strictly positive utility), giving us our desired contradiction. Thus, $\mathbf{p}^*$ is an equilibrium price vector.
\end{proof}

\subsection{Proof of Theorem~\ref{thm:nonconvexity}} \label{non-convex-appendix}

We now restate and proceed to prove Theorem~\ref{thm:nonconvexity}.

\nonconvexity*

\begin{proof}
We consider a market with 4 agents and 4 products, with one knapsack constraint for goods 1-3 (and so we have a constraint $x_{i1}+x_{i2}+x_{i3} \leq 1$ for all $i$). Furthermore, since good 4 is not tied to any additional linear constraints, agents can purchase any amount of good 4 that is affordable. Next we consider the utility and budget values for each of the agents as well as the price and capacities for each of the goods as specified in Table~\ref{table-non-convexity}.

\begin{table}[tbh]
\centering
\begin{tabular}{|c|c|c|c|c|c}
\hline
Utility                                  & \textbf{Good 1} & \textbf{Good 2} & \textbf{Good 3} & \textbf{Good 4} & \multicolumn{1}{c|}{\textbf{Budget}} \\ \hline
\textbf{Buyer 1} & 2               & 0.0001          & 4               & 0.0001          & \multicolumn{1}{c|}{2}               \\ \hline
\textbf{Buyer 2} & 1               & 2               & 0.0001          & 0.0001          & \multicolumn{1}{c|}{1.5}             \\ \hline
\textbf{Buyer 3} & 0.0001          & 3               & 4               & 0.0001          & \multicolumn{1}{c|}{2.5}             \\ \hline
\textbf{Buyer 4} & 0.0001          & 0.0001          & 0.0001          & 1               & \multicolumn{1}{c|}{1}               \\ \hline
\textbf{Supply}                          & 1                                       & 1                                       & 1                                       & 1                                       &                                      \\ \cline{1-5}
\textbf{Price}                           & $p_1 = 2+\frac{2 \eta -1}{12 \eta^2 + 1}$                                       & $p_2 = 2-\frac{4 \eta}{12 \eta^2 + 1}$                                        & $p_3 = 2+\frac{2 \eta + 1}{12 \eta^2 + 1}$                                        & $p_4 = 1$                                       &                                      \\ \cline{1-5}
\end{tabular} 
\caption{Utilities and budgets of buyers as well as prices and capacities of goods in a four buyer, four good market to establish that the equilibrium price set may be non-convex.}
\label{table-non-convexity}
\end{table}

We will now show that for all $\eta \in [-\frac{1}{24}, 0]$ that $(p_1, p_2, p_3, p_4)$ is an equilibrium price set to establish non-uniqueness and then establish this set is non-convex.

We start by claiming that the optimal allocation vectors for all agents for some $\eta \in [-\frac{1}{24}, 0]$ given those prices is:

\[ \mathbf{x}_1^* = (0.5 + \eta, 0, 0.5-\eta, 0) \]
\[ \mathbf{x}_2^* = (0.5 - \eta, 0.5+\eta, 0, 0) \]
\[ \mathbf{x}_3^* = (0, 0.5-\eta, 0.5+\eta, 0) \]
\[ \mathbf{x}_4^* = (0, 0, 0, 1) \]

Before proving that these are in fact the optimal allocations, we first note that the market clears given these allocations, as $\sum_i x_{ij} = 1$ for all goods $j$ and that these allocations respect the knapsack linear constraints. Furthermore, it is easy to see that the budgets for all agents are completely used up. For instance, for the first agent, the money spent is: 

\[ (0.5+\eta) \cdot\left(2+\frac{2 \eta-1}{12 \eta^{2}+1}\right)+(0.5-\eta) \cdot\left(2+\frac{2 \eta+1}{12 y^{2}+1}\right)=2= w_1 \]
Analogous calculations for the remaining agents will establish that their budgets are also completely used up.

Now, to show that the above allocations are in fact optimal, we will use the characterization of the optimal solution of the \acrshort{iop} established in Theorem~\ref{thm:iopoptimal}. We prove this for agent 1 and analogous calculations will establish that the above allocation vectors are optimal for the remaining agents.

We note that for the given range of prices agent 1 will never purchase goods 2 and 4, as the utility derived given the range of prices is too low. Next for the given range of prices, we observe that $\frac{u_{11}}{p_1} > \frac{u_{13}}{p_3}$. As a result agent 1 will first purchase the \textit{virtual product} corresponding to the segment $\{(0, 0), (u_{11}, p_1)\}$ and then purchase the virtual product corresponding to segment $\{(u_{11}, p_1), (u_{13}, p_3)\}$. Thus, the agent will first buy the first virtual product, which costs $p_1$, resulting in a total amount of money remaining equal to $2 - \left(2+\frac{2 \eta -1}{12 \eta^2 + 1} \right) = \frac{1-2\eta}{12 \eta^2 + 1}$. Now this amount of remaining money will be spent on the next virtual product, whose price is $p_3-p_1 = \frac{2}{12 \eta^2 + 1}$. Thus, the fraction of the second virtual product purchased by agent 1 is $\frac{\frac{1-2\eta}{12 \eta^2 + 1}}{\frac{2}{12 \eta^2 + 1}} = 0.5 - \eta$. Thus, we get that $x_{13}^* = 0.5 - \eta$ and $x_{11}^* = 0.5 + \eta$ and so we have proven that the optimal consumption vector for agent 1 is $\mathbf{x}_1^* = (0.5 + \eta, 0, 0.5-\eta, 0)$.

Since for the prices $(p_1, p_2, p_3, p_4) = \left(2+\frac{2 \eta -1}{12 \eta^2 + 1}, 2-\frac{4 \eta}{12 \eta^2 + 1}, 2+\frac{2 \eta + 1}{12 \eta^2 + 1}, 1 \right)$, we have that the market clears and each agent's budget is completely used at their optimal consumption vectors, we have that the set of prices $(p_1, p_2, p_3, p_4)$ are the equilibrium prices in the market when $\eta \in [-\frac{1}{24}, 0]$. This establishes the non-uniqueness of the price vector in the market. In particular, the above result implies that for $\eta = 0$, we have $\mathbf{p}^{(1)} = \left(1, 2, 3, 1 \right)$ and for $\eta = \frac{-1}{24}$, we have $\mathbf{p}^{(2)} = \left(\frac{46}{49}, \frac{106}{49}, \frac{142}{49}, 1 \right)$, which are both equilibrium price vectors. 

Next, we prove the non-convexity of the equilibrium price set by showing that $\mathbf{p}^{(3)} = \frac{\mathbf{p}^{(1)} + \mathbf{p}^{(2)}}{2} = \left(\frac{95}{98}, \frac{204}{98}, \frac{289}{98}, 1 \right)$ is not an equilibrium price vector.

To see this, we note that the only agents that will buy good 1 at a price $\mathbf{p}^{(3)}$ are agents 1 and 2. In particular, agent 1 will only purchase goods 1 and 3 and since the agent purchased a positive amount of both goods at prices $\mathbf{p}^{(1)}$ and $\mathbf{p}^{(2)}$ this agent will continue to do so at the price $\mathbf{p}^{(3)}$. This can easily be seen from the optimal characterization of the \acrshort{iop} given in Theorem~\ref{thm:iopoptimal}. As a result, it must be that the knapsack linear constraint is tight, i.e., $x_{11} + x_{13} = 1$ at the price $\mathbf{p}^{(3)}$. Furthermore, since the agent will consume their whole budget to increase the utility level, we have that $x_{11} \cdot \frac{95}{98}+x_{13} \cdot \frac{289}{98}=w_{1}=2$. Solving this system of equations gives $x_{11} = \frac{93}{194}$ and $x_{13} = \frac{101}{194}$.

Similarly, if we look at agent 2, who only purchases goods 1 and 2, we obtain the following two equations: (i) $x_{21}+x_{22} = 1$ and (ii) $x_{21} \frac{95}{98}+x_{22} \frac{204}{98}=w_{2}=1.5$. Solving this we get $x_{21} = \frac{57}{109}$ and $x_{22} = \frac{52}{109}$.

Since agents 1 and 2 are the only ones that purchase good 1, we now obtain that $x_{11} + x_{21} = \frac{93}{194} + \frac{57}{109} > 1$. As a result, we have that the demand is greater than the capacity of good 1 at price $\mathbf{p}^{(3)}$ and so this cannot be an equilibrium price vector. Thus, we have established that the equilibrium price set is non-convex.
\end{proof}

\subsection{Proof of Theorem~\ref{thm:iopoptimal}} \label{iop-optimal-pf}

We now restate and proceed to prove Theorem~\ref{thm:iopoptimal}.

\iopopt*

\begin{proof}
From Definition~\ref{def-sol-set}, we know for all constraints $t \in T_i$ for each $i$ that $u_t = \sum_{j \in t} u_{ij} x_{ij}^*, w_t = \sum_{j \in t} x_{ij}^* p_j$.

Since $\mathbf{x}_i^*$ is an optimal solution, it must be that $(\mathbf{x}_i^*)_{j \in t}$, i.e., the restriction of $\mathbf{x}_i^*$ to the goods in linear constraint $t$, is an optimal solution to the following problem:

\begin{maxi!}|s|[2]                   
    {\left\{x_{ij} \right\}_{j \in t}}                               
    {\sum_{j \in t} u_{ij} x_{ij}  \label{eq:neweq1}}   
    {\label{eq:newExample001}}             
    {}                                
    \addConstraint{\sum_{j \in t} x_{ij} p_j}{\leq w_t \label{eq:newcon1}}    
    \addConstraint{\sum_{j \in t} x_{ij}}{\leq 1, \label{eq:newcon2}}
    \addConstraint{x_{ij}}{\geq 0, \forall j \in t  \label{eq:newcon3}}  
\end{maxi!}

The optimal solution $\left\{x_{ij}^* \right\}_{j \in t}$ corresponds to some point $C$ (as depicted in Figure \ref{virtual-products}) in the lower frontier of the convex hull $S_t$ from $(0, 0)$ to $(u_{ij_{\max}}, p_{j_{\max}})$, where $j_{\max} = \argmax_{j \in t} u_{ij}$. Thus, such a solution $\left\{x_{ij}^* \right\}_{j \in t}$ can be viewed as purchasing \textit{virtual products} given by the different slopes $\theta$ in the descending order of their \textit{bang-per-buck}, which is equivalent to the ascending order of the slopes $\theta$. In particular, one needs to purchase \textit{virtual products} in the descending order from the origin to the point $C$ (as depicted in Figure \ref{virtual-products}).

Having established that agent $i$ purchases goods in the descending order of their \textit{bang-per-buck} within each knapsack constraint $t \in T_i$, we now show that agent $i$ in fact purchases goods in the descending order of their \textit{bang-per-buck} amongst all goods belonging to different knapsack linear constraints. Thus, consider two different knapsack constraints, $t$, $t'$ and suppose for contradiction that agent $i$ does not purchase goods in the descending order of the \textit{bang-per-buck} of the \textit{virtual products} corresponding to these constraints. Thus, it must be that $\exists$ a \textit{virtual product} with slope $\theta_a$ corresponding to constraint $t$ and a \textit{virtual product} with slope $\theta_b$ corresponding to constraint $t'$, such that $\theta_a <\theta_b$, where some amount of the \textit{virtual product} with slope $\theta_b$ is bought but the \textit{virtual product} with slope $\theta_a$ is not completely purchased, i.e., strictly less than one unit of it is bought.

Without loss of generality, assume that $\theta_a$ is defined by two goods $j_1, j_2 \in t$ (with $u_{ij_1} < u_{ij_2}$) and $\theta_b$ is defined by two goods $j_3, j_4 \in t'$ (with $u_{ij_3} < u_{ij_4}$), as depicted in Figure~\ref{thm2-proof-opt-fig}. Next, suppose that we have an optimal solution for these four goods given by $(x_{ij_1}^*, x_{ij_2}^*, x_{ij_3}^*, x_{ij_4}^*)$.

\begin{figure}[!h]
      \centering
      \includegraphics[width=0.8\linewidth]{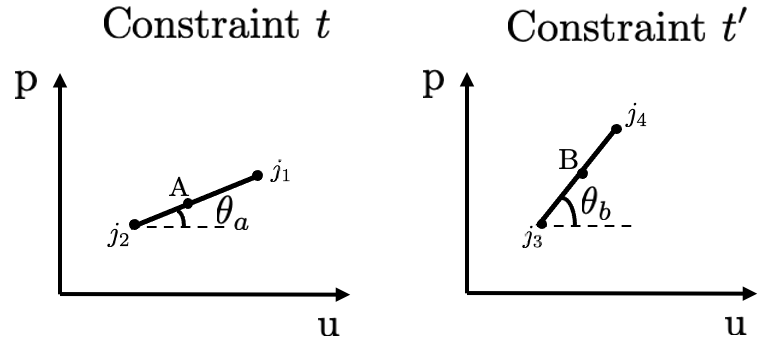}
      \caption{Goods $j_1$ and $j_2$ belong to constraint $t \in T_i$ and goods $j_3$ and $j_4$ belong to constraint $t' \in T_i$. The amount of \textit{virtual product} consumed lies somewhere along the line segment denoted by $A$ and $B$ for constraints $t$ and $t'$ respectively. We further have that $\theta_a<\theta_b$.}
      \label{thm2-proof-opt-fig}
   \end{figure}

Next, let $x_{ij_4}^{'} = x_{ij_4}^* - \epsilon$ and $x_{ij_3}^{'} = x_{ij_3}^* + \epsilon$ for some small $\epsilon>0$. We now observe that since $(x_{ij_3}^*, x_{ij_4}^*)$ is feasible with respect to the linear constraint, so is $(x_{ij_3}^{'}, x_{ij_4}^{'})$. Next, we observe that as we shift from the allocation $(x_{ij_3}^{*}, x_{ij_4}^{*})$ to $(x_{ij_3}^{'}, x_{ij_4}^{'})$, we ``save'' money given by $\epsilon (p_{j_4} - p_{j_3})$, but also reduce utility by $\epsilon (u_{ij_4} - u_{ij_3})$. But, now, we can use this ``saved'' money to purchase $\theta_a$.

So consider, $(x_{ij_1}^{'}, x_{ij_2}^{'})$, such that $x_{ij_2}^{'} = x_{ij_2}^* - \tau$ and $x_{ij_1}^{'} = x_{ij_1}^* + \tau$, where $\tau$ satisfies $\tau (p_{j_2} - p_{j_1}) = \epsilon (p_{j_4} - p_{j_3})$, i.e., we use our unspent or ``saved'' money. As a result we gain a utility given by $\tau (u_{ij_2} - u_{ij_1})$. Then we have that the change in the utility of agent $i$ is given by:

\[u_i^{'} - u_i^* = \tau (u_{ij_2} - u_{ij_1}) - \epsilon (u_{ij_4} - u_{ij_3}) = \epsilon (u_{ij_4} - u_{ij_3}) \cdot \left( \frac{\tau}{\epsilon} \cdot \frac{u_{ij_2} - u_{ij_1}}{u_{ij_4} - u_{ij_3}} - 1 \right) \]
Here, we have taken $u_i^{'}$ as the utility corresponding to the allocation $(x_{ij_1}^{'}, x_{ij_2}^{'}, x_{ij_3}^{'}, x_{ij_4}^{'})$ and $u_i^*$ as the utility corresponding to the allocation $(x_{ij_1}^{*}, x_{ij_2}^{*}, x_{ij_3}^{*}, x_{ij_4}^{*})$. Next, since we have that $\tau (p_{j_2} - p_{j_1}) = \epsilon (p_{j_4} - p_{j_3})$, it follows that:

\[u_i^{'} - u_i^* = \epsilon (u_{ij_4} - u_{ij_3}) \cdot \left( \frac{p_{j_4} - p_{j_3}}{p_{j_2} - p_{j_1}} \cdot \frac{u_{ij_2} - u_{ij_1}}{u_{ij_4} - u_{ij_3}} - 1 \right) = \epsilon (u_{ij_4} - u_{ij_3}) \cdot \left( \frac{\theta_b}{\theta_a} - 1 \right) >0 \]

This establishes that $(x_{ij_1}^{*}, x_{ij_2}^{*}, x_{ij_3}^{*}, x_{ij_4}^{*})$ cannot be optimal and so we have a contradiction to the claim that agent $i$ does not purchase goods in the descending order of the \textit{bang-per-buck} of the \textit{virtual products} belonging to two constraints $t, t'$. Thus, our claim follows.
\end{proof}

\subsection{Proof of Corollary~\ref{cor:coroptsol}} \label{cor-optimal-pf}

We now restate and proceed to prove Corollary~\ref{cor:coroptsol}.

\coroptsol*

\begin{proof}
An agent $i$ purchases \textit{virtual products} until their budget is completely used up or all the \textit{virtual products} are bought, which implies that it is just the last \textit{virtual product} that the agent purchases, which may be bought for less than one unit. As the remaining \textit{virtual products} are all purchased completely, these correspond to endpoints of the line segments of the lower frontier of the convex hull of the solution set $S_t$. This indicates that agent $i$ either does not buy any good (if we are at the origin in Figure~\ref{virtual-products}) or buys one unit of one good only corresponding to an end point of one of the line segments characterizing the lower frontier of the convex hull. Since the last \textit{virtual product} that the agent purchases may be bought for less than one unit, this corresponds any point on the lower frontier of the convex hull. In particular, if such a point lies in the middle of a line segment on the lower frontier of the convex hull then we have that the agent buys a corresponding fraction of two goods (associated with the line segment) associated with a given constraint $t \in T_i$.
\end{proof}

\subsection{Remark on Optimal Solution of \acrshort{iop}} \label{rmk-iopopt}

The optimal solution of the \acrshort{iop} can also be characterized for goods that do not have any additional linear constraints. For such a product $j$ without linear Constraints~\eqref{eq:con2}, we can define the value $\theta_j = \frac{p_j}{u_{ij}}$ and augment this to the list of the $\theta$ values for each of the \textit{virtual products}. As we ordered the \textit{virtual products} in the descending order of their \textit{bang-per-buck} $ = \frac{1}{\theta_{j_1j_2}}$ for \textit{virtual products} corresponding to goods $j_1$ and $j_2$, we can do the same for this new list of $\theta$ values that includes $\theta_j$.

Now buyer $i$ will purchase goods in the ascending order of the $\theta$ values that includes $\theta_j$ analogous to the result in Theorem~\ref{thm:iopoptimal}. As in Corollary~\ref{cor:coroptsol}, at most one unit of each \textit{virtual product} can be purchased; however, any amount of good $j$ (the good without linear Constraints~\eqref{eq:con2}) can be purchased. Thus, if agent $i$ still has budget at $\theta_j$, then the remaining budget will be used to purchase product $j$.

\subsection{Examples of Optimal Solution of \acrshort{iop}} \label{examples-iopopt}

We provide two examples to illustrate the use of \textit{virtual products} in characterizing the optimal solution of the \acrshort{iop} when agents have knapsack linear constraints. For both examples, we consider the market with $6$ products and an agent $i$. Furthermore, we denote the utilities of the agent for each of the products as $u_{i1} = 1, u_{i2} = 2, u_{i3} = 3, u_{i4} = 4, u_{i5} = 5, u_{i6} = 6$ and the price that the agent observes in the market as $p_1 = 0.1, p_2 = 0.4, p_3 = 0.7, p_4 = 1.2, p_5 = 1.7, p_6 = 2.4$. The difference between the two examples we consider for this market will be in terms of the knapsack linear constraints and budgets. In particular, the first example we consider will be one wherein each of the goods in the market has knapsack constraints and the second example will be one in which one of the goods, i.e., good $5$, will not be tied to any knapsack linear constraint.

\textbf{Example 1} (All goods are associated with knapsack linear constraints): We endow agent $i$ with a budget of $w_i = 2.4$, and consider the following linear constraints for knapsack one $x_{i1} + x_{i3} + x_{i5} \leq 1$, and for knapsack two $x_{i2} + x_{i4} + x_{i6} \leq 1$.

Given the price and the utilities, we can derive the $\theta$ values for the \textit{virtual products} by calculating the slopes of the lower frontier of the convex hull in Figure~\ref{example1-iop-opt}, giving $\theta_1 = 0.1, \theta_2 = 0.2, \theta_3 = 0.3, \theta_4 = 0.4, \theta_5 = 0.5, \theta_6 = 0.6$.

\begin{figure}[!h]
      \centering
      \includegraphics[width=0.9\linewidth]{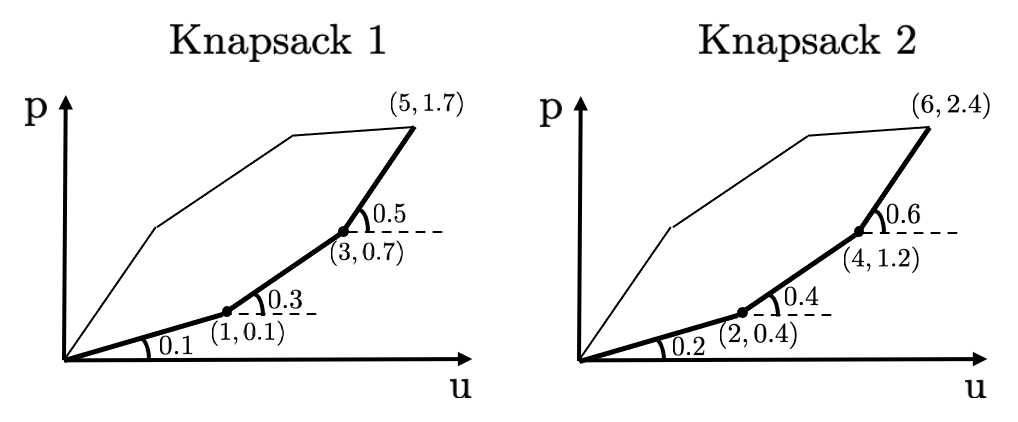}
      \caption{The solution set of the goods in example 1 for both the knapsacks are depicted along with their corresponding $\theta$ values, i.e., the slope of the line segments on the lower frontier of the convex hull (shown in bold).}
      \label{example1-iop-opt}
   \end{figure}

We also note that for a \textit{virtual product} corresponding to points $A = (u_{ij_1}, p_{j_1})$ and $B = (u_{ij_2}, p_{j_2})$, the price is $p_{j_2} - p_{j_1}$ and the utility is $u_{ij_2} - u_{ij_1}$, with slope $\theta_{j_1j_2} = \frac{p_{j_2} - p_{j_1}}{u_{ij_2} - u_{ij_1}}$.

Now, as the agent will buy products in the ascending order of the $\theta$ values until the agent's budget is exhausted by Theorem~\ref{thm:iopoptimal}, we will have that the buyer purchases the following \textit{virtual products}:

\begin{itemize}
    \item 1 unit of $\theta_1$, with a total cost of $0.1$
    \item 1 unit of $\theta_2$, with a total cost of $0.4$
    \item 1 unit of $\theta_3$, with a total cost of $0.6$
    \item 1 unit of $\theta_4$, with a total cost of $0.8$
    \item 0.5 units of $\theta_5$, with a total cost of $0.5$
\end{itemize}

Mapping the purchase of these products to Figure~\ref{example1-iop-opt}, we observe that since \textit{virtual products} with slope $\theta_2$ and $\theta_4$ are bought 1 unit, we have that this corresponds to the point on the lower frontier of the convex hull corresponding to good 4 and so $x_4^* = 1$. Furthermore, since $0.5$ units of the \textit{virtual product} with slope $\theta_5$ is purchased, this maps to a point on the line segment corresponding to good 3 and good 5. Thus, we have that $x_3^* = 0.5$ and $x_5^* = 0.5$. Thus, the final allocations for the agent are $x_1^* = 0, x_2^* = 0, x_3^* = 0.5, x_4^* = 1, x_5^* = 0.5, x_6^* = 0$.

We also note that these allocations confirm the claim of Corollary~\ref{cor:coroptsol}.

\textbf{Example 2} (Good 5 has no knapsack linear constraints): We endow agent $i$ with a budget of $w_i = 4.5$, and consider the following linear constraint for knapsack one $x_{i1} + x_{i3} \leq 1$, and for knapsack two $x_{i2} + x_{i4} + x_{i6} \leq 1$. Good $5$ does not have any linear constraint, i.e., any affordable amount of good $5$ can be purchased.

The calculations of the $\theta$ values of all products other than $\theta_5$ remain the same. We calculate $\theta_5$ as mentioned in the remark in~\ref{rmk-iopopt}, and so $\theta_5 = \frac{1.7}{5} = 0.34$.

Now, as the agent will buy products in the ascending order of the $\theta$ values until the agent's budget is exhausted by Theorem~\ref{thm:iopoptimal}, we will have that the buyer purchases the following \textit{virtual products}:

\begin{itemize}
    \item 1 unit of $\theta_1$, with a total cost of $0.1$
    \item 1 unit of $\theta_2$, with a total cost of $0.4$
    \item 1 unit of $\theta_3$, with a total cost of $0.6$
    \item 2 units of $\theta_5$, with a total cost of $3.4$
\end{itemize}

The buyer purchases the \textit{virtual product} with slope $\theta_5$ before the \textit{virtual product} with slope $\theta_4$, since $\theta_5 = 0.34 < 0.4 = \theta_4$. Further, since good 5 is not associated with any knapsack linear constraint, as mentioned in~\ref{rmk-iopopt}, the buyer can purchase an arbitrary amount of the good 5 until their budget is exhausted.

We observe that since \textit{virtual products} with slope $\theta_1$ and $\theta_3$ are bought 1 unit, we have that this corresponds to the point on the lower frontier of the convex hull (for goods in knapsack 1) corresponding to good 3 and so $x_3^* = 1$. Next, since one unit of the \textit{virtual product} with slope $\theta_2$ is bought and no units of the \textit{virtual product} with slope $\theta_4$ are bought, we have that $x_2^* = 1$. Finally, $2$ units of $\theta_5$ are purchased, as this good does not correspond to any linear constraints, and so we have $x_5^* = 2$.

Thus, the final allocations for the agent are $x_1^* = 0, x_2^* = 1, x_3^* = 1, x_4^* = 0, x_5^* = 2, x_6^* = 0$. We note that this allocation is in line with the remark in~\ref{rmk-iopopt}.

\subsection{Proof of Proposition~\ref{prop:propCompensation}} \label{prop-Compensation}

We now restate and proceed to prove Proposition~\ref{prop:propCompensation}.

\propCompensation*

\begin{proof}
We start by observing that the amount of income compensation for the agent is given by $(p_j' - p_j)x_{ij}$ and that the original bundle of goods is affordable under this income compensation, as $w_i' = \mathbf{p}' \cdot \mathbf{x}_i$ from the statement of the theorem. Since our original bundle of goods satisfied the additional linear constraints, this will be the case at this new income level as well indicating that our original bundle of goods $\mathbf{x}_i$ is in fact feasible at the income $w_i'$.

Having established feasibility of $\mathbf{x}_i$ at the income $w_i'$, we now wish to show that for any optimal bundle it must be the case that $x_{ij}' \leq x_{ij}$. Put differently, we claim that an agent cannot increase their utility from $u(\mathbf{x}_i)$ by purchasing $x_{ij}' > x_{ij}$. To prove this result, it suffices for us to consider the case when $x_{ij}>0$, as if $x_{ij} = 0$, then there is no income compensation and so the agent is solving the \acrshort{iop} with $w_i' = w_i$, which implies $x_{ij}' = x_{ij} = 0$ would correspond to the optimal allocation. 

Thus, we now consider the case when $x_{ij}>0$. Now first consider the case that $j$ is the only good corresponding to its knapsack linear constraint. Then when we increase the price of good $j$ from $p_j$ to $p_j'$, we have reduced the \textit{bang-per-buck} ratio of the \textit{virtual product} corresponding to the points $(0, 0)$ and that corresponding to good $j$. Now, if $x_{ij}<1$, then this is the last \textit{virtual product} to be purchased. If the increase in the \textit{bang-per-buck} ratio of this virtual product maintains the same ordering of the \textit{bang-per-buck} ratios of all \textit{virtual products}, then at the income $w_i'$ the agent will again purchase good $j$ last. In particular, if the agent spent $p_j x_{ij} = w_j$ units of currency on buying $x_{ij}$ units of good $j$ then the agent can now spend $w_{ij} + (p_j' - p_j)x_{ij} = p_j' x_{ij}$ units of currency to buy $x_{ij}$ units of good $j$, i.e., $x_{ij}' = x_{ij} < 1$. If the increase in \textit{bang-per-buck} ratio of this \textit{virtual product} changes the ordering of the \textit{bang-per-buck} ratios of the virtual products, this implies that the good $j$ will now be purchased after some other good is bought. This means that the under the compensated income it must be that $x_{ij}' \leq x_{ij}$, since less income is remaining when purchasing good $j$ in this case.

Next, if $x_{ij} = 1$ units of good $j$ are purchased and the last \textit{virtual product} to be purchased is the one corresponding to knapsack constraint $t$ the argument of the previous paragraph still applies. So, suppose that the last \textit{virtual product} to be purchased is not the one corresponding to knapsack constraint $t$. If the change in \textit{bang-per-buck} ratio of \textit{virtual products} in knapsack constraint $t$ does not influence the ordering of the \textit{bang-per-buck} ratio of all \textit{virtual products}, then it is clear that the optimal action for the agent is to consume the original bundle $\mathbf{x}_i$ by its affordability at the new income level and the argument in the previous paragraph, which implies that $x_{ij}' = x_{ij}$. Furthermore, if the ordering of the \textit{bang-per-buck} ratios of \textit{virtual products} changes then it must be that $x_{ij}' \leq x_{ij}$, where we have equality when the \textit{bang-per-buck} ratios are such that the same goods are bought and inequality when the \textit{bang-per-buck} ratio of knapsack constraint $t$ exceeds that of some \textit{virtual product} that was not originally purchased within the bundle $\mathbf{x}_i$.

Having considered the case when there is just one good in knapsack constraint $t$, we now look at the case when there are at least two goods in knapsack constraint $t$. Since $x_{ij}>0$, this implies that we are either at the point on the price utility plane corresponding to good $j$ or at a point $A$ on the line segment between goods $j$ and $j'$ or at a point $B$ on the line segment between goods $j^{''}$ and $j$ as depicted in Figure~\ref{IncomeCompensation}, where we take good $j$ as belonging to some knapsack constraint $t$.

\begin{figure}[!h]
      \centering
      \includegraphics[width=0.3\linewidth]{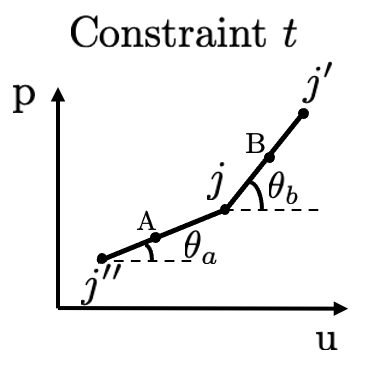}
      \caption{Feasible points for the optimal consumption vector at a price $p_j$ when $x_{ij}>0$ and goods $j, j', j''$ belong to knapsack constraint $t$}
      \label{IncomeCompensation}
   \end{figure}
Note that the above figure represents the most general case, as if $j$ took the place of $j'$ or $j''$ then we would only need to consider one of the two line segments.

When we increase the price of good $j$ the slope $\theta_a$ increases and the slope $\theta_b$ decreases. Now, if we are at a point $B$ on the line segment for the consumption bundle $\mathbf{x}_i$, this means that this is the last \textit{virtual product} to be purchased, i.e., the \textit{virtual product} with slope $\theta_b$ is the \textit{virtual product} with the largest slope $\theta$ that is purchased. But if the increase in $\theta_a$ is such that $\theta_a'>\theta_b'$ when the price of good $j$ is increased from $p_j$ to $p_j'$, then good $j$ will never be purchased, while $\theta_{j^{''}, j'} < \theta_b$ and so goods $j^{''}$ and $j'$ will be purchased within knapsack constraint $t$ and so $x_{ij}' = 0 < x_{ij}$. Instead if $\theta_a' \leq \theta_b'$, then as $\theta_b'<\theta_b$, its \textit{bang-per-buck} ratio ordering could have only increased. Thus, when the price increases from $p_j$ to $p_j'$ and $\theta_b'<\theta_b$ it follows that the point $B$ could have only moved further towards $j'$ as the \textit{bang-per-buck} ratio of the \textit{virtual product} corresponding to goods $j$ and $j'$ is increased. This again establishes that $x_{ij}' \leq x_{ij}$.

On the other hand if we are at a point $A$ on the \textit{virtual product} corresponding to goods $j^{''}$ and $j$ or at the point $j$ then the analysis reduces to the case when $j^{''} = \{(0, 0) \}$, as has been analysed above.

Thus, we have established that $x_{ij}' \leq x_{ij}$ at the new income $w_i'$, indicating that Giffen behavior is eliminated under income compensation.
\end{proof}

\subsection{Proof of Theorem~\ref{thm:imposs}} \label{Appendix1}

We now restate and proceed to prove Theorem~\ref{thm:imposs}.

\imposs*

To prove Theorem~\ref{thm:imposs} we need to compare the KKT conditions of the individual and social optimization problems, where the price vector $\textbf{p}$ that agents observe are those set based on the dual variables of the capacity constraints of \acrshort{sop-l}. As in the proof sketch, we only derive the KKT conditions of the social optimization problem \acrshort{sop-l} and show that the budgets of agents will in general not be completely used up and thus a market clearing equilibrium cannot hold.

\subsubsection{KKT conditions of Social Optimization Problem \acrshort{sop-l}}

We now derive the first order necessary and sufficient KKT conditions for \acrshort{sop-l}. To do so, we introduce the dual variables $\mathbf{p} \in \mathbb{R}^{m}$ for the capacity constraint~\eqref{eq:ImpossSocOpt1}, $\mathbf{r}_{i} \in \mathbb{R}_{\geq 0}^{l_i}, \forall i$ for each of the linear constraints in~\eqref{eq:ImpossSocOpt2} and $s_{ij} \leq 0, \forall i, j$ for each of the non-negativity Constraints~\eqref{eq:ImpossSocOpt3}. Next, we derive the Lagrangian of this problem as:

\begin{align}
    \mathcal{L} = \sum_{i = 1}^{n} w_i \log \left(\sum_{j = 1}^{m} u_{ij} x_{ij} \right) - \sum_{j = 1}^{m} p_j \left(\sum_i x_{ij} - \Bar{s}_j \right) - \sum_{i = 1}^{n} \sum_{t \in T_i} r_{it} \left(\sum_{j = 1}^{m} A_{tj}^{(i)} x_{ij} - b_{it} \right) - \sum_{i = 1}^{n} \sum_{j = 1}^{m} s_{ij} x_{ij} 
\end{align}
The first order derivative condition is found by taking the derivative of the Lagrangian with respect to $x_{ij}$:

\begin{align}\label{eq:ImpossLagDerSoc}
    \frac{w_i}{\sum_{j = 1}^{m} u_{ij} x_{ij}} u_{ij} - p_j - \sum_{t \in T_i} r_{it} A_{tj}^{(i)} \leq 0 
\end{align}
Next, the complimentary slackness condition for this problem can be derived by multiplying~\eqref{eq:ImpossLagDerSoc} by $x_{ij}$:

\begin{align}\label{eq:ImpossCompSlackSoc1}
    \frac{w_i}{\sum_{j = 1}^{m} u_{ij} x_{ij}} u_{ij} x_{ij} - p_j x_{ij} - \sum_{t \in T_i} r_{it} A_{tj}^{(i)} x_{ij} = 0 
\end{align}
Thus, the KKT conditions of the social optimization problem are given by equations~\eqref{eq:ImpossSocOpt1}-\eqref{eq:ImpossSocOpt3},~\eqref{eq:ImpossLagDerSoc},~\eqref{eq:ImpossCompSlackSoc1} and the sign constraints of the dual variables.

\subsubsection{Establishing Theorem~\ref{thm:imposs}} \label{AppendixImposs}

We now use the above derivations of the first order necessary and sufficient KKT conditions to complete the proof of Theorem~\ref{thm:imposs}.

\begin{proof}
We observe that for there to be an equilibrium price vector it must be that the budgets of all agents must be entirely used up. We now show that this will in general not be true when agents have additional linear constraints and prices are derived based on the dual variables of the capacity constraints of \acrshort{sop-l}. To see this, consider the KKT condition in equation~\eqref{eq:ImpossCompSlackSoc1} of the social optimization problem \acrshort{sop-l}. If we sum over all goods $j$, we observe that this equation can be expressed as:

\begin{align}\label{eq:ImpossCompSlackSoc2}
    w_i - \sum_{j = 1}^{m} p_j x_{ij} - \sum_{t \in T_i} r_{it} b_{it} = 0 
\end{align}
However, the above constraint implies that the only way $w_i = \sum_{j = 1}^{m} p_j x_{ij}$ is if $\sum_{t \in T_i} r_{it} b_{it} = 0$, which implies that $r_{it} = 0$ $\forall i, t$ such that $b_{it}>0$, as each $r_{it} \geq 0$. However, this in general cannot be expected, as at the market clearing outcome the linear constraint~\eqref{eq:ImpossSocOpt2} may be met with equality for some or all agents (as was observed in the examples in~\ref{non-existence-prop1} and~\ref{examples-iopopt}). In particular, $r_{it} = 0$ $\forall i, t$ when $b_{it}>0$ implies by sensitivity analysis that if we loosen any of the linear constraints then the objective function value will remain unchanged. However, if there is a difference between the utilities of agents across different goods then for the linear constraint~\eqref{eq:ImpossSocOpt2}, e.g., a knapsack linear constraint, that is loosened at least one person will be better off by purchasing more of the goods belonging to that linear constraint~\eqref{eq:ImpossSocOpt2}. This is because in the case of knapsack linear constraints agents may have higher utilities in purchasing goods in one knapsack (whose constraint is loosened) and and so the objective function will increase in this case. As a result, in general the scenario $r_{it} = 0$ $\forall i, t$ such that $b_{it}>0$ is not possible, particularly in cases when the linear Constraints~\eqref{eq:ImpossSocOpt2} are met with equality, which may be the case for many markets. Thus, we have that the market clearing KKT conditions of the social and individual optimization problems are not necessarily the same, establishing our claim.
\end{proof}

\subsection{Proof of Theorem~\ref{thm:thm2}} \label{proof-thm2}

We now restate and proceed to prove Theorem~\ref{thm:thm2}.

\thmm*

We now derive the KKT conditions for the budget perturbed social optimization problem \acrshort{bp-sop} and individual optimization problem \acrshort{iop} that are used in the derivation of Theorem~\ref{thm:thm2}.

\subsubsection{Derivation of KKT Conditions for Social Optimization Problem \acrshort{bp-sop}} \label{Appendix4}

To establish the equivalence between individual optimization Problem~\eqref{eq:eq1}-\eqref{eq:con3} and the perturbed social optimization problem \acrshort{bp-sop}, we start by deriving the first order necessary and sufficient KKT conditions for the Problem~\eqref{eq:SocOpt}-\eqref{eq:SocOpt3}. To do so, we introduce the dual variables $\mathbf{p} \in \mathbb{R}^{m}$ for the capacity constraint~\eqref{eq:SocOpt1}, $\mathbf{r}_{i} \in \mathbb{R}_{\geq 0}^{l_i}, \forall i$ for each of the linear constraints in~\eqref{eq:SocOpt2} and $s_{ij} \leq 0, \forall i, j$ for each of the non-negativity Constraints~\eqref{eq:SocOpt3}. Next, we derive the Lagrangian of this problem as:

\begin{align}
    \mathcal{L} = \sum_{i = 1}^{n} (w_i + \lambda_i) \log \left(\sum_{j = 1}^{m} u_{ij} x_{ij} \right) - \sum_{j = 1}^{m} p_j \left(\sum_i x_{ij} - \Bar{s}_j \right) - \sum_{i = 1}^{n} \sum_{t \in T_i} r_{it} \left(\sum_{j = 1}^{m} A_{tj}^{(i)} x_{ij} - b_{it} \right) - \sum_{i = 1}^{n} \sum_{j = 1}^{m} s_{ij} x_{ij} 
\end{align}
The first order derivative condition is found by taking the derivative of the Lagrangian with respect to $x_{ij}$:

\begin{align}\label{eq:LagDerSoc}
    \frac{w_i + \lambda_i}{\sum_{j = 1}^{m} u_{ij} x_{ij}} u_{ij} - p_j - \sum_{t \in T_i} r_{it} A_{tj}^{(i)} \leq 0 
\end{align}
Next, the complimentary slackness condition for this problem can be derived by multiplying~\eqref{eq:LagDerSoc} by $x_{ij}$:

\begin{align}\label{eq:CompSlackSoc1}
    \frac{w_i + \lambda_i}{\sum_{j = 1}^{m} u_{ij} x_{ij}} u_{ij} x_{ij} - p_j x_{ij} - \sum_{t \in T_i} r_{it} A_{tj}^{(i)} x_{ij} = 0 
\end{align}
Thus, the KKT conditions of the social optimization problem \acrshort{bp-sop} are given by equations~\eqref{eq:SocOpt1}-\eqref{eq:SocOpt3},~\eqref{eq:LagDerSoc},~\eqref{eq:CompSlackSoc1} and the sign constraints of the dual variables.

\subsubsection{KKT conditions of Individual Optimization Problem \acrshort{iop}} \label{Appendix4_2}

We now derive the KKT conditions of \acrshort{iop} by formulating a Lagrangian and introducing the dual variable $y_i \geq 0$ for~\eqref{eq:con1}, $\mathbf{\Tilde{r}}_{i} \in \mathbb{R}_{\geq 0}^{l_i}$ for the budget constraint in~\eqref{eq:con2} and $\Tilde{s}_{ij} \leq 0, \forall j$ for each of the non-negativity Constraints~\eqref{eq:con3}. Thus, our Lagrangian is:

\begin{align}
    \mathcal{L}(\mathbf{x}_i, y_i, \mathbf{r}_i, \mathbf{s}) = \sum_{j = 1}^{m} u_{ij} x_{ij} - y_i \left (\sum_{j = 1}^{m} p_j x_{ij} - w_i \right) - \sum_{t \in T_i} \Tilde{r}_{it} \left (\sum_{j = 1}^{m} A_{tj}^{(i)} x_{ij} - b_{it} \right) - \sum_{j = 1}^{m} \Tilde{s}_{ij} x_{ij} 
\end{align}
The first order constraint for this problem is found by taking the derivative of our Lagrangian with respect to $x_{ij}$ and noting that $\Tilde{s}_{ij} \leq 0$:

\begin{align}\label{eq:LagDer}
    u_{ij} - y_i p_j - \sum_{t \in T_i} \Tilde{r}_{it} A_{tj}^{(i)} \leq 0 
\end{align}
Next, we derive the complimentary slackness condition for this problem by multiplying~\eqref{eq:LagDer} by $x_{ij}$:

\begin{align}\label{eq:CompSlack1}
    u_{ij}x_{ij} - y_i p_j x_{ij} - \sum_{t \in T_i} \Tilde{r}_{it} A_{tj}^{(i)} x_{ij} = 0 
\end{align}
Thus, the final KKT conditions for our problem are given by equations~\eqref{eq:con1},~\eqref{eq:con2},~\eqref{eq:con3},~\eqref{eq:LagDer} and~\eqref{eq:CompSlack1}, and the sign constraints on the dual variables. Furthermore, at equilibrium conditions, it must hold that the goods must be sold to capacity, which is given by $\sum_{i = 1}^{n} x_{ij} = \Bar{s}_j, \forall j$.

We use the derived KKT conditions of the individual and social optimization problems to prove our theorem in the analysis that follows.

\begin{proof}
($\Rightarrow$) To prove the forward direction of our claim, we need to show that given a market equilibrium with price vector $\mathbf{p}$ and optimal allocation $\mathbf{x}_i$ of the \acrshort{iop} for each $i$, we can construct $\lambda_i$, such that $\lambda_i = \sum_{t \in T_i} r_{it} b_{it}$, for each agent $i$.

We proceed by letting $y_i>0$, where $y_i$ is the dual variable of the budget constraint in the \acrshort{iop}. Note that $y_i = 0$ is not possible since there is a good $j$ that agent $i$ can purchase any amount of where $u_{ij}>0$. To see this, if we perform a sensitivity analysis by relaxing the budget constraint then the agent will always have a strictly higher utility since the agent $i$ can consume more of good $j$, and so $y_i$ must be greater than 0.

Next, we note that at the equilibrium condition $\sum_{i = 1}^{n} x_{ij} = \Bar{s}_j, \forall j \in [m]$ that the constraints of the individual optimization problem \acrshort{iop} already implies the other constraints of \acrshort{bp-sop}, i.e., the Constraints~\eqref{eq:SocOpt2} and~\eqref{eq:SocOpt3}. Thus, all we need to do is check the Lagrangian derivative condition~\eqref{eq:LagDerSoc}, complementary slackness condition~\eqref{eq:CompSlackSoc1} and dual multiplier sign constraints.

When we sum over all goods $j$, we have from the complimentary slackness condition~\eqref{eq:CompSlack1} that:

\begin{align} \label{eq:comp_slack_iopsop}
    \sum_{j = 1}^{m} u_{ij}x_{ij} = y_i \sum_{j = 1}^{m} p_j x_{ij} + \sum_{j = 1}^{m} \sum_{t \in T_i} \Tilde{r}_{it} A_{tj}^{(i)} x_{ij} = y_i w_i + \sum_{t \in T_i} \Tilde{r}_{it} b_{it}
\end{align}
The second equality follows from the complimentary slackness condition for the budget and additional linear constraints.

Next, using the Lagrangian derivative condition~\eqref{eq:LagDer}, and taking $r_{it} = \frac{\Tilde{r}_{it}}{y_i}$ we have that:

\begin{align} \label{eq:new-ineq-iopsop}
    \frac{1}{y_i} \leq \frac{p_j + \sum_{t \in T_i} r_{it} A_{tj}^{(i)}} {u_{ij}}, \quad \forall j
\end{align}

Now setting $\lambda_i = \sum_{t \in T_i} r_{it} b_{it} = \frac{\sum_{t \in T_i} \Tilde{r}_{it} b_{it}}{y_i}$, we obtain that:

\begin{align}
    \frac{w_i + \lambda_i}{\sum_{j = 1}^{m} u_{ij}x_{ij}} = \frac{1}{y_i} \cdot \frac{ \left(y_i w_i + \sum_{t \in T_i} \Tilde{r}_{it} b_{it} \right)}{\sum_{j = 1}^{m} u_{ij}x_{ij}} 
\end{align}
Now observing Equation~\eqref{eq:comp_slack_iopsop} and using the inequality~\eqref{eq:new-ineq-iopsop}, we can rewrite the above expression as:

\begin{align} \label{eq:reprove-comp-slack}
    \frac{w_i + \lambda_i}{\sum_{j = 1}^{m} u_{ij}x_{ij}} = \frac{1}{y_i} \leq \frac{p_j + \sum_{t \in T_i} r_{it} A_{tj}^{(i)}} {u_{ij}}, \quad \forall j
\end{align}
This is exactly the first order derivative condition of the \acrshort{bp-sop} as in Equation~\eqref{eq:LagDerSoc}. Multiplying Equation~\eqref{eq:reprove-comp-slack} by $x_{ij}$, we obtain by complimentary slackness the condition in equation~\eqref{eq:CompSlackSoc1}.

Finally, since we have that all the dual variables in \acrshort{iop} are scaled by a positive constant, as $y_i>0$ the corresponding signs of the dual variables in \acrshort{bp-sop} remain intact. Thus, we have shown that when $y_i>0$ and when we set $\lambda_i = \sum_{t \in T_i} r_{it} b_{it}$, $\forall i$ then the market equilibrium KKT conditions of the \acrshort{iop} are the same as that of \acrshort{bp-sop}.

($\Leftarrow$) To establish the converse of the theorem, we first note that the assumption that a market equilibrium exists implies that the capacity constraint of \acrshort{bp-sop} is met with equality, i.e., $\sum_{i} x_{ij} = \Bar{s}_j$ for all goods $j$.

The constraints of the social optimization problem also imply the \acrshort{iop} Constraints~\eqref{eq:con2} and~\eqref{eq:con3}. Next, we use the notation and equations in~\ref{Appendix4} and the corresponding KKT conditions of \acrshort{bp-sop} and \acrshort{iop}.
\newline
If we divide~\eqref{eq:LagDerSoc} and~\eqref{eq:CompSlackSoc1} by $\frac{w_i + \lambda_i}{\sum_{j = 1}^{m} u_{ij} x_{ij}}$, then these equations map respectively to the Lagrangian derivative equation~\eqref{eq:LagDer} and complimentary slackness equation~\eqref{eq:CompSlack1} of the individual optimization problem. We show this by denoting $\mathbf{p} \in \mathbb{R}^{m}$ as the dual variable for the capacity constraint, i.e., $p_j$ is the dual variable corresponding to the capacity constraint for good $j$, and then we have that~\eqref{eq:LagDerSoc} becomes:

\begin{align}\label{eq:LagDerSoc2}
    u_{ij} - \frac{\sum_{j = 1}^{m} u_{ij} x_{ij}}{w_i + \lambda_i} p_j - \sum_{t \in T_i} \frac{\sum_{j = 1}^{m} u_{ij} x_{ij}}{w_i + \lambda_i}r_{it} A_{tj}^{(i)} \leq 0 
\end{align}
The above equation is equivalent to~\eqref{eq:LagDer}, where $y_i = \frac{\sum_{j = 1}^{m} u_{ij} x_{ij}}{w_i + \lambda_i} \geq 0$, $\Tilde{r}_{it} = \frac{\sum_{j = 1}^{m} u_{ij} x_{ij}}{w_i + \lambda_i}r_{it} \geq 0$. The same analysis holds for the complementary slackness condition equivalence of \acrshort{bp-sop} and \acrshort{iop}. 

Now, all it remains for us to satisfy is the \acrshort{iop} budget constraint~\eqref{eq:con1}. To do this, we use the complementary slackness condition in equation~\eqref{eq:CompSlackSoc1} and sum over $j$. Then realizing that $r_{it} \sum_{j = 1}^{m} A_{tj}^{(i)} x_{ij} = r_{it} b_{it}$ by complimentary slackness we get:

\begin{align}\label{eq:CompSlackSoc2}
    w_i + \lambda_i - \sum_{j = 1}^{m} p_j x_{ij} - \sum_{t \in T_i} r_{it} b_{it} = 0 
\end{align}
Thus, if we set $\lambda_i = \sum_{t \in T_i} r_{it} b_{it}$ as in the statement of the theorem, we obtain that the budget constraint condition~\eqref{eq:con1} is satisfied with equality. This implies that we have a market clearing outcome, as the budgets are completely used and the goods are sold to capacity, while the KKT conditions of the two problems are equivalent at market clearing conditions. This establishes our claim.
\end{proof}

\end{document}